\title{Sharp Analytical Capacity Upper Bounds\\ for Sticky and Related Channels
}
\author{Mahdi Cheraghchi \and Jo\~ao Ribeiro\thanks{
Department of Computing, Imperial College London, UK. Emails: \{m.cheraghchi, j.lourenco-ribeiro17\}@imperial.ac.uk.}
}
\date{}
\newcommand{\eps}{\epsilon}
\newtheorem{thm}{Theorem}
\newtheorem{lem}[thm]{Lemma}
\newtheorem{coro}[thm]{Corollary}
\newtheorem{remark}[thm]{Remark}
\newcommand{\Ch}{\mathsf{Ch}}
\newcommand{\Ca}{\mathsf{Cap}}
\newcommand{\NB}{\mathsf{NB}}
\newcommand{\Yqd}{Y^{(q)}_\delta}
\newcommand{\Yq}{Y^{(q)}}
\newcommand{\YYq}{\overline{Y}^{(q)}}
\newcommand{\KL}{D_\mathsf{KL}}
\let\originalleft\left
\let\originalright\right
\renewcommand{\left}{\mathopen{}\mathclose\bgroup\originalleft}
\renewcommand{\right}{\aftergroup\egroup\originalright}
\begin{document}

\maketitle

\begin{abstract}
We study natural examples of binary channels with synchronization errors.
These include the duplication channel, which independently outputs
a given bit once or twice, and geometric channels that repeat a given
bit according to a geometric rule, with or without the possibility of
bit deletion. We apply the general framework of Cheraghchi (STOC 2018) 
to obtain sharp analytical upper bounds on the capacity of these channels.
Previously, upper bounds were known via numerical computations involving the
computation of finite approximations of the channels by a computer and
then using the obtained numerical results to upper bound the actual capacity.
While leading to sharp numerical results, further progress on the 
full understanding of the channel capacity inherently remains elusive
using such methods. Our results can be regarded as a major step towards a
complete understanding of the capacity curves. 
Quantitatively, our upper bounds sharply approach, and in some
cases surpass, the bounds that were previously only known by purely
numerical methods. Among our results, we notably give a completely
analytical proof that, when the number of repetitions per bit is geometric (supported on $\{0,1,2,\dots\}$)
with mean growing to infinity, the channel capacity remains substantially bounded
away from $1$.
\end{abstract}


\section{Introduction}\label{sec:intro}

Channels with synchronization errors, such as deletions, replications, and insertions of random bits, have enjoyed significant attention in the past few decades, and more so in recent years. This is due to two reasons: First, our techniques for tackling synchronization errors are still limited relative to memoryless channels. Second, the study of synchronization erros, in addition to being natural, are motivated by practical situations. For example, such channels naturally arise when dealing with DNA-based data storage methods (cf.~\cite{YKGMM15}). 

A well-known example of a channel with synchronization errors is the \emph{binary deletion channel}, which independently removes each input bit from a given bit stream with a certain deletion probability. Determining the exact capacity of the binary deletion channel remains a major challenge in information theory. However, various upper and lower bounds on the channel capacity are known; e.g., \cite{MD06,Dal10,FD10,RD15,Che17}. The behavior of the exact deletion capacity curve is satisfactorily known only for small deletion probabilities \cite{KMS10,KM13}.

Other types of synchronization errors have been considered as well, in particular bit replications caused by timing errors. In this case, each input bit is independently replicated a certain number of times in the output according to a fixed replication probability distribution over the non-negative integers (the distribution defining the repetition rule may have support on the outcome $0$, in which case the given bit is simply deleted).

The difficulty in fully understanding of the capacity of the deletion channel motivates the study of simpler, but still practically relevant, channels where bits can be replicated but never deleted, known as \emph{sticky channels} (for an application-oriented work on sticky channels, see \cite{FMS09}). Although the process defining sticky channels over bits may still not be memoryless, any sticky channel is equivalent to a memoryless channel, and that in principle makes them potentially simpler objects to study than deletion-type channels. This is simply because a sticky channel acts independently on runs of bits, mapping a run of consecutive zeros or ones into one of equal length or longer. As a result, seen as a channel over integer sequences (modeling the run-length encoding of the input bit stream), a sticky channel is equivalent to a memoryless channel over the integers. While the underlying channel over the integers exactly characterizes the original channel, it may also shed light into the understanding of channels that allow deletions, in particular the binary deletion channel. This is because structurally similar channels over the integers (e.g., the binomial channel) arise as natural key intermediate objects in the study of the deletion channel \cite{DMP07,Che17}.

To the best of our knowledge, there is no nontrivial repetition rule for which the capacity of the resulting channel is exactly known (the binary deletion channel being a notorious example). This includes both sticky channels and channels allowing deletions. We consider two natural examples of sticky channels that have been substantially studied in the literature; namely, the \emph{elementary duplication channel} and the \emph{geometric sticky channel}. In the former, a given bit is possibly duplicated with a given probability, and in the latter the number of times each bit is replicated follows a geometric distribution supported on $\{1,2,\dots\}$. Even though deriving an explicit expression for the capacity of these channels is still an outstanding open problem, there are tight numerical lower and upper bounds \cite{DM07,DK07,Mit08,MTL12} and analytical lower bounds \cite{ISW16} on their capacity. Despite the fact that these bounds give us a good idea of the shape of the capacity curve, they do not yield a better conceptual understanding of the capacity, nor do they help us get closer to determining an exact, explicit expression for the channel capacities. 


In this work, we make significant progress towards an analytical characterization of the capacity curve for the elementary duplication and geometric sticky channels. This is achieved by instantiating a general framework developed by one of the authors \cite{Che17} for studying the capacity of channels with synchronization errors by convex programming. 

Roughly speaking, the framework of \cite{Che17} allows one to obtain explicit capacity upper bounds, or even an exact expression for the capacity, by carefully designing explicit distributions that satisfy certain constraints. The quality of the resulting capacity upper bounds generally depend on how tightly the underlying constraints are satisfied by the designed distribution. 
Candidate distributions were derived in \cite{Che17} for the deletion and Poisson-repeat channels.
We derive explicit expressions for candidate distributions corresponding to sticky channels (that actually tightly satisfy all underlying constraints), and subsequently sharp capacity upper bounds via the above-mentioned framework.
We also consider geometric repetition rules with support on zero; i.e., with possibility of bit-deletion,
and capacity upper bounds for such channels. 

Tight numerical upper bounds for the elementary duplication channel were already derived in \cite{Mit08}. Furthermore, there is a line of work studying elementary duplications combined with deletions, insertions, and bit flips \cite{FDE11,RD13,VTR13,RD14}. For such channels, the behavior of the capacity for small duplication probability is well-understood \cite{RA13}. 
The first capacity upper bounds for the geometric sticky channel and channels combining geometric replications and deletions were obtained in \cite{MTL12}.


\subsection{Previous work}

Sticky channels were first introduced and studied by Drinea, Kirsch, and Mitzenmacher \cite{DK07,DM07,Mit08}. Of particular note, Mitzenmacher \cite{Mit08} gave numerical capacity lower bounds for the elementary and geometric sticky channels, along with a tight numerical capacity upper bound for the elementary duplication channel. 

Mercier, Tarokh, and Labeau \cite{MTL12} derive tight numerical capacity upper bounds for the geometric sticky channel. Furthermore, they introduce and study a more general model which combines deletions with geometric replications (and also possibly with insertions and substitutions). More precisely, they consider a more general setting where the channel operates on the input bits in \emph{rounds}. Suppose that the channel is processing bit $x_i$ in round $j$. Then, it either deletes $x_i$ with probability $p_d$ and moves to $x_{i+1}$ in round $j+1$; adds a copy of $x_i$ to the output with probability $p_t$ and stays in $x_i$ in round $j+1$; or adds $x_i$ to the output with probability $1-p_d-p_t$ and moves to $x_{i+1}$ in round $j+1$. They particularly focus on the geometric sticky channel (when $p_d=0$) and on the special case where $p_d=p_t$.

Iyengar, Siegel, and Wolf \cite{ISW16} also study a model similar to that of \cite{MTL12}. They derive analytical expressions for the rates achieved by codebooks generated by Markov chains of several orders in the geometric sticky channel. Moreover, they numerically estimate the rate achieved by codebooks generated by Markov chains of several orders in channels combining geometric replications and deletions (in particular, they focus on the above-mentioned special case $p_d=p_t$) by approximating them by finite-state channels.

It is instructive to consider how the upper bounds in \cite{Mit08,MTL12} are derived. The same technique was also used in \cite{DMP07} to derive capacity upper bounds for the deletion channel. At a high level, given a channel $\Ch$ with synchronization errors (say, an elementary duplication or a geometric sticky channel), one first reduces upper bounding the capacity of $\Ch$ to upper bounding the capacity \emph{per unit cost} of a memoryless channel $\Ch'$. The capacity per unit cost $c$ of a memoryless channel $\Ch'$ is defined as
\[
\sup_X \frac{I(X;Y)}{\mathds{E}[c(X)]},
\]
where $Y$ denotes the channel output distribution induced by input distribution $X$.

For channels with replications only (that do not delete bits), this reduction is straightforward and lossless. This is because one can just consider the operation of the channel on each run of consecutive bits in the input. Upper bounding the capacity per unit cost of memoryless channels with finite input and output alphabets is, on the other hand, possible via the following result of Abdel-Ghaffar's \cite{AG93}:
\begin{thm}[\protect{\cite{AG93}}]\label{thm:ag}
	Consider a discrete memoryless channel $\Ch$ with input alphabet $\mathcal{X}$, output alphabet $\mathcal{Y}$, and output distribution $Y_x$ given input $x$. Let $Y$ be any distribution. Then, the capacity per unit cost $c(x)$ of $\Ch$ is at most
	\[
	\sup_{x\in\mathcal{X}}\frac{\KL(Y_x||Y)}{c(x)},
	\]
	where $\KL(Y_x||Y)$ denotes the Kullback-Leibler (KL) divergence between $Y_x$ and $Y$. Moreover, if $Y$ is a channel output distribution induced by an input distribution $X$ with support $\mathcal{X'}$ and $\KL(Y_x||Y)/c(x)=\lambda$ for all $x\in\mathcal{X'}$, then the capacity of $\Ch$ is exactly $\lambda$ and $X$ is capacity-achieving.
\end{thm}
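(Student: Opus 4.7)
The plan is to reduce everything to the classical ``golden formula''
\[
\sum_{x} P_X(x)\,\KL(Y_x \| Q) \;=\; I(X;Y_X) + \KL(Y_X \| Q),
\]
valid for any input distribution $X$ on $\cX$ inducing output distribution $Y_X$ and any auxiliary output distribution $Q$ on $\cY$. Since $\KL(Y_X\|Q)\ge 0$, this yields the one-sided inequality $I(X;Y_X)\le \sum_x P_X(x)\,\KL(Y_x\|Q)$, with equality if and only if $Q=Y_X$. Specializing $Q$ to the given auxiliary distribution $Y$ will drive both parts of the theorem.

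With this identity in hand, for any candidate input distribution $X$ of finite expected cost I would write
\[
I(X;Y_X) \;\le\; \sum_x P_X(x)\,\KL(Y_x\|Y) \;=\; \sum_x P_X(x)\,c(x)\cdot\frac{\KL(Y_x\|Y)}{c(x)} \;\le\; \left(\sup_{x\in\cX}\frac{\KL(Y_x\|Y)}{c(x)}\right)\E[c(X)].
\]
Dividing by $\E[c(X)]$ and taking the supremum over $X$ then yields the first claim, because by definition the capacity per unit cost equals $\sup_X I(X;Y_X)/\E[c(X)]$.

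For the moreover part, I would revisit both inequalities with the specific $X$ whose induced output coincides with the auxiliary $Y$ itself and for which $\KL(Y_x\|Y)/c(x)=\lambda$ throughout its support $\cX'$. The first inequality is tight because $Q=Y=Y_X$; the second is tight because on $\cX'$ we have $\KL(Y_x\|Y)=\lambda\cdot c(x)$ pointwise, so the weighted sum collapses to $\lambda\,\E[c(X)]$. Consequently $I(X;Y_X)/\E[c(X)]=\lambda$, which lower-bounds the capacity per unit cost by $\lambda$ and identifies $X$ as capacity-achieving. The only subtlety I would flag, and essentially the only obstacle, is that the upper bound from the first part involves a supremum over the \emph{entire} alphabet $\cX$; to pin the capacity at exactly $\lambda$ one has to read the hypothesis as implicitly carrying the KKT-type complementary-slackness condition $\KL(Y_x\|Y)/c(x)\le\lambda$ for $x\notin\cX'$ as well (a condition one always checks in practice when proposing such an $X$), or verify this separately. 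Beyond this bookkeeping, the entire argument is powered by one KL identity and the nonnegativity of divergence, with no substantive analytic work.
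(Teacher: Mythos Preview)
The paper does not prove this theorem at all; it is quoted from \cite{AG93} and used as a black box, so there is no in-paper proof to compare against. Your argument is the standard duality proof and is correct for the upper bound: the identity $\sum_x P_X(x)\KL(Y_x\|Q)=I(X;Y_X)+\KL(Y_X\|Q)$ together with $\KL\ge 0$ gives $I(X;Y_X)\le \E[c(X)]\cdot\sup_{x\in\cX}\KL(Y_x\|Y)/c(x)$ exactly as you wrote.

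For the ``moreover'' clause you have also correctly identified the one genuine issue. With only the hypothesis $\KL(Y_x\|Y)/c(x)=\lambda$ on the support $\cX'$, your two equalities show $I(X;Y_X)/\E[c(X)]=\lambda$, hence the capacity per unit cost is at least $\lambda$; but the matching upper bound from the first part is $\sup_{x\in\cX}\KL(Y_x\|Y)/c(x)$, which need not equal $\lambda$ unless one also has $\KL(Y_x\|Y)/c(x)\le\lambda$ for $x\notin\cX'$. This is not mere bookkeeping: as literally stated in the paper the exactness claim is incomplete, and in Abdel-Ghaffar's original formulation the off-support slackness condition is part of the hypothesis. Your flagging of this is appropriate; just be clear that it is an omission in the theorem statement rather than something your proof can repair on its own.
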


Analytically designing candidate distributions $Y$ to be used in Theorem~\ref{thm:ag} turns out to be complex even for simple cost functions like $c(x)=x$. Instead, previous works numerically design such distributions by first approximating the capacity and the optimal input distribution for a finite variation of the channel under consideration (e.g., via a variant of the Blahut-Arimoto algorithm). Then, the resulting information is used to design a good candidate distribution by extending either the numerically obtained (finite) input or output distribution with an appropriate tail. Despite this, one may still need to consider genie-aided encoding and decoding to simplify the analysis, inherently leading to sub-optimal results. For example, Mercier, Tarokh, and Labeau \cite{MTL12} consider a modified channel which is designed to be noiseless if its input or output values are large enough (such a modification can only increase the capacity).

Although the above-mentioned approach leads to tight capacity bounds, there are two main drawbacks. First, it does not lead to a better conceptual understanding of the channel. Second, it automatically precludes an exact characterization of the capacity potentially obtainable via Theorem~\ref{thm:ag}.


Recently, one of the authors \cite{Che17} proved a fixed-mean variation of Theorem~\ref{thm:ag}. In this case, if the mean restriction is $\mu\in\mathbb{R}$, one obtains upper bounds for the capacity of channels where the only input distributions allowed are those that induce output distributions with mean $\mu$. While such a statement is technically equivalent to Theorem~\ref{thm:ag} (since both potentially characterize the exact capacity), this subtle change of perspective allows us to actually design good distributions $Y$ purely analytically. Subsequently, this leads to sharp analytical capacity upper bounds which are discussed in more detail in Section \ref{sec:red}.


\subsection{Contributions}

In this work, we study the capacity of three channels: The elementary duplication channel, the geometric sticky channel, and a channel combining geometric replications and deletions, which we call the \emph{geometric deletion channel}.

The elementary duplication channel with duplication probability $p$ receives a string $x_1x_2\dots x_n$ as input and replaces each bit $x_i$ by either one copy of $x_i$ with probability $1-p$, or two copies of $x_i$ with probability $p$.

The geometric sticky channel with replication parameter $p$ receives a string $x_1x_2\dots x_n$ as input and replaces each bit $x_i$ by $D_{i}$ copies of $x_i$, where the $D_{i}$ are i.i.d. and follows a geometric distribution supported on $\{1,2,\dots\}$; i.e.,
\[
\Pr[D_{i}=k]=(1-p)p^{k-1},\quad k=1,2,\dots.
\]

The geometric deletion channel with replication parameter $p$ is similar to the geometric sticky channel, except that the number of times each bit is replicated is distributed according to a geometric distribution supported on $\{0,1,2,\dots\}$. That is, in this case we have 
\[
\Pr[D_{i}=k]=(1-p)p^k,\quad k=0,1,2,\dots.
\]
In the more general model for geometric replications and deletions introduced in \cite{MTL12}, the $0$-geometric channel corresponds to the case where the deletion probability $p_d$ and the duplication probability $p_t$ satisfy $p_d=1-p_t$.

Our contributions are threefold, summarized below.

\paragraph{Sticky channels}
We derive analytical capacity upper bounds for the elementary duplication and geometric sticky channels which are tight over a large range of parameters. Furthermore, the bounds are supremums of analytic, uni-variate, concave functions over $(0,1)$, and so can be easily computed. Our results can be interpreted as the first evidence that determining the exact capacity of some sticky channels may be within reach. In fact, our upper bounds are obtained by first designing distributions which satisfy the constraint in the fixed-mean analogue of Abdel-Ghaffar's result in \cite{Che17} \emph{with equality}. If these distributions are also shown to be valid channel output distributions, then this implies that we have obtained an exact expression for the capacity of the underlying channel. While this turns out to not be the case, it may be possible to adapt our techniques to achieve this. 

The bounds we obtain are very sharp when the duplication probability is not too large. For example, the analytical capacity upper bound for the geometric sticky channel is within $10^{-5}$ of the numerical upper bound given in \cite{MTL12} for $p\leq 0.5$. Moreover, we improve upon the known numerical upper bounds for both the geometric sticky and elementary duplication channels for some values of the duplication probability.

\paragraph{The geometric deletion channel}
We design distributions and derive improved capacity upper bounds for the geometric deletion channel. These improvements are obtained by combining the distribution design techniques from \cite{Che17} with a simple refinement.

\paragraph{The large replication regime}
Finally, we give a fully analytical proof that, rather counter-intuitively, the capacity of the geometric deletion channel is at most $0.73$ bits/channel use (thus significantly bounded away from $1$) when the replication parameter $p$ approaches 1 (i.e., the expected number of replications grows to infinity, or, equivalently, the deletion probability approaches $0$). This stands in stark contrast to the deletion and Poisson-repeat channels, whose capacities converge to $1$ when the deletion probability approaches $0$ (see Appendix~\ref{app:poisson} for a proof of this fact for the Poisson-repeat channel). Note that the Poisson-repeat channel case shows that there are channels defined by repetition rules with full support over the non-negative integers whose capacity approaches $1$ when the expected number of replications grows to infinity. As a result, it is not clear at first sight whether the capacity of the geometric deletion channel approaches $1$ or not in this setting.

\subsection{Notation}

We denote the natural logarithm by $\log$. We will be dealing solely with discrete random variables, which are denoted by uppercase letters such as $X$, $Y$, and $Z$. The expected value of $X$ is denoted by $\mathds{E}[X]$, and in general we write $X(x)$ for the probability that $X$ takes on value $x$. We denote the Kullback-Leibler divergence between $X$ and $Y$ by $\KL(X||Y)$. We denote the Shannon entropy of $X$ by $H(X)$ and the binary entropy function by $h$.

\subsection{Organization}

The rest of the article is organized as follows. In Section~\ref{sec:red}, we describe and discuss the general framework developed in \cite{Che17} for studying the capacity of channels with synchronization. In Section~\ref{sec:geomsticky}, we study the geometric sticky. We derive analytical capacity upper bounds for the geometric sticky channel in Section~\ref{sec:zerogapsticky}, and compare them to the known numerical bounds in Section~\ref{sec:boundsticky}. In Section~\ref{sec:dupl}, we study the elementary duplication channel.  We derive analytical capacity upper bounds for the elementary duplication in Section~\ref{sec:zerogapdupl}, and compare them to the known numerical bounds in Section~\ref{sec:bounddupl}. We study the geometric deletion channel in Section~\ref{sec:geomdel}. General analytical bounds are derived in Sections~\ref{sec:firstbound}~and~\ref{sec:trunc}. Improved bounds are obtained by considering the refinement described in Section~\ref{sec:fixmass0}. These bounds are compared to the known ones in Section~\ref{sec:compbounds}. The fully analytical capacity upper bound for large replication parameter is derived in Section~\ref{sec:boundsmalld}.

\section{Reduction to a memoryless channel and the convex duality framework}\label{sec:red}

In this section, we introduce the general reduction of repeat channels to memoryless channels and the convex duality framework developed in \cite{Che17}.

Consider a random variable $D$ supported on the non-negative integers. We denote by $\Ch(D)$ the \emph{repeat channel with replication distribution $D$}, which works as follows: For each input bit $x_i\in\{0,1\}$, the channel replaces $x_i$ with $D_i$ copies of $x_i$, where the $D_i$ are i.i.d.\ and distributed according to $D$. 

It will be useful to define the concept of a \emph{mean-limited channel}. Given a channel $\Ch$ with input and output alphabets contained in $\mathbb{R}$, we denote by $\Ch_\mu$ the channel with the same channel law as $\Ch$, but where the input distributions are restricted to only those that induce output distributions $Y$ satisfying $\mathds{E}[Y]=\mu$. Then, $\Ch_\mu$ is the mean-limited version of $\Ch$.

The following theorem relates the capacity of $\Ch(D)$, which we denote by $\Ca(D)$, with the capacity of an associated mean-limited memoryless channel.
\begin{thm}[\cite{Che17}]\label{thm:red}
	Fix a distribution $D$ over the non-negative integers, and let $\overline{D}$ denote $D$ conditioned on the event $D\neq 0$. Let $\Ch'(D)$ denote the channel which on input $1+x$ for $x\in\{0,1,\dots\}$ outputs $\overline{D}+\sum_{i=1}^x D_i$, where $\overline{D}$ and the $D_i$ are independent, and furthermore the $D_i$ are all distributed according to $D$. Let $\Ca'_\mu(D)$ denote the capacity of $\Ch'_\mu(D)$. Then,
	\begin{equation}
	\Ca(D)\leq \sup_{\mu\geq \overline{\lambda}}\frac{\Ca'_\mu(D)}{1/p + (\mu-\overline{\lambda})/\lambda},
	\end{equation}
	where $\lambda=\mathds{E}[D]$, $\overline{\lambda}=\mathds{E}[\overline{D}]$, and $p=1-D(0)$.
\end{thm}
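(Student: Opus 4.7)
The plan is to reduce the bit-level repeat channel $\Ch(D)$ to a capacity-per-unit-cost problem for the memoryless integer-alphabet channel $\Ch'(D)$, and then invoke the convex-duality framework of \cite{Che17}. First I would pass to the run-length encoding: an input $X^n$ is determined (up to its initial bit) by its run-length sequence $(R_1,\ldots,R_K)$ with $\sum_j R_j = n$, and similarly $Y$ is determined by its run-length sequence $(S_1,\ldots,S_{K'})$. Bounding $\Ca(D)$ in bits per input bit is therefore equivalent, up to a vanishing correction, to bounding the mutual information of the induced channel on run-length sequences, normalized by $n$.

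The central complication is that whenever an input run has every $D_i=0$, it is entirely deleted and the two neighboring output runs (which share the same value) merge into one. To isolate the per-super-run behavior I would give the decoder, as side information from a genie, the positions of all deleted input bits; this can only enlarge the mutual information. Under the genie, each output run corresponds to a well-defined block of input bits (a super-run), and the induced channel acts independently across super-runs. The key technical computation is to show that, after an appropriate re-parameterization of each super-run by an integer $r$ (roughly, the count of same-value bits starting from the first non-deleted one), the conditional distribution of the output-run length is exactly $\overline{D}+\sum_{i=1}^{r-1}D_i$: the leading bit of the super-run contributes a sample of $\overline{D}$ (it is conditioned to be non-deleted) and the remaining $r-1$ bits of the same value contribute independent samples of $D$. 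This is precisely the law of $\Ch'(D)$ on input $r$.

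To translate rates, I would count the expected number of input bits of $\Ch(D)$ consumed per use of $\Ch'(D)$ when the output has mean $\mu$. Since the expected total output length is $n\lambda$ and there are $K$ output runs of mean length $\mu$, one has $K\approx n\lambda/\mu$, so the cost per use is $n/K=\mu/\lambda$. A short algebraic manipulation using $\lambda=p\overline{\lambda}$ rewrites this as $1/p+(\mu-\overline{\lambda})/\lambda$, with the two terms admitting a pleasant interpretation: $1/p$ bits are expected before (and at) the first non-deleted bit of a super-run, and the remaining $(\mu-\overline{\lambda})/\lambda$ bits contribute beyond that. Since every super-run contains at least one non-deleted bit, the induced output mean satisfies $\mu\geq\overline{\lambda}$. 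Dividing the rate of the resulting $\Ch'_\mu$-code by this cost and optimizing over $\mu\geq\overline{\lambda}$ yields the claimed bound.

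The main obstacle I anticipate is the genie step together with the precise identification of the per-super-run channel with $\Ch'(D)$. One must pick the parameterization $r$ of each super-run so that, after conditioning on the genie's side information, the remaining randomness is exactly an independent $\overline{D}$ plus $r-1$ iid copies of $D$---as opposed to the superficially similar but distinct law $\sum_{i=1}^r D_i$ conditioned on being positive. The cost accounting must then line up to yield precisely $1/p+(\mu-\overline{\lambda})/\lambda$, and boundary effects at the ends of $X^n$ must be absorbed into a vanishing correction as $n\to\infty$; finally, one has to verify that the induced distribution of super-run parameters $r$ is a valid test distribution for $\Ch'_\mu$ with the claimed mean.
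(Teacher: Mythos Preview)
The paper does not prove Theorem~\ref{thm:red}; it is quoted from \cite{Che17} and used as a black box, so there is no in-paper argument to compare your sketch against.

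Your overall architecture---pass to run lengths, introduce a genie so that the induced channel on output runs is memoryless, then divide by the expected number of input bits consumed per output run---is the right one, and your cost identity $\mu/\lambda = 1/p+(\mu-\overline\lambda)/\lambda$ (via $\lambda=p\overline\lambda$) is correct. The gap is in the specific genie you choose. If the decoder is told the positions of \emph{all} deleted input bits, then conditioned on that side information every same-value bit in a super-run that the genie marks as non-deleted now has its replication count conditioned on being positive, so it contributes a copy of $\overline D$, not of $D$; and every same-value bit marked deleted contributes exactly $0$. Thus the per-super-run output is a sum of i.i.d.\ $\overline D$'s whose \emph{number is fixed by the genie}, which is not the law $\overline D+\sum_{i=1}^{r-1}D_i$. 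Concretely, with three same-value bits and the genie declaring the middle one deleted, the output-run length is $\overline D_1+\overline D_3$, not $\overline D+D_2+D_3$.

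The repair is to use a weaker genie that reveals only the super-run \emph{boundaries} (equivalently, for each output run, the input position at which it begins). This conditions the leading bit on being non-deleted (the $\overline D$ term) and forces all opposite-value bits inside the super-run to be deleted, but crucially leaves the remaining $r-1$ same-value bits unconditioned, so each still contributes an independent sample of $D$. That is exactly $\Ch'(D)$ on input $r$. The deleted opposite-value bits absorbed into each super-run are then what produces the extra $1/p-1$ in the cost, matching your accounting. You will still need to handle that the boundaries depend jointly on $X$ and the channel randomness, and to turn the heuristic $K\approx n\lambda/\mu$ into a rigorous per-letter bound; both are doable but are where the real work lies.
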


In the case of sticky channels, where $D(0)=0$ and hence $\overline{D}=D$, the reduction in Theorem~\ref{thm:red} does not incur any loss. As a result, we obtain an exact characterization of $\Ca(D)$ in terms of the capacity of a memoryless channel. Using that for such a channel we have $p=1-D(0)=1$ and $\lambda=\overline{\lambda}$ leads to
\begin{equation}\label{eq:stickychar}
\Ca(D)=\lambda\sup_{\mu\geq \lambda}\frac{\Ca'_\mu(D)}{\mu}.
\end{equation}


It remains now to upper bound $\Ca'_{\mu}(D)$ for general $\mu$. This can be achieved via the following theorem, which can be interpreted as a mean-limited version of Abdel-Ghaffar's duality-based characterization of the capacity per unit cost \cite{AG93}.
\begin{thm}[\cite{Che17}]\label{thm:duality}
	Fix a channel $\Ch$ with input and output alphabets $\mathcal{X},\mathcal{Y}\subseteq\mathbb{R}$, respectively, and let $Y_x$ denote the output distribution of $\Ch$ when $x$ is input into the channel. If there exist parameters $a,b\in\mathbb{R}$ and a distribution $Y$ such that
	\begin{equation}
	D_{\mathsf{KL}}(Y_x||Y)\leq a \mathds{E}[Y_x]+b
	\end{equation}
	for all $x\in\mathcal{X}$, then the capacity of the mean-limited channel $\Ch_\mu$ is at most
	\begin{equation*}
	a\mu+b.
	\end{equation*}
	Moreover, if there is an input distribution $X$ with support $\mathcal{X'}$ that induces $Y$ as the channel output distribution, $\mathds{E}[Y]=\mu$, and
	\begin{equation}
	D_{\mathsf{KL}}(Y_x||Y)= a \mathds{E}[Y_x]+b
	\end{equation}
	for all $x\in\mathcal{X'}$, then the capacity of $\Ch_\mu$ is exactly $a\mu+b$ and $X$ is a capacity-achieving distribution.
\end{thm}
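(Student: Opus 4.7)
The plan is to prove the upper bound by the standard divergence-swap identity applied to an arbitrary input distribution allowed by the mean constraint, and then derive the equality statement by checking that both slack terms in the bound vanish for the proposed $X$.

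First I would fix an arbitrary input distribution $X$ for $\Ch_\mu$, let $\tilde Y$ denote the induced output distribution (so $\mathds{E}[\tilde Y]=\mu$ by the definition of $\Ch_\mu$), and recall that the mutual information can be written as $I(X;Y_X)=\mathds{E}_X[\KL(Y_X\|\tilde Y)]$. The key algebraic step is the identity
\begin{equation*}
\mathds{E}_X[\KL(Y_X\|Y)] \;=\; \mathds{E}_X[\KL(Y_X\|\tilde Y)] + \KL(\tilde Y\|Y),
\end{equation*}
which follows by writing out the logs and using $\mathds{E}_X[Y_X(y)]=\tilde Y(y)$. Rearranging gives $I(X;Y_X)=\mathds{E}_X[\KL(Y_X\|Y)]-\KL(\tilde Y\|Y)\leq \mathds{E}_X[\KL(Y_X\|Y)]$, where nonnegativity of KL divergence is used to drop the second term.

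Next I would invoke the hypothesis $\KL(Y_x\|Y)\leq a\mathds{E}[Y_x]+b$ pointwise in $x$, take the expectation over $X$, and use linearity together with $\mathds{E}_X\mathds{E}[Y_X]=\mathds{E}[\tilde Y]=\mu$ to conclude
\begin{equation*}
I(X;Y_X) \;\leq\; a\mu+b.
\end{equation*}
Since $X$ was an arbitrary input for $\Ch_\mu$, taking the supremum yields the claimed capacity upper bound.

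For the equality statement, I would observe that when $Y$ is itself induced by the candidate $X$ (with $\mathds{E}[Y]=\mu$), we have $\tilde Y=Y$, so $\KL(\tilde Y\|Y)=0$, killing the first slack term; simultaneously, the pointwise equality $\KL(Y_x\|Y)=a\mathds{E}[Y_x]+b$ on the support of $X$ kills the second slack, giving $I(X;Y_X)=a\mu+b$ exactly. Combined with the upper bound already proved, this identifies the capacity of $\Ch_\mu$ and certifies $X$ as capacity-achieving. I expect no serious obstacle: the only substantive point is that the mean constraint is what turns the pointwise-linear-in-$\mathds{E}[Y_x]$ bound into a single number $a\mu+b$ after averaging, which is precisely why restricting to a fixed mean $\mu$ is the natural setting for this duality.
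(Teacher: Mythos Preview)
The paper does not supply its own proof of this theorem; it is quoted from \cite{Che17} and used as a black box. That said, your argument is correct and is exactly the standard one: the divergence decomposition $\mathds{E}_X[\KL(Y_X\|Y)]=I(X;Y_X)+\KL(\tilde Y\|Y)$ together with nonnegativity of $\KL(\tilde Y\|Y)$ gives the upper bound after averaging the pointwise hypothesis and using $\mathds{E}_X\mathds{E}[Y_X]=\mu$, and both slack terms vanish under the stated equality conditions. There is nothing to compare against in this paper, but your proof is the expected route and would be accepted as a self-contained justification.
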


An important concept when dealing with Theorem~\ref{thm:duality} is the \emph{KL-gap} of a distribution $Y$, which we proceed to explain. Fix a channel $\Ch$ with input alphabet $\mathcal{X}$, let $Y_x$ be the output distribution given input $x$, and suppose some distribution $Y$ satisfies
\[
D_{\mathsf{KL}}(Y_x||Y)\leq a\mathds{E}[Y_x]+b
\]
for all $x\in\mathcal{X}$. Then, the KL-gap of $Y$ with respect to $a$, $b$, and $\Ch$ is defined (as a function of $x$) as
\[
a\mathds{E}[Y_x]+b-\KL(Y_x||Y).
\]
A good goal when designing a distribution $Y$ for Theorem~\ref{thm:duality} is to minimize the KL-gap as much as possible, for two reasons: First, from experience it appears to lead to overall better capacity upper bounds. Second, designing distributions with zero KL-gap is a first step towards determining the channel capacity exactly, the remaining step being that these distributions should also be realizable as channel output distributions. This is the philosophy behind the design techniques developed in \cite{Che17}, although it was not possible to construct distributions with zero KL-gap everywhere.

\section{The geometric sticky channel}\label{sec:geomsticky}

In this section, we study the capacity of the geometric sticky channel. As discussed before, the current known bounds require significant computational power, and their derivation makes use of a variant of the Blahut-Arimoto algorithm to obtain good distributions to be used in conjunction with Theorem~\ref{thm:ag}. This means that there is no analytical method behind the design of these distributions.

We make progress towards an analytical understanding of the capacity by designing a family of distributions with zero KL-gap for the memoryless channel associated to the geometric sticky channel. Furthermore, for every $\mu$ there is a distribution $Y$ in this family which satisfies $\mathds{E}[Y]=\mu$. This is a significant step towards obtaining an exact analytical expression for the capacity of the geometric sticky channel, since Theorem~\ref{thm:duality} states that if such distributions are also valid channel output distributions, then we have determined the capacity exactly.

The geometric sticky channel independently replicates each input bit according to a distribution $D_1$ satisfying
\[
D_1(y)=(1-p) p^{y-1},\quad y=1,2,\dots
\]
for some $p\in[0,1)$ which we call the \emph{replication parameter}, i.e., $D_1$ follows a geometric distribution with success probability $1-p$ supported on $\{1,2,\dots\}$. In order to use \eqref{eq:stickychar} combined with Theorem~\ref{thm:duality}, we need to understand the channel $\Ch'(D_1)$ which on input $x\in\{1,2,\dots\}$ outputs
\[
Y_x=\sum_{i=1}^x D_{1i},
\]
where the $D_{1i}$ are i.i.d.\ according to $D_1$. This is because for the geometric sticky channel we have $D_1=\overline{D_1}$. For any input $x\in\{1,2,\dots\}$, the output channel distribution has a nice form. More precisely, if $Y_x$ denotes the channel output distribution given input $x$, then
\begin{equation}\label{eq:yxgeom}
Y_x=x+\mathsf{NB}_{x,p},
\end{equation}
where $\mathsf{NB}_{x,p}$ denotes the negative binomial distribution with $x$ successes and success probability $p$, which satisfies
\begin{equation}\label{eq:NB}
\mathsf{NB}_{x,p}(y)=\binom{y+x-1}{y}(1-p)^x p^y,\quad y=0,1,\dots
\end{equation}
That \eqref{eq:yxgeom} holds follows easily from the fact that $D_1=1+D_0$, where $D_0$ follows a geometric distribution with success probability $1-p$ supported on $\{0,1,2,\dots\}$, i.e.,
\[
D_0(y)=(1-p)p^y,\quad y=0,1,\dots
\]
and that $\mathsf{NB}_{x,p}=\sum_{i=1}^x D_{0i}$, where the $D_{0i}$ are i.i.d.\ according to $D_0$.

As a consequence, it follows easily from \eqref{eq:yxgeom} and \eqref{eq:NB} that
\begin{equation}\label{eq:outgeom}
Y_x(y)=\binom{y-1}{x-1}(1-p)^x p^{y-x},\quad y=x,x+1,\dots
\end{equation}
for all $x\geq 1$.

\subsection{A distribution with zero KL-gap everywhere}\label{sec:zerogapsticky}

In this section, we show how to design distributions for Theorem~\ref{thm:duality} with zero KL-gap for the geometric sticky channel.

We begin by noting that $\KL(Y_x||Y)$ can be rewritten as
\[
\KL(Y_x||Y)=-H(Y_x)-\sum_y Y_x(y)\log Y(y).
\]
Then, recalling \eqref{eq:outgeom} and noting that $\mathds{E}[Y_x]=\frac{x}{1-p}$, we have
\begin{align}\label{eq:KLgeomsticky}
-H(Y_x)&=\mathds{E}\left[\log\binom{Y_x-1}{x-1}\right]+x\log(1-p)+(\mathds{E}[Y_x]-x)\log p\nonumber\\
&=\mathds{E}\left[\log\binom{Y_x-1}{x-1}\right]-\mathds{E}[Y_x]h(p)\nonumber\\
&= \mathds{E}[\log(Y_x-1)!]-\mathds{E}[\log(Y_x-x)!]-\log(x-1)!-\mathds{E}[Y_x]h(p).
\end{align}

Our goal is to design a family of distributions $Y$ such that $\KL(Y_x||Y)$ is an affine function of $\mathds{E}[Y_x]$. Given $q\in(0,1)$, consider the distribution $\Yq$ with general form
\[
\Yq(y)=y_0q^y\exp(g(y)-yh(p)),\quad y=1,2,\dots
\]
where $y_0$ is the normalizing factor and $g$ is a function to be defined. Then, using \eqref{eq:KLgeomsticky},
\begin{align}\label{eq:KLYq}
\KL(Y_x||\Yq)&=-H(Y_x)-\sum_y Y_x(y)\log \Yq(y)\nonumber\\
&=\mathds{E}\left[\log\binom{Y_x-1}{x-1}\right]-\mathds{E}[Y_x]h(p) -\log y_0 -\mathds{E}[Y_x]\log q\nonumber - \mathds{E}[g(Y_x)]+\mathds{E}[Y_x]h(p)\nonumber\\
&=-\log y_0+\mathds{E}[Y_x]\log q +\mathds{E}[\log(Y_x-1)!]-\mathds{E}[\log(Y_x-x)!] -\log(x-1)!-\mathds{E}[g(Y_x)].
\end{align}

Taking into account \eqref{eq:KLYq}, ideally we would like to have
\begin{equation}\label{eq:ggoal}
\mathds{E}[g(Y_x)]=\mathds{E}[\log(Y_x-1)!]-\mathds{E}[\log(Y_x-x)!]-\log(x-1)!
\end{equation}
for all $x\geq 1$, so that
\[
\KL(Y_x||\Yq)=-\log y_0 - \mathds{E}[Y_x]\log q.
\]

We will proceed to design such a function $g$. Before we begin, we first state some lemmas that will be useful later on. The following result gives an integral representation of the log gamma function.
\begin{lem}[\cite{Che17a}]\label{lem:intlog}
	We have
	\[
	\log \Gamma(1+z)=\int_0^1 \frac{1-tz-(1-t)^z}{t\log(1-t)}dt
	\]
	for all $z\geq 0$.
\end{lem}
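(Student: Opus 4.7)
My plan is to verify the identity by matching both sides at $z=0$ and then showing their derivatives agree. Let $F(z) := \int_0^1 \frac{1-tz-(1-t)^z}{t\log(1-t)}\,dt$. The integrand vanishes identically at $z=0$, so $F(0) = 0 = \log\Gamma(1)$, settling the initial condition.

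For the derivative, I would differentiate under the integral sign to obtain
\[
F'(z) = \int_0^1 \left[\frac{-1}{\log(1-t)} - \frac{(1-t)^z}{t}\right] dt,
\]
keeping the two fractions together, since each is individually non-integrable near the appropriate endpoint. Justification is by dominated convergence, after a short Taylor expansion at $t = 0$ shows that the combined integrand is uniformly bounded on $(0,1)$ for $z$ in any compact subset of $[0,\infty)$. The substitution $s = 1-t$ rewrites the display as $F'(z) = \int_0^1 \left[-1/\log s - s^z/(1-s)\right] ds$.

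The finish combines two classical identities. First, the digamma representation
\[
\psi(1+z) = -\gamma + \int_0^1 \frac{1-s^z}{1-s}\,ds,
\]
obtained from the Weierstrass series $\psi(1+z) = -\gamma + \sum_{n\geq 1}(1/n - 1/(n+z))$ by writing $1/(n+z) = \int_0^1 s^{n+z-1}\,ds$ and summing the resulting geometric series (legal by monotone convergence). Second, Euler's integral formula $\gamma = \int_0^1 [1/(1-s) + 1/\log s]\,ds$. Subtracting the digamma representation from $F'(z)$, the $s^z/(1-s)$ terms cancel and what remains is precisely $\gamma - \int_0^1 [1/(1-s) + 1/\log s]\, ds = 0$. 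Hence $F'(z) = \psi(1+z) = (\log\Gamma(1+\cdot))'(z)$, and the fundamental theorem of calculus yields $F(z) = \log\Gamma(1+z)$ for all $z \geq 0$.

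The only real obstacle is convergence bookkeeping: the derivative integrand and the two classical identities each consist of divergent fragments that become integrable only when correctly paired. I would therefore avoid splitting any of these integrals prematurely, relying on leading-order asymptotics near $s = 0$ and $s = 1$ to confirm that each combined expression is bounded. Once this discipline is maintained, the argument reduces to a clean calibration against known representations of $\psi$ and $\gamma$, with no deeper ingredients required.
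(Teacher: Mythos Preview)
Your argument is correct. The paper does not actually supply a proof of this lemma; it simply quotes the identity from the reference \cite{Che17a} and uses it as a black box. So there is no ``paper's proof'' to compare against, and your self-contained derivation fills a genuine gap in the exposition.

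A couple of minor remarks on presentation. When you write that ``the $s^z/(1-s)$ terms cancel,'' the honest mechanism is that both $F'(z)$ and $\int_0^1 (1-s^z)/(1-s)\,ds$ are separately convergent improper integrals over $[0,1]$, so their difference equals the integral of the difference of the integrands; it is only at that point that the $s^z/(1-s)$ contributions legitimately combine and drop out, leaving the convergent integral $\int_0^1[-1/\log s - 1/(1-s)]\,ds$. You clearly understand this (your final paragraph says as much), but the body of the argument reads as if you split $\int_0^1(1-s^z)/(1-s)\,ds$ into $\int 1/(1-s) - \int s^z/(1-s)$ first, which is exactly the move you warn against. A one-line regrouping would make the logic airtight. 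Also, the identity $\gamma=\int_0^1[1/(1-s)+1/\log s]\,ds$ is perhaps less universally recognized than you suggest; if you want the proof to be fully self-contained, the substitution $s=e^{-u}$ reduces it to the Gauss integral $\psi(1)=\int_0^\infty[e^{-u}/u - 1/(e^u-1)]\,du=-\gamma$, which is standard.
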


We will require a version of Fubini's theorem specialized for the counting measure on $\mathbb{N}$ and the Lebesgue measure on $[0,1]$.
\begin{lem}\label{lem:fubini}
	Let $(f_n)_{n\in\mathbb{N}}$ be a family of continuous functions $f_n:[0,1]\to \mathbb{R}$, and suppose that either
	\[
	\int_0^1 \sum_{n=1}^\infty |f_n(t)| dt<\infty
	\]
	or
	\[
	\sum_{n=1}^\infty \int_0^1 |f_n(t)| dt<\infty.
	\]
	Then,
	\[
	\int_0^1 \sum_{n=1}^\infty f_n(t) dt=\sum_{n=1}^\infty \int_0^1 f_n(t) dt.
	\]
\end{lem}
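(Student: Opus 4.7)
The statement is essentially the Fubini--Tonelli theorem specialized to the product of counting measure on $\mathbb{N}$ and Lebesgue measure on $[0,1]$, so the plan is to deduce it from the standard dominated and monotone convergence theorems rather than reproving Fubini--Tonelli from scratch.

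First I would show that the two hypotheses are equivalent. Set $g_N(t)=\sum_{n=1}^N |f_n(t)|$; this is a nondecreasing sequence of nonnegative continuous (hence Lebesgue measurable) functions on $[0,1]$, converging pointwise to $g(t):=\sum_{n=1}^\infty |f_n(t)|\in[0,\infty]$. By the monotone convergence theorem,
\begin{equation*}
\int_0^1 g(t)\,dt=\lim_{N\to\infty}\int_0^1 g_N(t)\,dt=\lim_{N\to\infty}\sum_{n=1}^N\int_0^1 |f_n(t)|\,dt=\sum_{n=1}^\infty\int_0^1 |f_n(t)|\,dt,
\end{equation*}
where the middle equality uses linearity of the Lebesgue integral over a finite sum. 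Thus either of the two hypotheses implies the other, and both imply that $g(t)<\infty$ for almost every $t\in[0,1]$.

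Next I would apply the dominated convergence theorem to the partial sums $S_N(t):=\sum_{n=1}^N f_n(t)$. On the full-measure set where $g(t)<\infty$, the series $\sum_n f_n(t)$ converges absolutely, so $S_N(t)\to S(t):=\sum_{n=1}^\infty f_n(t)$ pointwise a.e.; moreover $|S_N(t)|\le g_N(t)\le g(t)$ for every $N$, and $g\in L^1([0,1])$. The dominated convergence theorem therefore gives
\begin{equation*}
\int_0^1 \sum_{n=1}^\infty f_n(t)\,dt=\int_0^1 S(t)\,dt=\lim_{N\to\infty}\int_0^1 S_N(t)\,dt=\lim_{N\to\infty}\sum_{n=1}^N\int_0^1 f_n(t)\,dt,
\end{equation*}
and the last limit equals $\sum_{n=1}^\infty\int_0^1 f_n(t)\,dt$ by definition of the series on the right-hand side, which also converges since $\bigl|\int_0^1 f_n(t)\,dt\bigr|\le\int_0^1 |f_n(t)|\,dt$ and the absolute series is summable.

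I do not anticipate a genuine obstacle: the only technical point is checking that $g$ is finite almost everywhere (so that pointwise absolute convergence holds on a full-measure set), but that follows from integrability of $g$. The continuity assumption on each $f_n$ is used only to guarantee Lebesgue measurability; any weaker measurability assumption would work equally well, but continuity suffices for the applications in the remainder of the paper.
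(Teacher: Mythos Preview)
Your proof is correct. The paper does not actually supply a proof of this lemma: it merely states it as a specialization of Fubini--Tonelli for the product of counting measure on $\mathbb{N}$ and Lebesgue measure on $[0,1]$, and then invokes it in subsequent arguments. Your derivation via the monotone convergence theorem (to establish the equivalence of the two hypotheses and the integrability of the dominating function $g$) followed by the dominated convergence theorem (to interchange the limit of partial sums with the integral) is a clean, self-contained justification that fills in exactly what the paper leaves implicit.
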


Making use of Lemmas \ref{lem:intlog} and \ref{lem:fubini}, and of the facts that $\mathds{E}[Y_x]=\frac{x}{1-p}$ and that the probability generating function of $Y_x$ is
\begin{equation}\label{eq:pgfsticky}
\left(\frac{z(1-p)}{1-pz}\right)^x,
\end{equation}
we have
\[
\log(x-1)!=\int_0^1 \frac{1+t-t x-(1-t)^{x-1}}{t\log(1-t)}dt
\]
and\footnote{We can justify the switching of the integral and expected value in \eqref{eq:explog} via Lemma~\ref{lem:fubini} by noting that the function inside the integral in Lemma~\ref{lem:intlog} can be extended by continuity to $[0,1]$, is non-negative for all $y\geq x$, and that the left-hand side of \eqref{eq:explog} is finite.}
\begin{align}\label{eq:explog}
\mathds{E}[\log(Y_x-x)!]&=\mathds{E}\left[\int_0^1 \frac{1-t(Y_x-x)-(1-t)^{Y_x-x}}{t \log(1-t)}dt\right]\nonumber\\
&=\int_0^1 \frac{1-\frac{txp}{1-p}-\left(\frac{1-p}{1-p(1-t)}\right)^x}{t \log(1-t)}dt.
\end{align}
Consider the functions
\begin{align}
f_1(y,t)&=\frac{1+t-ty(1-p)-\left(\frac{1-t}{1-pt}\right)^y/(1-t)}{t\log(1-t)}\label{eq:f1sticky}\\
f_2(y,t)&=\frac{1-typ-\left(\frac{1}{1+pt}\right)^y}{t\log(1-t)}.\label{eq:f2sticky}
\end{align}
Recalling \eqref{eq:pgfsticky}, observe that
\begin{align}\label{eq:expfi}
\mathds{E}[f_1(Y_x,t)]&=\frac{1+t-t x-(1-t)^{x-1}}{t\log(1-t)},\nonumber\\
\mathds{E}[f_2(Y_x,t)]&=\frac{1-\frac{txp}{1-p}-\left(\frac{1-p}{1-p(1-t)}\right)^x}{t \log(1-t)}.
\end{align}
With \eqref{eq:ggoal} in view, we set
\begin{align}\label{eq:lambdasticky}
\Lambda_1(y)&=\int_0^1 f_1(y,t)dt,\nonumber\\
\Lambda_2(y)&=\int_0^1 f_2(y,t)dt
\end{align}
with $f_1$ and $f_2$ defined as in \eqref{eq:f1sticky} and \eqref{eq:f2sticky}, respectively. Taking into account \eqref{eq:expfi}, we show the following.
\begin{lem}\label{lem:switch}
	We have
	\begin{align*}
	\mathds{E}[\Lambda_1(Y_x)]&=\int_0^1 \mathds{E}[f_1(Y_x,t)]dt=\log(x-1)!,\\
	\mathds{E}[\Lambda_2(Y_x)]&=\int_0^1 \mathds{E}[f_2(Y_x,t)]dt=\mathds{E}[\log(Y_x-x)!].
	\end{align*}
\end{lem}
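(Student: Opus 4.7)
The plan is to apply Fubini's theorem (Lemma \ref{lem:fubini}) to exchange the expectation over $Y_x$ with the integration over $t$ in each of the two statements. Once this exchange is justified, the second equalities follow immediately: for $\Lambda_1$, from \eqref{eq:expfi} combined with Lemma \ref{lem:intlog} applied to $z = x - 1$; for $\Lambda_2$, from \eqref{eq:explog}.

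For $\Lambda_2$, I would first verify that $f_2(y,t) \geq 0$ for every $y \geq 0$ and $t \in (0,1)$. Since $t\log(1-t) < 0$, this is equivalent to the elementary inequality $(1+pt)^{-y} + typ \geq 1$, which holds because the left-hand side equals $1$ at $t = 0$ and has non-negative derivative $yp\bigl[1 - (1+pt)^{-y-1}\bigr]$ in $t$. The exchange then follows from the non-negative form of Lemma \ref{lem:fubini}, and finiteness of the swapped value is automatic since it equals $\mathds{E}[\log(Y_x - x)!]$ and $Y_x - x$ is a negative binomial random variable with all moments.

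For $\Lambda_1$, the function $f_1$ may take negative values, so I would instead directly establish absolute integrability. The key step is the decomposition $f_1 = g_1 + h_1$, where
\[
g_1(y,t) = \frac{1 + t - ty - (1-t)^{y-1}}{t\log(1-t)}
\]
is precisely the Lemma \ref{lem:intlog} integrand at $z = y - 1$, so $g_1 \geq 0$ and $\int_0^1 g_1(y,t)\,dt = \log(y-1)!$, and
\[
h_1(y,t) = \frac{typ - (1-t)^{y-1}\bigl[(1-pt)^{-y} - 1\bigr]}{t\log(1-t)}.
\]
Tonelli handles the $g_1$ component, since $\mathds{E}\bigl[\int_0^1 g_1(Y_x,t)\,dt\bigr] = \mathds{E}[\log(Y_x - 1)!]$ is finite (the tails of $Y_x$ decay exponentially). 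For the $h_1$ component it suffices to prove a polynomial-in-$y$ bound on $\int_0^1 |h_1(y,t)|\,dt$; combined with the moment bounds on $Y_x$, this will give $\mathds{E}\bigl[\int_0^1 |h_1(Y_x,t)|\,dt\bigr] < \infty$, whereupon Lemma \ref{lem:fubini} applies to $h_1$, and hence to $f_1$, as well.

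The main technical hurdle is this polynomial bound: the two terms in the numerator of $h_1$ individually yield integrands that diverge as $t \to 0$, so any valid estimate must exploit their cancellation. I would rewrite the numerator using the identity $(1-pt)^{-y} - 1 = yp\int_0^t (1-ps)^{-y-1}\,ds$ to obtain $yp\int_0^t \bigl[1 - (1-t)^{y-1}(1-ps)^{-y-1}\bigr]\,ds$, and then combine the uniform estimate $(1-t)^{y-1}(1-ps)^{-y-1} \leq (1-p)^{-2}$ for $s \leq t$ with a second-order Taylor expansion of the numerator around $t = 0$ to establish a uniform bound $|h_1(y,t)| \leq C_p\,y^2$ on $(0,1)$, for a constant $C_p$ depending only on $p$. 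This is more than sufficient since $Y_x$ has a finite second moment.
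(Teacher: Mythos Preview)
Your treatment of $\Lambda_2$ matches the paper's exactly: both show $f_2\ge 0$ by the same derivative argument and then invoke the non-negative form of Fubini.

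For $\Lambda_1$ your route is genuinely different. The paper does not decompose $f_1$; instead it shows that for each fixed $p$ there is a threshold $y^*$ such that $f_1(y,t)\ge 0$ for all $t\in[0,1]$ once $y\ge y^*$ (via a second-derivative analysis of the numerator $h(y,t)$ in $t$), and then handles the finitely many $y<y^*$ using only that $f_1(y,\cdot)$ extends continuously to $[0,1]$, hence is bounded. This yields $\int_0^1\mathds{E}[|f_1(Y_x,t)|]\,dt\le \log(x-1)! + C_{x,p}<\infty$ with essentially no quantitative estimates. Your decomposition $f_1=g_1+h_1$ is also correct: $g_1$ is non-negative on the integer support of $Y_x$ with $\int_0^1 g_1(y,t)\,dt=\log(y-1)!$, and the rewriting of the numerator of $h_1$ as $yp\int_0^t[1-(1-t)^{y-1}(1-ps)^{-y-1}]\,ds$ together with the bound $(1-t)^{y-1}(1-ps)^{-y-1}\le(1-p)^{-2}$ does give a polynomial-in-$y$ bound on $\int_0^1|h_1(y,t)|\,dt$ (a first-order expansion of the \emph{integrand} $1-\phi(s,t)$ already yields $|h_1|\le C_p y^2$; a straight second-order Taylor of the full numerator $N(y,t)$ may only give $C_p y^3$, but either suffices since $Y_x$ has all moments). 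The paper's argument is shorter and avoids any growth estimates; yours is more quantitative and transparently reduces the problem to the known positivity of the log-gamma integrand.
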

\begin{proof}
	The only problem lies with the first equality (the second equality follows directly from \eqref{eq:expfi}). We start by showing that this equality holds for $\Lambda_1$. All we need to do is see that the conditions in Lemma~\ref{lem:fubini} are satisfied. 
	
	First, it is easy to see that $f_1(y,\cdot)$ is continuous in $(0,1)$, and that
	\begin{equation}\label{eq:lim1}
	\lim_{t\to 1}f_1(y,t)=0
	\end{equation}
	and
	\begin{equation}\label{eq:lim0}
	\lim_{t\to 0}f_1(y,t)=1+\frac{y(1-p)(y(1-p)-3-p)}{2}
	\end{equation}
	for all $y\geq 1$. This means that $f_1(y,\cdot)$ can be extended by continuity to $[0,1]$ (this does not change the integral). From here onwards we work with this extension. By Lemma~\ref{lem:fubini}, we only need to show that
	\begin{equation}\label{eq:absint}
	\int_0^1 \mathds{E}[|f_1(Y_x,t)|]dt<\infty.
	\end{equation}
	
	We begin by showing that $f_1(y,t)\geq 0$ for all $t\in[0,1]$ if $y$ is large enough. Recalling \eqref{eq:f1sticky}, the numerator of $f_1(y,t)$ is
	\begin{equation}\label{eq:num}
	h(y,t):=1+t-ty(1-p)-\left(\frac{1-t}{1-pt}\right)^y/(1-t).
	\end{equation}
	We show that $h(y,t)\leq 0$ for all $t\in[0,1]$ if $y$ is large enough. This gives the desired result since the denominator of $f_1(y,t)$ is negative for all $t\in(0,1)$. The first and second derivatives with respect to $t$ of $h(y,t)$ are
	\begin{align*}\label{eq:der}
	\frac{\partial h}{\partial t}(y,t)&=1-y(1-p)+\frac{(y(1-p) + pt-1) \left(\frac{1-t}{1-pt}\right)^{y+1}}{(1-t)^3},\nonumber\\
	\frac{\partial^2 h}{\partial t^2}(y,t)&=\left(\frac{1-t}{1-pt}\right)^y\bigg(\frac{(1-p)(3+p(1-4t)y)-(1-p)^2 y^2+2(1-pt)^2}{(1-t)^3(1-pt)^2}\bigg).
	\end{align*}
	For fixed $p$, we can set $y^*$ large enough (and independent of $t$) so that
	\begin{align*}
	&(1-p)(3+p(1-4t)y)-(1-p)^2 y^2-2(1-pt)^2\leq 4(1-p)y-(1-p)^2 y^2<0
	\end{align*}
	for all $t\in[0,1]$ and $y\geq y^*$, which implies that $\frac{\partial^2 h}{\partial t^2}(y,t)<0$ for all $t\in[0,1]$. As a consequence, it follows that $h'(y,t)$ is decreasing in $t$ for $y\geq y^*$. Combining this with the fact that $\frac{\partial h}{\partial t}(p,y,0)=0$, we conclude that $\frac{\partial h}{\partial t}(y,t)\leq 0$ for all $t\in[0,1]$, provided that $y\geq y^*$. Finally, this implies that $h(y,t)\leq 0$ holds for $y\geq y^*$, since $h(y,0)=0$.
	
	
	Consequently, we have
	\begin{align*}
	\eqref{eq:absint}&\leq \int_0^1 \left(\mathds{E}[f_1(Y_x,t)]+2\sum_{y=1}^{y^*}Y_x(y)|f_1(y,t)|\right)dt\\
	&= \log(x-1)!+C_{x,p}<\infty,
	\end{align*}
	where
	\[
	C_{x,p}=2\int_0^1 \sum_{y=1}^{y^*}Y_x(y)|f_1(y,t)| dt
	\]
	is a finite constant depending only on $x$ and $p$, since $f_1(y,\cdot)$ is continuous in $[0,1]$ for all $y\geq 1$, and therefore bounded as well. This means that Lemma~\ref{lem:fubini} can be applied, which leads to the desired equality.
	
	The argument for $\Lambda_2$ follows in an analogous, but simpler, way. In fact, recalling \eqref{eq:f2sticky}, the numerator of $f_2(y,t)$ is
	\begin{equation}\label{eq:num2}
	h_2(y,t)=1-typ-\left(\frac{1}{1+pt}\right)^y,
	\end{equation}
	and its derivative with respect to $t$ is
	\begin{equation}\label{eq:dernum2}
	\frac{\partial h_2}{\partial t}(y,t)=yp\left(\left(\frac{1}{1+pt}\right)^{1+y}-1\right).
	\end{equation}
	It is clear that $\frac{\partial h_2}{\partial t}(y,t)< 0$ for $t\in(0,1)$ and $y\geq 1$, which implies that $h_2(y,t)$ is decreasing in $t$ for fixed $p$ and $y\geq 1$. Combining this with the fact that $h_2(y,0)=0$ for all $y\geq 1$ yields that $f_2(y,t)\geq 0$ for all $t\in(0,1)$ and $y\geq 1$. As before, it is easy to see that $f_2(y,\cdot)$ can be extended by continuity to $[0,1]$. This means we can apply Lemma~\ref{lem:fubini} and obtain the desired result.
\end{proof}

Consider the distribution $\Yq$ defined by the choice of $g$
\begin{equation}\label{def:g}
g(y)=\log(y-1)!-\Lambda_1(y)-\Lambda_2(y).
\end{equation}
By Lemma~\ref{lem:switch}, it follows that $g$ satisfies \eqref{eq:ggoal}, and so, recalling \eqref{eq:KLYq}, we have 
\begin{equation*}\label{eq:finalKLlinegeom}
\KL(Y_x||\Yq)=-\log y_0-\mathds{E}[Y_x]\log q
\end{equation*}
provided that $\Yq$ is a valid distribution. In order to wrap everything up, it remains to show this fact, i.e., that
\[
1/y_0=\sum_{y=1}^\infty q^y \exp(g(y)-y h(p))<\infty
\]
if $q\in(0,1)$, and thus $\Yq$ can be normalized so that $\sum_{i=1}^\infty \Yq(y)=1$. The following lemma implies this by showing that $Y(y)/y_0=\Theta(q^y/\sqrt{y})$.
\begin{lem}\label{lem:asymp}
	We have
	\[
	yh(p)-g(y)=\frac{1}{2}\log y+O(1).
	\]
	for $y\geq 1$ and $p\in[0,1)$.
\end{lem}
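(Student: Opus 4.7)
The plan is to combine Stirling's approximation with a careful asymptotic analysis of the integrals $\Lambda_1(y)$ and $\Lambda_2(y)$. Since $\log(y-1)! = y\log y - y - \tfrac{1}{2}\log y + O(1)$ by Stirling, the target bound $yh(p) - g(y) = \tfrac{1}{2}\log y + O(1)$ reduces to establishing
\[
\Lambda_1(y) + \Lambda_2(y) = y\log y - y - yh(p) + O(1).
\]

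I would first apply Lemma~\ref{lem:intlog} with $z = y(1-p)$ and $z = yp$ to isolate log-Gamma contributions, writing $\Lambda_1(y) = \log\Gamma(1+y(1-p)) + R_1(y)$ and $\Lambda_2(y) = \log\Gamma(1+yp) + R_2(y)$ with residuals
\[
R_1(y) = \int_0^1 \frac{t + (1-t)^{y(1-p)} - (1-t)^{y-1}(1-pt)^{-y}}{t\log(1-t)}\,dt, \quad R_2(y) = \int_0^1 \frac{(1-t)^{yp} - (1+pt)^{-y}}{t\log(1-t)}\,dt.
\]
Applying Stirling to the two gamma functions gives $\log\Gamma(1+y(1-p)) + \log\Gamma(1+yp) = y\log y - y - yh(p) + \log y + O(1)$, where the excess $+\log y$ is produced by the two half-log Stirling corrections. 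The task therefore reduces to showing $R_1(y) + R_2(y) = -\log y + O(1)$, which will cancel this excess.

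The crux is this residual estimate. Writing the combined integrand as $S(y,t)/(t\log(1-t))$ with $S(y,t) := t + (1-t)^{y(1-p)} + (1-t)^{yp} - (1-t)^{y-1}(1-pt)^{-y} - (1+pt)^{-y}$, a Taylor expansion at $t = 0$ reveals the key algebraic cancellation $S(y,t) = t^2[y(1-p-p^2) - 1] + O(t^3)$, keeping the integrand bounded (of size $O(y)$) near $t = 0$. I would then split the integration interval at $t = c_y/y$ and $t = 1/2$ for a slowly growing $c_y$ (e.g.\ $c_y = \Theta(\log y)$) and handle three regimes:
\begin{itemize}
\item On $[0, c_y/y]$, the substitution $t = u/y$ combined with the Taylor expansion yields an integrand converging as $y \to \infty$ to $-(1 - e^{-(1-p)u})/u + \tfrac{p}{2}[(1+p)e^{-pu} - (1-p)e^{-(1-p)u}]$; integrating this from $0$ to $c_y$ evaluates to $-\log c_y + O(1)$ via a Frullani-type computation.
\item On $[c_y/y, 1/2]$, each of the four exponential-in-$y$ summands in $S(y,t)$ is uniformly $O(e^{-c_y \min(p, 1-p)})$, small enough that their combined contribution to the integral is $O(1)$; thus $S(y,t) \approx t$ and explicit integration gives $\int_{c_y/y}^{1/2} dt/\log(1-t) = -\log y + \log c_y + O(1)$.
\item On $[1/2, 1]$, all four exponential terms decay in $y$ and $1/\log(1-t) \to 0$ as $t \to 1$, so the contribution is $O(1)$.
\end{itemize}
Summing the three contributions, the $\pm \log c_y$ terms cancel and we obtain $R_1(y) + R_2(y) = -\log y + O(1)$.

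The main obstacle is the intermediate regime $[c_y/y, 1/2]$: one has to choose $c_y$ growing with $y$ at the right rate so that the exponential-in-$y$ terms in $S(y,t)$ can be absorbed into $O(1)$ error against the mildly singular factor $1/(t\log(1-t))$ near the lower endpoint, and then verify that the resulting $+\log c_y$ is exactly matched by the $-\log c_y$ from the $[0,c_y/y]$ regime. The underlying reason this works is the matching between the exponentials in pairs after rescaling $t = u/y$ — both $(1-t)^{y(1-p)}$ and $(1-t)^{y-1}(1-pt)^{-y}$ converge to $e^{-(1-p)u}$, and both $(1-t)^{yp}$ and $(1+pt)^{-y}$ converge to $e^{-pu}$ — which is precisely the mechanism that forces $S(y,t) = O(yt^2)$ at the origin and makes the three-regime bookkeeping clean.
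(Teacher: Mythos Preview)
Your outline is correct, but it takes a harder road than the paper's proof. The paper also reduces to Stirling plus a residual estimate, but it compares $\Lambda_1(y)$ to $\log\Gamma(y(1-p))$ (not $\log\Gamma(1+y(1-p))$) and $\Lambda_2(y)$ to $\log\Gamma(1+yp)$, handling each separately. With that shift, the ``$+t$'' in the numerator of $f_1$ is absorbed by the integral representation of $\log\Gamma(y(1-p))$, and each residual becomes a difference of just \emph{two} exponential-type terms (e.g.\ $\int_0^1 \frac{(1-t)^{y(1-p)}-\bigl(\tfrac{1-t}{1-pt}\bigr)^y}{t(1-t)\log(1-t)}\,dt$). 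The paper then shows each such residual is $O(1)$ directly, via the elementary inequality $e^{by}-e^{ay}<(b-a)ye^{by}$ together with the Taylor bound $h_2(t)-h_1(t)=O(t^2)$, which collapses the estimate to $\int_0^\infty y\,e^{-cty}\,dt=O(1)$.

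By choosing $\log\Gamma(1+y(1-p))$ instead, you create a residual $R_1+R_2$ that is genuinely $-\log y+O(1)$ rather than $O(1)$, which forces you into the three-regime split, the Frullani computation, and the $\pm\log c_y$ bookkeeping. All of this can be made rigorous (your matching of the four exponentials in pairs under $t=u/y$, the $S(y,t)=O(yt^2)$ cancellation, and the $c_y=\Theta(\log y)$ choice are the right ingredients), but it is substantially more work. The paper's shift in the comparison point is the one idea that would let you replace the whole three-regime analysis with a two-line bound.
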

\begin{proof}
	It holds that
	\begin{align}
	\Lambda_1(y)&=\log \Gamma(y(1-p))+O(1),\nonumber\\
	\Lambda_2(y)&=\log \Gamma(1+yp)+O(1).\label{eq:asymplambda}
	\end{align}
	We only show the first equality; the second one follows in an analogous manner (we discuss the deviations briefly). Using Lemma~\ref{lem:intlog}, we have
	\begin{equation*}
	\log \Gamma(y(1-p))=\int_0^1 \frac{1+t-ty(1-p)-(1-t)^{y(1-p)-1}}{t\log(1-t)}dt.
	\end{equation*}
	Recalling the definition of $\Lambda_1(y)$, it follows that
	\begin{equation*}
	\Lambda_1(y)-\log \Gamma(y(1-p))=\int_0^1 \frac{\left(\frac{1-t}{1-pt}\right)^y -(1-t)^{y(1-p)}}{t(1-t)\log(1-t)}dt.
	\end{equation*}
	First, observe that for any fixed constant $0<\epsilon<1$ we have
	\[
	\int_\epsilon^1 \frac{\left(\frac{1-t}{1-pt}\right)^y -(1-t)^{y(1-p)}}{t(1-t)\log(1-t)}dt\to 0
	\]
	when $y\to\infty$. As a result, it suffices to show that
	\begin{equation}\label{eq:asympint}
	\int_0^\epsilon \frac{(1-t)^{y(1-p)}-\left(\frac{1-t}{1-pt}\right)^y }{-t\log(1-t)}dt=O(1)
	\end{equation}
	for some $0<\epsilon<1$, since $1-\eps\leq 1-t\leq 1$ for $t\leq \eps$.
	
	Define $h_1(t)=\log\frac{1-t}{1-pt}$, $h_2(t)=(1-p)\log(1-t)$, and $\delta(t)=h_2(t)-h_1(t)$. It is easy to see that $\delta(t)>0$ for all $t\in(0,1)$. We can rewrite the left-hand side of \eqref{eq:asympint} as
	\begin{equation}\label{eq:epsint}
	\int_0^\eps \frac{e^{h_2(t)y}-e^{h_1(t)y}}{-t\log(1-t)}dt.
	\end{equation}
	Then, we have
	\begin{align*}
	\eqref{eq:epsint}&<\int_0^\epsilon \frac{\delta(t)ye^{h_2(t)y}}{-t\log(1-t)}dt\\
	&<c\int_0^\epsilon ye^{h_2(t)y}dt\\
	&<c\int_0^\epsilon ye^{-(1-p)ty}dt\\
	&<c\int_0^\infty ye^{-(1-p)ty}dt=\frac{c}{1-p},\\
	\end{align*}
	for some constant $c>0$. The first inequality follows from the fact that
	\[
	e^{by}-e^{ay} < (b-a)ye^{by}
	\]
	if $y\geq 0$ and $b>a$ (recall that $h_2(t)>h_1(t)$ for all $t>0$). The second inequality follows because $\delta(t)\leq c t^2$ for some constant $c>0$ depending on $\epsilon<1$ (this can be seen by computing the Taylor expansion of $\delta(t)$ around $t=0$), and since $\log(1-t)<-t$. The third inequality stems again from the fact that $\log(1-t)<-t$. The fourth inequality holds because the function inside the integral is positive. It follows that \eqref{eq:asympint} holds, as desired.
	
	We make a brief comment regarding the argument for $\Lambda_2$. We follow the same reasoning as for $\Lambda_1$, but with $h_2(t)=-\log(1+pt)$ and $h_1(t)=p\log(1-t)$. We reduce the problem to showing that $\int_0^\epsilon y e^{-\frac{pt}{1+pt}y}dt$ is bounded by a constant for some $\epsilon<1$. This is can be seen to be true by noting that
	\[
		\int_0^\epsilon y e^{-\frac{pt}{1+pt}y}dt < \int_0^1 y e^{-\frac{pt}{2}y}dt<\int_0^\infty y e^{-\frac{pt}{2}y}dt=2/p=O(1).
	\]
	
	To finalize the derivation, we make use of the asymptotic expansion for the log-gamma function \cite[Sections 6.1.41 and 6.1.42]{AS65}
	\begin{equation}\label{eq:exploggamma}
	\log \Gamma(z)=(z-1/2)\log z - z + O(1)
	\end{equation}
	when $z\geq c>0$ for some constant $c>0$ (the hidden constant in \eqref{eq:exploggamma} depends on $c$). Taking into account \eqref{def:g}, we can apply \eqref{eq:exploggamma} to $\log(y-1)!=\log\Gamma(y)$, $\Lambda_1(y)$, and $\Lambda_2(y)$ (by recalling \eqref{eq:asymplambda}) to obtain
	\begin{align*}
	yh(p)-g(y)&=\frac{1}{2}\log y+yp\log\left(\frac{1+yp}{yp}\right)+O(1)\\
	&=\frac{1}{2}\log y+O(1),
	\end{align*}
	
	which concludes the proof.
\end{proof}

From the results of this section, it follows that $\Yq$ is a valid distribution and that
\begin{equation*}
\KL(Y_x||\Yq)=-\log y_0-\mathds{E}[Y_x]\log q
\end{equation*}
for all $x\geq 1$. Therefore, $\Yq$ achieves zero KL-gap for all $x\geq 1$ and $q\in(0,1)$.

Using Theorem~\ref{thm:duality}, we conclude that
\begin{equation}\label{eq:boundmugeom}
\Ca'_\mu(D_1)\leq \inf_{q\in(0,1)}(-\log y_0 -\mu \log q)
\end{equation}
for all $\mu\geq 1$.

Finally, we point out that Lemma~\ref{lem:asymp} implies that, given any $\mu\geq 1$, there is $q\in(0,1)$ such that $\mathds{E}[\Yq]=\mu$. This will lead to easier to compute, but still tight, capacity upper bounds in Section \ref{sec:boundsticky}.

\subsection{Bounds for the geometric sticky channel}\label{sec:boundsticky}

In this section, we derive an analytical capacity upper bound for the geometric sticky channel by combining the family of distributions $\Yq$ designed in Section \ref{sec:zerogapsticky} with Theorem~\ref{thm:duality} and \eqref{eq:stickychar}, and compare it to the known numerical lower and upper bounds from \cite{Mit08,MTL12}.

In order to apply \eqref{eq:stickychar}, note that $\lambda=\mathds{E}[D_1]=1/(1-p)$. The bound follows by combining \eqref{eq:stickychar} with Theorem~\ref{thm:duality} and \eqref{eq:boundmugeom}.
\begin{coro}
	For every $p\in (0,1)$, we have
	\begin{align}
	\Ca(D_0)&\leq\sup_{\mu\geq 1/(1-p)} \frac{\inf_{q\in(0,1)}(-\log y_0-\mu\log q)}{\mu (1-p)}\\
	&\leq \sup_{q\in(0,1):\atop\mathds{E}[\Yq]\geq 1/(1-p)}\frac{-\log y_0-\mathds{E}[\Yq]\log q}{\mathds{E}[\Yq](1-p)},\label{eq:boundgeomsticky}
	\end{align}
	where
	\begin{align*}
	1/y_0&=\sum_{y=1}^\infty q^y e^{\log(y-1)!-\Lambda_1(y)-\Lambda_2(y)-yh(p)}<\infty,\\
	\mathds{E}[\Yq]&=\sum_{y=1}^\infty y_0yq^y e^{\log(y-1)!-\Lambda_1(y)-\Lambda_2(y)-yh(p)},
	\end{align*}
	with $\Lambda_1$ and $\Lambda_2$ defined as in \eqref{eq:lambdasticky}.
\end{coro}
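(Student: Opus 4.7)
The plan is to assemble the corollary from three ingredients already developed in the excerpt: the exact reduction \eqref{eq:stickychar} for sticky channels, the KL-divergence bound \eqref{eq:boundmugeom} on the mean-limited capacity $\Ca'_\mu(D_1)$, and the observation stated right after Lemma~\ref{lem:asymp} that for every target mean $\mu \geq 1$ there exists $q \in (0,1)$ with $\E[\Yq]=\mu$.

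First, I would specialize \eqref{eq:stickychar} to the geometric sticky channel. Since $D_1$ is geometric on $\{1,2,\ldots\}$ with success probability $1-p$, we have $\lambda=\E[D_1]=1/(1-p)$, so
\[
\Ca(D_1) \;=\; \frac{1}{1-p}\sup_{\mu\geq 1/(1-p)} \frac{\Ca'_\mu(D_1)}{\mu}.
\]
Substituting the upper bound from \eqref{eq:boundmugeom}, which is valid for every $\mu\geq 1$ and therefore in particular for every $\mu\geq 1/(1-p)$, and absorbing the $1/(1-p)$ factor into the denominator yields the first inequality of the corollary.

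For the second inequality, I would remove the inner infimum by a non-optimal choice of $q$. Given $\mu\geq 1/(1-p)$, let $q_\mu\in(0,1)$ be any value such that $\E[\Yq_\mu]=\mu$; existence is guaranteed by the remark following Lemma~\ref{lem:asymp} (continuity and monotonicity of $q\mapsto\E[\Yq]$, together with the asymptotic $\Yq(y)=\Theta(q^y/\sqrt{y})$ supplied by that lemma, imply that $q\mapsto\E[\Yq]$ is a continuous bijection from $(0,1)$ onto $[1,\infty)$). Bounding the infimum by the value at $q_\mu$ and using $\mu=\E[\Yq_\mu]$ gives
\[
\inf_{q\in(0,1)}\bigl(-\log y_0 - \mu\log q\bigr) \;\leq\; -\log y_0(q_\mu) - \E[\Yq_\mu]\log q_\mu.
\]
Dividing by $\mu(1-p)=\E[\Yq_\mu](1-p)$ and taking a supremum, the parameter $\mu$ drops out entirely: as $\mu$ ranges over $[1/(1-p),\infty)$, the corresponding $q_\mu$ traces out exactly those $q\in(0,1)$ with $\E[\Yq]\geq 1/(1-p)$, giving the stated bound. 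The closed-form expressions for $1/y_0$ and $\E[\Yq]$ are then simply the definition of $\Yq$ with the choice of $g$ from \eqref{def:g}, summed as $\sum_{y\geq 1}\Yq(y)=1$ and $\sum_{y\geq 1}y\Yq(y)=\E[\Yq]$ respectively.

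There is no substantive obstacle here; the corollary is a consolidation of work already done. The only mildly delicate point is the surjectivity of $q\mapsto\E[\Yq]$, but this is handled by the Theta asymptotics from Lemma~\ref{lem:asymp} together with monotone convergence as $q\to 0^+$ and $q\to 1^-$.
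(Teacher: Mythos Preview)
Your proposal is correct and follows exactly the paper's approach: combine \eqref{eq:stickychar} (with $\lambda=1/(1-p)$) and \eqref{eq:boundmugeom} for the first inequality, then for each $\mu$ pick the $q$ with $\E[\Yq]=\mu$ (existence via Lemma~\ref{lem:asymp}) to pass to the second. The paper's own justification is little more than a one-line remark to this effect, so you have supplied more detail than the original; the only nitpick is that the image of $q\mapsto\E[\Yq]$ is $(1,\infty)$ rather than $[1,\infty)$, but since $1/(1-p)>1$ this does not matter.
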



We remark that \eqref{eq:boundgeomsticky} is obtained by choosing, for each $\mu\geq 1/(1-p)\geq 1$, the value of $q\in (0,1)$ that satisfies $\mathds{E}[\Yq]=\mu$. Lemma~\ref{lem:asymp} ensures that such $q$ always exists for every $\mu\geq 1$.

Table \ref{table:compsticky} compares the results obtained via the analytical capacity upper bound \eqref{eq:boundgeomsticky} with the numerical bounds from \cite{MTL12}. The lower bound from \cite{MTL12} is obtained by numerically optimizing the achievable rate of codebooks generated by 4th order Markov chains. The upper bound from \cite{MTL12} is obtained via a combination of Theorem~\ref{thm:ag} and a variant of the Blahut-Arimoto algorithm. In contrast, our bound has an analytical expression and is derived without computer assistance other than maximizing a concave function over $(0,1)$. Figure~\ref{fig:geomstickycomp} plots the numerical capacity upper bound from \cite{MTL12} and the analytical upper bound \eqref{eq:boundgeomsticky}.

It is also instructive to analyze the behavior of the function inside the supremum in \eqref{eq:boundgeomsticky}. Figure~\ref{fig:geomstickycurves} showcases the behavior of this inner function for some values of $p$. As can be observed, the inner function is concave whenever it is non-negative.

We see that, for $p\leq 0.4$, we are off the numerical upper bound by less than $10^{-6}$. In fact, the error for $p\leq 0.5$ is still less than $10^{-5}$. This shows that our analytical bound is extremely tight whenever $p\leq 0.5$. We also improve over the numerical upper bound for $p=0.15$. However, the bound degrades when $p$ is large; When $p=0.85$, the difference between the analytical and numerical bound is of approximately $0.0117$. For $p\geq 0.9$, the bound increases.

\begin{table}[h]
	\caption{Comparison between the numerical capacity bounds for the geometric sticky channel from \cite{MTL12} and the upper bound \eqref{eq:boundgeomsticky} in bits/channel use.}\label{table:compsticky}
	\centering
	\begin{tabular}{ |c|c|c|c| }
		\hline
		$p$ & Lower bound \cite{MTL12} & Upper bound \cite{MTL12} & Upper bound \eqref{eq:boundgeomsticky}\\ 
		\hline
		0.05 & 0.814457 & 0.814464 &0.814464\\ 
		\hline
		0.10 & 0.714096 & 0.714114 &0.714114\\ 
		\hline
		0.15 & 0.640901 & 0.643267 &0.640930\\ 
		\hline
		0.20 & 0.583575 & 0.583611 &0.583611\\ 
		\hline
		0.25 & 0.537038 & 0.537076 &0.537076\\ 
		\hline
		0.30 & 0.498427 &  0.498463 & 0.498463\\ 
		\hline
		0.35 & 0.465925 & 0.465957 &0.465957\\ 
		\hline
		0.40 & 0.438291 & 0.438318 &0.438318\\ 
		\hline
		0.45 & 0.414637 & 0.414659 &0.414660\\ 
		\hline
		0.50 & 0.394311 & 0.394331 &0.394333\\ 
		\hline
		0.55 & 0.376821 & 0.376849 &0.376855\\ 
		\hline
		0.60 & 0.361775 & 0.361794 &0.361875\\ 
		\hline
		0.65 & 0.348491 & 0.348575 &0.349152\\ 
		\hline
		0.70 & 0.336593 & 0.336946 &0.338551\\ 
		\hline
		0.75 & 0.325900 & 0.326678 &0.330062\\ 
		\hline
		0.80 & 0.316257 & 0.317317 &0.323856\\ 
		\hline
		0.85 & 0.307560 & 0.308767 &0.320448\\ 
		\hline
		0.90 & 0.299601 & 0.300952 &0.321210\\ 
		\hline
		0.95 & 0.292373 & 0.293788 &0.330824\\ 
		\hline
		0.99 & 0.287036 & 0.288476 &0.368459\\ 
		\hline
	\end{tabular}
\end{table}

\begin{figure}
	\centering
	\includegraphics[width=0.7\textwidth]{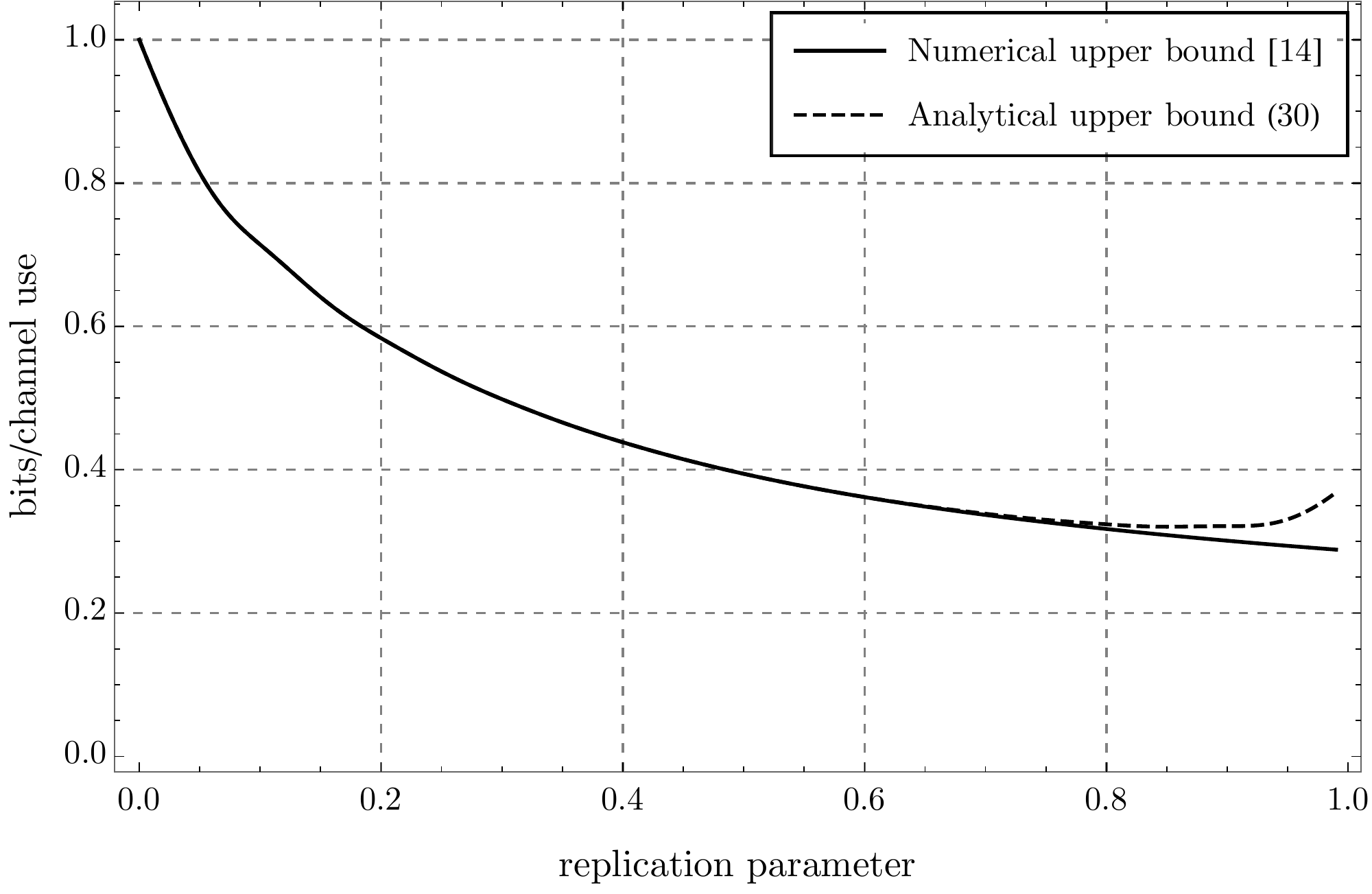}
	\caption{Plot of the numerical capacity upper bound from \cite{MTL12} for the geometric sticky channel and the analytical capacity upper bound \eqref{eq:boundgeomsticky}.}
	\label{fig:geomstickycomp}
\end{figure}

\begin{figure}
	\centering
	\includegraphics[width=0.7\textwidth]{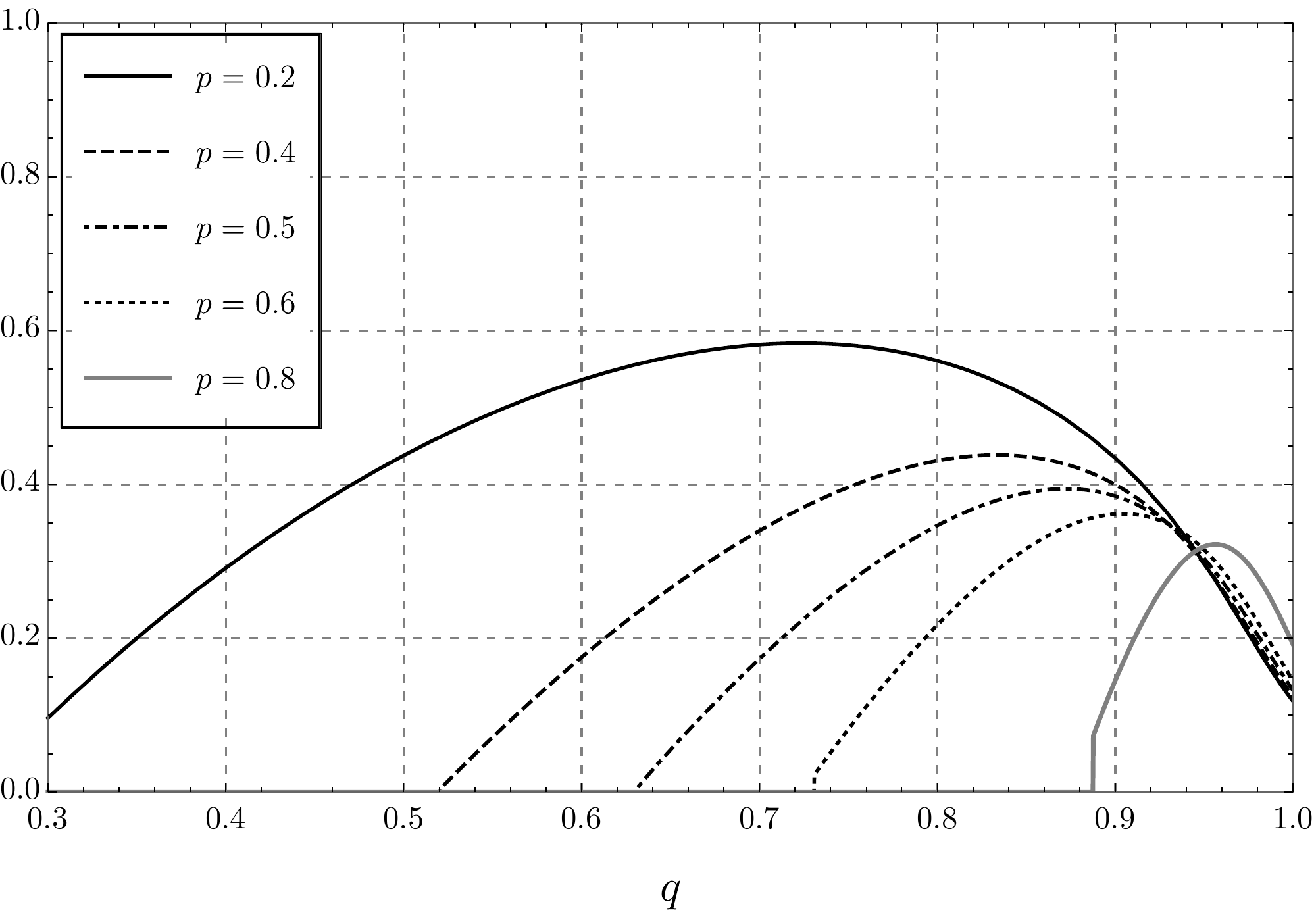}
	\caption{Function inside the supremum in \eqref{eq:boundgeomsticky} for some values of $p$. The zone where the function is zero corresponds to the cases where $\mathds{E}[\Yq]<\frac{1}{1-p}$.}
	\label{fig:geomstickycurves}
\end{figure}

\section{The elementary duplication channel}\label{sec:dupl}

In this section, we study the capacity of the elementary duplication channel. Recall that this channel duplicates each input bit with some probability $p$. More precisely, each input is duplicated according to the distribution $D$ satisfying
\[
D(y)=\begin{cases}
1-p,&\text{ if $y=1$},\\
p,& \text{ if $y=2$},\\
0,&\text{ else.}
\end{cases}
\]

By \eqref{eq:stickychar}, it suffices to study the capacity of the channel which on input $x\in\{1,2,\dots\}$ outputs
\[
Y_x=\sum_{i=1}^x D_i,
\]
where the $D_i$ are i.i.d.\ according to $D$. It follows that
\[
Y_x=x+\mathsf{Bin}_{x,p},
\]
where $\mathsf{Bin}_{x,p}$ denotes the binomial distribution with $x$ trials and success probability $p$; i.e.,
\[
\mathsf{Bin}_{x,p}(y)=\binom{x}{y}(1-p)^{x-y}p^y,\quad y=0,1,\dots,x.
\]
As a result, we have
\begin{equation}\label{eq:yxdupl}
Y_x(y)=\binom{x}{y-x}(1-p)^{2x-y}p^{y-x},\quad y=x,x+1,\dots,2x.
\end{equation}

Our results in this section have a similar flavor to those obtained for the geometric sticky channel in Section \ref{sec:geomsticky}. In particular, we analytically derive a distribution for Theorem~\ref{thm:duality} with zero KL-gap for the elementary duplication channel, and compare the capacity upper bounds obtained with the previously known numerical capacity bounds.

\subsection{A distribution with zero KL-gap for the elementary duplication channel}\label{sec:zerogapdupl}

In this section, we derive a family of distributions with zero KL-gap for the elementary duplication channel. The reasoning behind their design is very similar to what was already discussed in detail in Section \ref{sec:geomsticky}. As a result, we will keep this discussion short.

Recall \eqref{eq:yxdupl} and note that $\mathds{E}[Y_x]=x(1+p)$. Then,
\begin{align*}
-H(Y_x)&=\mathds{E}\left[\log\binom{x}{Y_x-x}\right]-\mathds{E}[Y_x]\frac{h(p)}{1+p}\\
&=\log x! - \mathds{E}[\log(Y_x-x)!]-\mathds{E}[\log(2x-Y_x)!] -\mathds{E}[Y_x]\frac{h(p)}{1+p}.
\end{align*}
By Lemma~\ref{lem:intlog}, and observing that the probability generating function of $Y_x$ is
\[
(z(1-p+pz))^x,
\] 
we have\footnote{Note that we do not require Lemma~\ref{lem:fubini} in this section, since the sum in the expected value ranges only over a finite set.}
\begin{align*}
\log x!&=\int_0^1 \frac{1-tx-(1-t)^x}{t\log(1-t)}dt,\\
\mathds{E}[\log(Y_x-x)!]&=\int_0^1 \frac{1-txp-(1-pt)^x}{t\log(1-t)}dt,\\
\mathds{E}[\log(2x-Y_x)!]&=\int_0^1 \frac{1-tx(1-p)-(1-(1-p)t)^x}{t\log(1-t)}dt.
\end{align*}
Consider the functions
\begin{align}\label{eq:fdupl}
f_1(y,t)&=\frac{1-\frac{ty}{1+p}}{t\log(1-t)}-\frac{\left(\sqrt{(1+p)^2-4pt}-(1-p)\right)^y}{(2p)^{y}t\log(1-t)},\nonumber\\
f_2(y,t)&=\frac{1-\frac{typ}{1+p}}{t\log(1-t)}-\frac{\left(\sqrt{(1+p)^2-4p^2t}-(1-p)\right)^y}{(2p)^{y}t\log(1-t)},\\
f_3(y,t)&=\frac{1-\frac{ty(1-p)}{1+p}}{t\log(1-t)}-\frac{\left(\sqrt{(1+p)^2-4p(1-p)t}-(1-p)\right)^y}{(2p)^{y}t\log(1-t)}.\nonumber
\end{align}
It is straightforward to see, using the probability generating function of $Y_x$, that
\begin{align*}
\mathds{E}[f_1(Y_x,t)]&=\frac{1-tx-(1-t)^x}{t\log(1-t)},\\
\mathds{E}[f_2(Y_x,t)]&=\frac{1-txp-(1-pt)^x}{t\log(1-t)},\\
\mathds{E}[f_3(Y_x,t)]&=\frac{1-tx(1-p)-(1-(1-p)t)^x}{t\log(1-t)}.
\end{align*}
Let
\begin{equation}\label{eq:lambdadupl}
\Lambda_i(y)=\int_0^1 f_i(y,t)dt, \quad i=1,2,3
\end{equation}
with $f_i$ defined as in \eqref{eq:fdupl} for $i=1,2,3$. The functions $f_i(p,y,\cdot)$ are clearly continuous in $(0,1)$ for $i=1,2,3$. Furthermore, they have finite limits when $t\to 0$ and $t\to 1$. This means they can be extended by continuity to $[0,1]$, and so are bounded in $(0,1)$. This is enough to guarantee that
\begin{equation}\label{eq:gdupl}
g(y)=\Lambda_1(y)-\Lambda_2(y)-\Lambda_3(y)
\end{equation}
satisfies
\begin{equation}\label{eq:gpropdupl}
\mathds{E}[g(Y_x)]=\log x!-\mathds{E}[\log(Y_x-x)!]-\mathds{E}[\log(2x-Y_x)!]
\end{equation}
for all $x\geq 1$.

It is possible to see, via an argument very similar to the one used in the proof of Lemma~\ref{lem:asymp}, that $\Yq(y)/y_0=\Theta(q^y/\sqrt{y})$, and so
\[
\Yq(y)=y_0 q^y \exp(g(y)-yh(p)/(1+p))
\]
is a valid distribution (i.e., the normalizing factor $y_0$ exists) exactly when $q\in(0,1)$. 


Finally, recalling \eqref{eq:gpropdupl}, we have
\[
\KL(Y_x||\Yq)=-\log y_0-\mathds{E}[Y_x]\log q,
\]
and so $\Yq$ is a valid distribution which has zero KL-gap for the elementary duplication channel.

As a result, we obtain the capacity upper bound
\begin{equation}\label{eq:boundmudupl}
\Ca'_\mu(D)\leq \inf_{q\in(0,1)}(-\log y_0-\mu\log q)
\end{equation}
via Theorem~\ref{thm:duality} for all $\mu\geq 1$.

\subsection{Capacity upper bound for the elementary duplication channel}\label{sec:bounddupl}

In this section, we derive an analytical capacity upper bound for the elementary duplication channel obtained by combining \eqref{eq:stickychar} with Theorem~\ref{thm:duality} and the family of distributions $\Yq$ from Section \ref{sec:zerogapdupl}, and compare it to the numerical capacity bounds from \cite{Mit08}.

Fix $p\in(0,1)$. We begin by observing that, in this case, we have $\lambda=\mathds{E}[D]=1+p$. The bound follows by combining \eqref{eq:stickychar} with Theorem~\ref{thm:duality} and \eqref{eq:boundmudupl}, summarized below.

\begin{coro}
	For every $p\in (0,1)$, we have
	\begin{align}
	\Ca(D)&\leq(1+p)\sup_{\mu\geq 1+p} \frac{\inf_{q\in(0,1)}(-\log y_0-\mu\log q)}{\mu}\\
	&\leq \sup_{q\in(0,1):\atop\mathds{E}[\Yq]\geq 1+p}\frac{(1+p)(-\log y_0-\mathds{E}[\Yq]\log q)}{\mathds{E}[\Yq]},\label{eq:bounddupl}
	\end{align}
	where
	\begin{align*}
	1/y_0&=\sum_{y=1}^\infty q^y e^{\Lambda_1(y)-\Lambda_2(y)-\Lambda_3(y)-yh(p)/(1+p)}<\infty,\\
	\mathds{E}[\Yq]&=\sum_{y=1}^\infty y_0yq^y e^{\Lambda_1(y)-\Lambda_2(y)-\Lambda_3(y)-yh(p)/(1+p)},
	\end{align*}
	with $\Lambda_1$, $\Lambda_2$, and $\Lambda_3$ defined as in \eqref{eq:lambdadupl}.
\end{coro}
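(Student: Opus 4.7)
The plan is to chain together the tools built up earlier in this section and in Section~\ref{sec:red}. First I would compute $\lambda = \mathds{E}[D]$ for the elementary duplication distribution, obtaining $\lambda = (1-p)\cdot 1 + p\cdot 2 = 1+p$. Since $D$ is a sticky replication rule (i.e., $D(0)=0$), the lossless sticky reduction \eqref{eq:stickychar} applies and gives
\[
\Ca(D) = (1+p)\sup_{\mu\geq 1+p}\frac{\Ca'_\mu(D)}{\mu}.
\]
Then I would substitute the mean-limited capacity upper bound \eqref{eq:boundmudupl} that was derived in Section~\ref{sec:zerogapdupl} by invoking Theorem~\ref{thm:duality} with the zero-KL-gap family $\Yq$. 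This substitution produces the first claimed inequality immediately.

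For the second inequality, the idea is to replace the inner infimum over $q$ by a specific well-chosen value that depends on $\mu$. Concretely, for each $\mu\geq 1+p$ I would pick $q=q(\mu)\in(0,1)$ so that $\mathds{E}[\Yq]=\mu$; then
\[
-\log y_0 - \mu\log q = -\log y_0 - \mathds{E}[\Yq]\log q,
\]
which certainly upper bounds the infimum over $q$. Reparametrizing the outer supremum from $\mu$ to $q$ transforms the constraint $\mu\geq 1+p$ into $\mathds{E}[\Yq]\geq 1+p$ and yields the displayed expression. The formulas for $1/y_0$ and $\mathds{E}[\Yq]$ are then read off from the definition $\Yq(y) = y_0 q^y \exp(g(y) - yh(p)/(1+p))$ with $g$ as in \eqref{eq:gdupl}.

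The only substantive step is justifying that such a $q$ exists for every $\mu\geq 1+p$. I expect this to follow from the asymptotic analysis that was already carried out in Section~\ref{sec:zerogapdupl}: an argument analogous to Lemma~\ref{lem:asymp} shows $\Yq(y)/y_0 = \Theta(q^y/\sqrt{y})$, so $\Yq$ is a well-defined probability distribution precisely when $q\in(0,1)$. From the explicit exponential-family form of $\Yq$ one then sees that $\mathds{E}[\Yq]$ is continuous and strictly increasing in $q$, tending to $1$ as $q\to 0^+$ and to $+\infty$ as $q\to 1^-$. Since $1+p\geq 1$, the range $[1+p,\infty)$ is covered, so a suitable $q=q(\mu)$ exists for every relevant $\mu$. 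The main potential obstacle is therefore not in the corollary itself but in this continuity and surjectivity claim, which is essentially a routine consequence of the asymptotics already established in Section~\ref{sec:zerogapdupl}; no new analytic input is needed to finish the argument.
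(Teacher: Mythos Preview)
Your proposal is correct and follows essentially the same approach as the paper: compute $\lambda=1+p$, apply the sticky reduction \eqref{eq:stickychar}, substitute the bound \eqref{eq:boundmudupl} from Theorem~\ref{thm:duality}, and then for the second inequality choose $q$ so that $\mathds{E}[\Yq]=\mu$, justified by the asymptotic $\Yq(y)/y_0=\Theta(q^y/\sqrt{y})$. You supply somewhat more detail on the continuity and surjectivity of $q\mapsto\mathds{E}[\Yq]$ than the paper does, but the argument is the same.
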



As in \eqref{eq:boundgeomsticky}, we obtain \eqref{eq:bounddupl} by choosing, for each $\mu\geq 1+p\geq 1$, the value of $q$ such that $\mathds{E}[\Yq]=\mu$. This is guaranteed by the fact that, similarly to Lemma~\ref{lem:asymp}, we have $\Yq(y)/y_0=\Theta(q^y/\sqrt{y})$, as was already mentioned.

Table \ref{table:compdupl} compares the analytical capacity upper bound obtained via \eqref{eq:bounddupl} with the explicit data points of the numerical bounds for the elementary duplication channel in \cite{Mit08}, which are rounded to four decimal digits. Figure~\ref{fig:duplcomp} plots the numerical capacity upper bound from \cite{Mit08} and the analytical upper bound \eqref{eq:bounddupl}.

Unlike the capacity upper bound we obtained for the geometric sticky channel (see Section \ref{sec:boundsticky}), we see that \eqref{eq:bounddupl} is only tight for small $p$, and becomes trivial if $p$ is too large. Nevertheless, we are still able to improve on the numerical upper bound from \cite{Mit08} for, say, $p=0.2$.

\begin{table}[h]
	\caption{Comparison between the numerical capacity bounds for the elementary duplication channel from \cite{Mit08} and the upper bound \eqref{eq:bounddupl} in bits/channel use.}\label{table:compdupl}
	\centering
	\begin{tabular}{ |c|c|c|c| }
		\hline
		$p$ & Lower bound \cite{Mit08} & Upper bound \cite{Mit08} & Upper bound  \eqref{eq:bounddupl}\\ 
		\hline
		0.1 & 0.7405& 0.7406 &0.7406\\ 
		\hline
		0.2 & 0.6611 & 0.6618 &0.6611\\ 
		\hline
		0.3 & 0.6400 &  0.6404 & 0.6419\\ 
		\hline
		0.4 & 0.6488 & 0.6499 &0.6625\\ 
		\hline
		0.5 & 0.6788 & 0.6797 &0.7182\\ 
		\hline
		0.6 & 0.7273 & 0.7277 &0.8126\\ 
		\hline
		0.7 & 0.7914 & 0.7915 &0.9553\\ 
		\hline
		0.8 & 0.8674 & 0.8675 &$>1$\\ 
		\hline
		0.9 & 0.9469 & 0.9479 &$>1$\\ 
		\hline
	\end{tabular}
\end{table}

\begin{figure}
	\centering
	\includegraphics[width=0.7\textwidth]{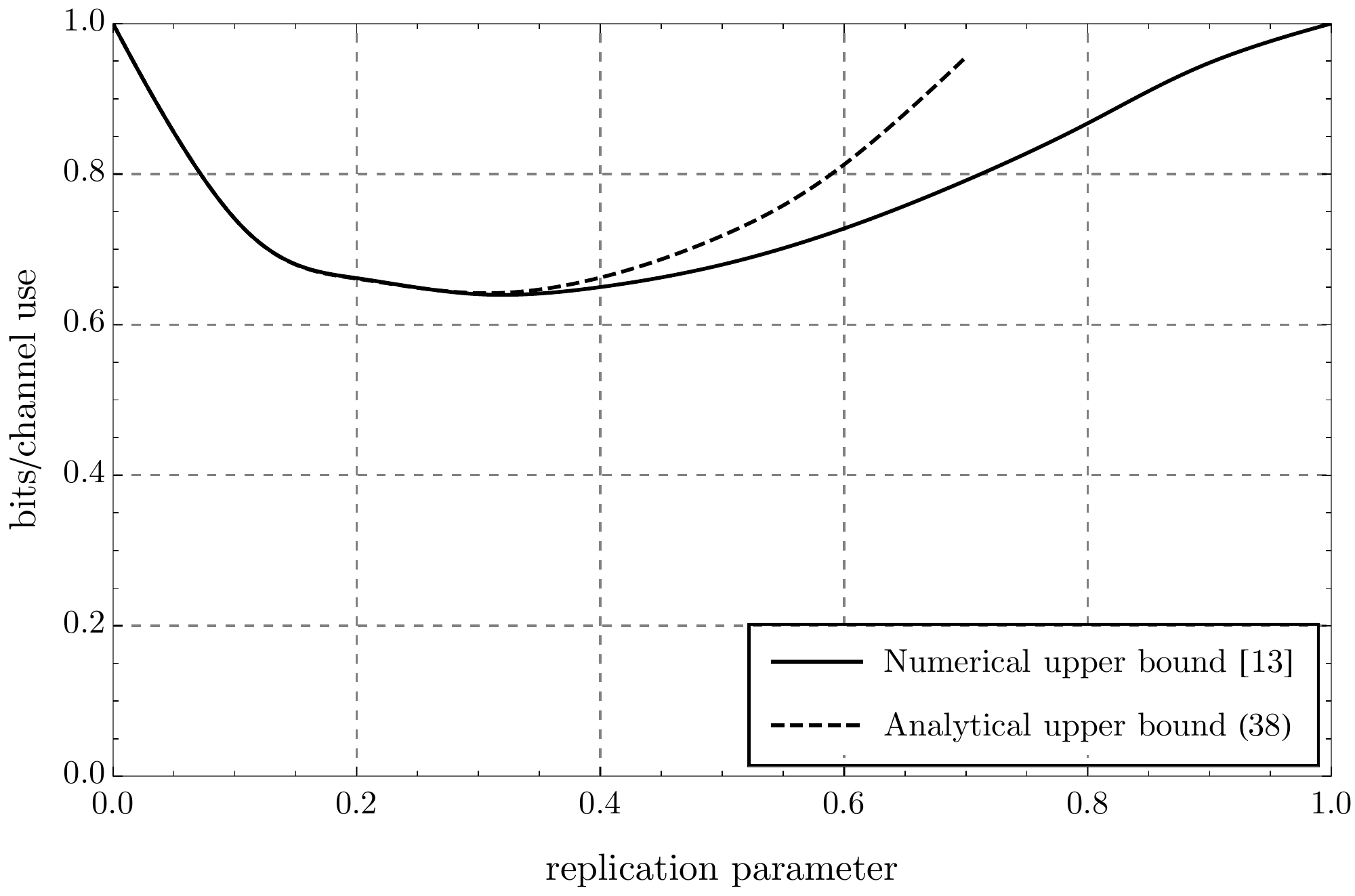}
	\caption{Plot of the numerical capacity upper bound from \cite{Mit08} for the elementary duplication channel and the analytical capacity upper bound \eqref{eq:bounddupl}.}
	\label{fig:duplcomp}
\end{figure}

\begin{figure}
	\centering
	\includegraphics[width=0.7\textwidth]{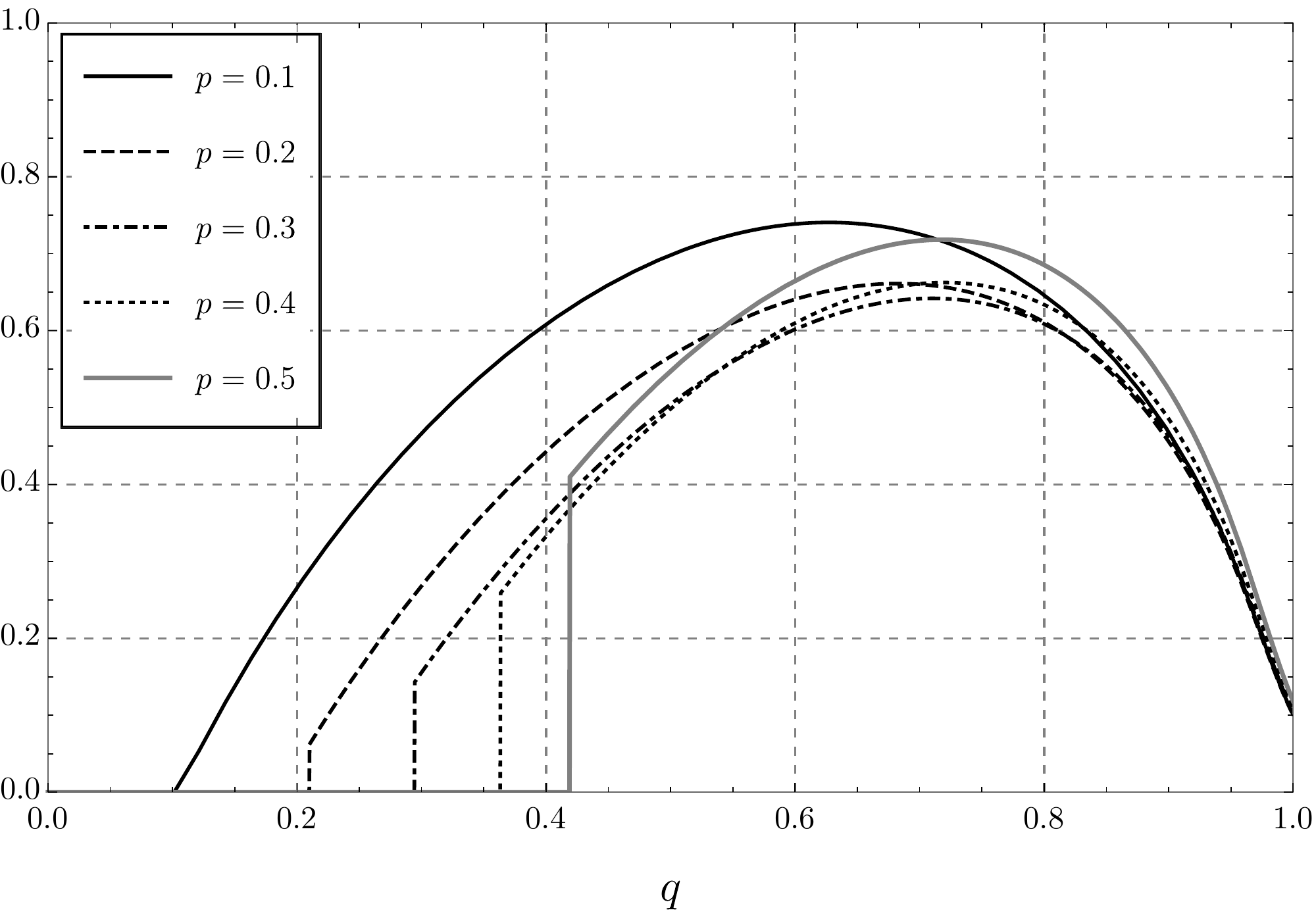}
	\caption{Function inside the supremum in \eqref{eq:bounddupl} for some values of $p$. The zone where the function is zero corresponds to the cases where $\mathds{E}[\Yq]<1+p$.}
	\label{fig:duplcurves}
\end{figure}

\section{Geometric replications with deletions}\label{sec:geomdel}

In this section, we study the capacity of a channel that combines deletions with geometric replications, which we call the \emph{geometric deletion channel}. This channel independently replicates each input bit according to a geometric distribution with support on $\{0,1,2,\dots\}$. More precisely, each input bit $x_i$ is replaced by $D_{0i}$ copies of its value at the output, where the $D_{0i}$ are i.i.d.\ according to $D_0$ satisfying
\[
D_0(y)=(1-p)p^y,\quad y=0,1,2,\dots
\]
where $p$ is the replication parameter. Recall that in the model from \cite{MTL12}, this channel corresponds to the case where $p_d=1-p_t$, where $p_d$ is the deletion probability and $p_t$ is the replication probability in any given round.

We specialize Theorem~\ref{thm:red} for the $0$-geometric channel. In this case we have $\overline{D_0}=1+D_0$, and as a result, for $x\in\{0,1,2,\dots\}$,
\[
\overline{D_0}+\sum_{i=1}^x D_{0i} = 1+\NB_{1+x,p},
\]
where, as before, $\NB_{r,p}$ denotes the negative binomial distribution with $r$ failures and success probability $p$.

Therefore, $\Ch'_\mu(D_0)$ is the mean-limited channel which on input $x\in\{1,2,\dots\}$ outputs
\[
Z_x=1+\NB_{x,p}.
\]
For convenience, we will work with a slightly modified channel. Note that the capacity of $\Ch'_\mu(D_0)$ is equal to the capacity of the channel $\Ch''_{\mu-1}(D_0)$ which on input $x\in\{1,2,\dots\}$ outputs 
\[
Y_x=\NB_{x,p}
\]
with output mean constraint $\mu-1$. We name this channel the \emph{negative binomial channel}. The output mean constraint changes from $\mu$ to $\mu-1$ because for the same input $x$ we have $Y_x=Z_x-1$. Finally, note that $\lambda=\mathds{E}[D]=\frac{p}{1-p}$, $\overline{\lambda}=\mathds{E}[\overline{D}]=\frac{1}{1-p}$. Letting $\Ca''_{\mu-1}(D_0)$ denote the capacity of $\Ch''_{\mu-1}(D_0)$ yields the following specialized version of Theorem~\ref{thm:duality}.
\begin{coro}\label{cor:geombound}
	We have
	\begin{equation*}
	\Ca(D_0)\leq \frac{p}{1-p}\sup_{\mu\geq 1/(1-p)}\frac{\Ca''_{\mu-1}(D_0)}{\mu}.
	\end{equation*}
\end{coro}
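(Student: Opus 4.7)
The plan is to derive the bound as a direct specialization of Theorem~\ref{thm:red} to $D = D_0$, followed by a trivial identification of $\Ca'_\mu(D_0)$ with $\Ca''_{\mu-1}(D_0)$. Essentially all of the structural work has already been done in the paragraph preceding the statement, so the argument reduces to a parameter substitution plus a mutual-information-preserving shift.

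First I would invoke Theorem~\ref{thm:red} with $D = D_0$. The relevant quantities are $\lambda = \mathds{E}[D_0] = p/(1-p)$, $\overline{\lambda} = \mathds{E}[\overline{D_0}] = 1/(1-p)$, and $1-D_0(0) = p$; this last identity is what permits the symbol $p$ appearing as ``$1-D(0)$'' in the statement of Theorem~\ref{thm:red} to be reused as the replication parameter of the geometric deletion channel. Plugging these values into the denominator of Theorem~\ref{thm:red} gives
\[
\frac{1}{p} + \frac{\mu - \tfrac{1}{1-p}}{\tfrac{p}{1-p}} = \frac{1}{p} + \frac{(1-p)\mu - 1}{p} = \frac{(1-p)\mu}{p}.
\]
Thus Theorem~\ref{thm:red} yields $\Ca(D_0) \leq \frac{p}{1-p} \sup_{\mu \geq 1/(1-p)} \Ca'_\mu(D_0)/\mu$.

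It remains to identify $\Ca'_\mu(D_0)$ with $\Ca''_{\mu-1}(D_0)$. By construction, for every input $x \in \{1,2,\dots\}$ the outputs satisfy $Y_x = Z_x - 1$, so $\Ch''(D_0)$ is obtained from $\Ch'(D_0)$ by a deterministic constant shift of the output. Such a shift preserves mutual information, $I(X;Z) = I(X;Y)$ for every input distribution $X$, and translates the output-mean constraint by exactly $1$, so that $\mathds{E}[Z] = \mu$ is equivalent to $\mathds{E}[Y] = \mu - 1$. Taking a supremum over admissible input distributions on both sides gives $\Ca'_\mu(D_0) = \Ca''_{\mu-1}(D_0)$; substituting into the previous display yields the corollary.

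There is no substantial obstacle in this argument. The only pitfall worth being careful about is the bookkeeping around the reindexing $x \mapsto 1+x$ in Theorem~\ref{thm:red} and the simultaneous use of the letter $p$ for two a priori different quantities (the replication parameter of $D_0$ and the quantity $1-D(0)$ in the statement of Theorem~\ref{thm:red}), which happen to coincide here thanks to the specific form of the geometric distribution on $\{0,1,2,\dots\}$.
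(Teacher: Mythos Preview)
Your proposal is correct and follows exactly the approach taken in the paper: the paragraph immediately preceding the corollary computes $\lambda=p/(1-p)$, $\overline\lambda=1/(1-p)$, $1-D_0(0)=p$, and observes that $\Ca'_\mu(D_0)=\Ca''_{\mu-1}(D_0)$ via the deterministic shift $Y_x=Z_x-1$, after which the corollary is just a substitution into Theorem~\ref{thm:red}. Your remark about the two uses of the symbol $p$ coinciding is a helpful clarification that the paper leaves implicit.
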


In the following sections we will focus on upper bounding the capacity of the negative binomial channel via Theorem~\ref{thm:duality}.

\subsection{A bound via convexity}\label{sec:firstbound}

In this section, we obtain a capacity upper bound for the negative binomial channel by following a reasoning similar to the one used to derive capacity upper bounds for the deletion channel in \cite{Che17}. For convenience, we define $d=1-p$.

As previously observed, we can write
\begin{align}\label{eq:KL}
D_\mathsf{KL}(Y_x||Y)&=\sum_{y=0}^\infty Y_x(y)\log\frac{Y_x(y)}{Y(y)}\nonumber\\
&=-H(Y_x)-\sum_{y=0}^\infty Y_x(y)\log Y(y).
\end{align}
Furthermore, recalling that $Y_x= \NB_{x,p}$ and from the fact that $\mathds{E}[Y_x]=\frac{xp}{1-p}$, we have
\begin{align}\label{eq:entNB}
-H(Y_x)&=\sum_{y=0}^\infty Y_x(y)\log\left(\binom{y+x-1}{y}d^x p^y\right) \nonumber \\
&=\mathds{E}\left[\log\binom{Y_x+x-1}{Y_x}\right]+x\log d +\mathds{E}[Y_x]\log p\nonumber\\
&=\mathds{E}\left[\log\binom{Y_x+x-1}{Y_x}\right]-\mathds{E}[Y_x]\frac{h(p)}{p}.
\end{align}

We consider a family of distributions $Y^{(q)}$ for $q\in(0,1)$ of the form
\begin{equation*}
Y^{(q)}(y)=y_0\binom{g(y)}{y}q^y \exp(-yh(p)/p),\quad y=0,1,2,\dots
\end{equation*}
for a function $g$ to be defined, where 
\begin{equation}\label{eq:normfactor}
y_0=\left(\sum_{y=0}^\infty Y^{(q)}(y)/y_0\right)^{-1}
\end{equation}
is the normalizing factor. Instantiating $Y$ with $Y^{(q)}$ leads to
\begin{align}\label{eq:KLY}
D_\mathsf{KL}(Y_x||Y^{(q)})&=\mathds{E}\left[\log\binom{Y_x+x-1}{Y_x}\right]+\mathds{E}[Y_x]\frac{h(p)}{p} \nonumber\\ 
&-\sum_{y=0}^\infty Y_x(y)\log\left(y_0\binom{g(y)}{y}q^y \exp(-yh(p)/p)\right)\nonumber\\
&=\mathds{E}\left[\log\frac{\binom{Y_x+x-1}{Y_x}}{\binom{g(Y_x)}{Y_x}}\right]-\log y_0 -\mathds{E}[Y_x]\log q.
\end{align}

Equipped with some insight, we want to choose $g$ such that
\begin{equation}\label{eq:gcond}
g(\mathds{E}[Y_x])=\mathds{E}[Y_x]+x-1,
\end{equation}
which can be accomplished by setting $g(y)=y/p-1$. This leads to the expression
\begin{equation}\label{eq:Yq}
Y^{(q)}(y)=y_0\binom{y/p-1}{y}q^y \exp(-yh(p)/p)
\end{equation}
It is straightforward to see that $\Yq$ is a valid distribution for all $q\in(0,1)$, i.e., $1/y_0<\infty$, by using the asymptotic expression for $\binom{y/p-1}{y}$ obtained via Stirling's approximation.

Combining \eqref{eq:KLY} and \eqref{eq:Yq}, we obtain
\begin{equation}\label{ineq:KLline}
D_\mathsf{KL}(Y_x||Y^{(q)})\leq -\eps(p)-\log y_0-\mathds{E}[Y_x]\log q
\end{equation}
for all integers $x\geq 1$, where
\begin{equation}\label{eq:KLgap}
\eps(p)=\inf_{x\geq 1} \mathds{E}\left[\log\frac{\binom{Y_x/p-1}{Y_x}}{\binom{Y_x+x-1}{Y_x}}\right].
\end{equation}

As we shall see, we can always replace $\eps(p)$ by $0$ in \eqref{ineq:KLline} to obtain a valid upper bound. In order to prove this, we first need an auxiliary result from \cite{KP16}.
\begin{lem}[\protect{\cite[Lemma~1, specialized]{KP16}}]\label{lem:loggammaquotient}
	Consider the function
	\[
	f(y)=\log\left(\frac{\prod_{i=1}^{k_1}\Gamma(A_i y+a_i)}{\prod_{j=1}^{k_2}\Gamma(B_j y+b_j)}\right).
	\]
	Then, $f$ is convex in $(0,\infty)$ provided that
	\[
	\sum_{i=1}^{k_1}\frac{\exp(-a_i u/A_i)}{1-\exp(-u/A_i)}-\sum_{i=1}^{k_2}\frac{\exp(-b_j u/B_j)}{1-\exp(-u/B_j)}\geq 0
	\]
	for all $u> 0$.
\end{lem}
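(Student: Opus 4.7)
The plan is to establish convexity by showing that $f''(y) \geq 0$ on $(0,\infty)$. Writing $f(y) = \sum_{i=1}^{k_1} \log\Gamma(A_i y + a_i) - \sum_{j=1}^{k_2} \log\Gamma(B_j y + b_j)$ and differentiating twice gives
\begin{equation*}
f''(y) = \sum_{i=1}^{k_1} A_i^2\,\psi'(A_i y + a_i) - \sum_{j=1}^{k_2} B_j^2\,\psi'(B_j y + b_j),
\end{equation*}
where $\psi'$ is the trigamma function. The key analytic tool I would use is the standard integral representation
\begin{equation*}
\psi'(z) = \int_0^\infty \frac{t\,e^{-zt}}{1-e^{-t}}\,dt, \qquad z > 0.
\end{equation*}

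Substituting this into $f''(y)$ and then making the change of variables $u = A_i t$ in each term of the first sum (and $u = B_j t$ in each term of the second), one finds, after simplification, that
\begin{equation*}
f''(y) = \int_0^\infty u\, e^{-yu}\left[\sum_{i=1}^{k_1} \frac{e^{-a_i u/A_i}}{1-e^{-u/A_i}} - \sum_{j=1}^{k_2} \frac{e^{-b_j u/B_j}}{1-e^{-u/B_j}}\right] du.
\end{equation*}
The hypothesis of the lemma asserts exactly that the bracketed quantity is non-negative for all $u > 0$, so the integrand is non-negative and hence $f''(y) \geq 0$ for all $y > 0$, giving the claimed convexity.

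The only delicate step is justifying the term-by-term change of variables and the interchange of summation and integration, but this is routine once one notes that each trigamma integral converges absolutely and the sums in $i$ and $j$ are finite. The main conceptual obstacle is recognizing the right integral representation of $\psi'$ to reveal the pointwise-in-$u$ non-negativity condition; everything else is essentially bookkeeping. Since the full statement (with a general convexity criterion for ratios of products of gamma functions) is proved in \cite{KP16}, I would simply cite their argument and highlight this specialization as the form we actually use in subsequent sections.
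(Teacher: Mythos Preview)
Your proof is correct and is essentially the standard argument behind this result. Note, however, that the paper does not actually prove this lemma at all: it is stated as a specialization of \cite[Lemma~1]{KP16} and simply cited without proof. Your sketch via the trigamma integral representation and the substitution $u=A_i t$ (resp.\ $u=B_j t$) is precisely the kind of argument one finds in the cited source, so there is nothing to compare against in the present paper.
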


We are now ready to prove the desired result.
\begin{lem}\label{lem:epsneg}
	We have $\eps(p)\geq 0$ for all $p\in(0,1)$.
\end{lem}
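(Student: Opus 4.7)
The plan is to extend the integrand in $\epsilon(p)$ to a convex function on $[0,\infty)$ and apply Jensen's inequality, arranging the convex extension to vanish at $y=\mathds{E}[Y_x]$. Fix $x\geq 1$ and, for real $y>0$, set
\[
\tilde F(y):=\log\frac{\Gamma(y/p)\,\Gamma(x)}{\Gamma(y(1-p)/p)\,\Gamma(y+x)}.
\]
Using $\Gamma(y+1)=y!$, one checks that $\tilde F(y)$ equals the integrand $F(y):=\log\binom{y/p-1}{y}-\log\binom{y+x-1}{y}$ for every integer $y\geq 1$, whereas $\Gamma(z)\sim 1/z$ as $z\to 0^+$ yields $\tilde F(0^+)=\log(1-p)<0$, against $F(0)=0$. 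Extending $\tilde F$ continuously to the origin, we thus have the pointwise bound $F(y)\geq\tilde F(y)$ on $\{0,1,2,\dots\}$. Moreover, at $y=\mathds{E}[Y_x]=xp/(1-p)$ the identities $y/p=y+x=x/(1-p)$ and $y(1-p)/p=x$ force the numerator and denominator defining $\tilde F$ to coincide, so $\tilde F(\mathds{E}[Y_x])=0$. Granted convexity of $\tilde F$ on $(0,\infty)$ (hence on $[0,\infty)$ by continuous extension), Jensen's inequality gives $\mathds{E}[\tilde F(Y_x)]\geq\tilde F(\mathds{E}[Y_x])=0$, and combining with $F\geq\tilde F$ on $\mathds{Z}_{\geq 0}$ yields $\mathds{E}[F(Y_x)]\geq 0$ for every $x\geq 1$, so $\epsilon(p)\geq 0$.

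The convexity of $\tilde F$ is the core step; I would establish it by invoking Lemma~\ref{lem:loggammaquotient} with $k_1=1$, $A_1=1/p$, $a_1=0$, $k_2=2$, $B_1=(1-p)/p$, $b_1=0$, $B_2=1$, $b_2=x$. The condition supplied by that lemma becomes
\[
\frac{1}{1-e^{-pu}}-\frac{1}{1-e^{-pu/(1-p)}}-\frac{e^{-xu}}{1-e^{-u}}\geq 0\qquad (u>0).
\]
Because the left-hand side is non-decreasing in $x$, it suffices to verify the case $x=1$, which, after the substitution $t=e^{-u}\in(0,1)$, becomes the purely scalar inequality
\[
\frac{1}{1-t^p}-\frac{1}{1-t^{p/(1-p)}}\geq\frac{t}{1-t}.
\]

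I expect the main obstacle to be this final inequality, which is independent of the channel framework and requires its own argument. My plan is to expand each term via the geometric series $\frac{1}{1-t^c}=\sum_{k\geq 0}t^{ck}$, thereby reducing the statement to
\[
\sum_{k\geq 1}t^{pk}\geq\sum_{k\geq 1}t^k+\sum_{k\geq 1}t^{pk/(1-p)},
\]
and then to prove it by majorization of exponents. Writing $a=p$ and $b=p/(1-p)$, the algebraic identity $a+a/b=p+(1-p)=1$ implies that for every integer $n\geq 1$,
\[
\#\{k\geq 1:k<na\}+\#\{k\geq 1:kb<na\}<na+na/b=n(a+a/b)=n,
\]
each summand being strictly smaller than its continuous upper bound by the elementary fact that $\#\{k\in\mathds{Z}_{\geq 1}:k<y\}<y$ for $y>0$. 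Hence the $n$-th smallest exponent in the multiset $\{k,kb:k\geq 1\}$ is at least $na$, which is the $n$-th smallest exponent on the left. Since $t\in(0,1)$ makes $t^c$ strictly decreasing in $c$, pairing sorted exponents term-by-term yields the desired inequality. The specific structure $b=a/(1-a)$ coming from the geometric replication distribution is precisely what makes this majorization argument work.
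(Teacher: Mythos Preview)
Your argument is correct and follows the same overall architecture as the paper: show the integrand extends to a convex function of $y$, verify it vanishes at $\mathds{E}[Y_x]$, and conclude by Jensen; convexity is obtained via Lemma~\ref{lem:loggammaquotient} and the resulting inequality is reduced to $x=1$. Your handling of the mass at $y=0$ (bounding $F\geq\tilde F$ with $\tilde F$ convex) is a slightly cleaner variant of the paper's observation that $f_x(0)=0>\lim_{y\to0^+}f_x(y)$.

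The only genuine difference is in how the final scalar inequality (your displayed inequality in $t=e^{-u}$) is established. The paper rewrites $P_1(u)$ as $\frac{1}{e^{up}-1}-\frac{1}{e^{up/(1-p)}-1}-\frac{1}{e^u-1}$ and splits via $1=p+(1-p)$ into the two one-parameter inequalities $\frac{p}{e^{up}-1}\geq\frac{1}{e^u-1}$ and $\frac{1-p}{e^{up}-1}\geq\frac{1}{e^{up/(1-p)}-1}$, each of which follows from a short derivative computation. Your route instead expands all three terms as geometric series and proves the resulting inequality by majorization of exponents, the key input being the identity $p+(1-p)=1$ in the form $a+a/b=1$. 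Both proofs are short and elementary; the paper's splitting trick is perhaps more direct, while your majorization argument is a pleasant combinatorial alternative that avoids any calculus.
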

\begin{proof}
	We show that $f_x(y)=\log\left[ \binom{y/p-1}{y}/\binom{y+x-1}{y}\right]$ is convex in $[0,\infty)$ for all $x\geq 1$. This implies the desired result via Jensen's inequality, since, by the choice of $g$ (recall \eqref{eq:gcond}), we have
	\[
		\mathds{E}[f_x(Y_x)]\geq f_x(\mathds{E}[Y_x])=\log 1=0.
	\]
	Therefore, $\eps(p)=\inf_{x\geq 1}\mathds{E}[f_x(Y_x)]\geq 0$.	

	By Lemma~\ref{lem:loggammaquotient}, showing that $f_x$ is convex in $(0,\infty)$ boils down to showing that
	\[
	P_x(u)=\frac{1}{1-e^{-up}}-\frac{1}{1-e^{-up/(1-p)}}-\frac{e^{-ux}}{1-e^{-u}}\geq 0
	\]
	for all $x\geq 1$ and $u>0$. Note that $P_x(u)\geq P_1(u)$ for $x\geq 1$, and that $P_1(u)$ can be rewritten as
	\[
	P_1(u)=\frac{1}{e^{up}-1}-\frac{1}{e^{up/(1-p)}-1}-\frac{1}{e^{u}-1}.
	\]
	Therefore, it suffices to show that
	\begin{equation}\label{eq:split1}
	\frac{1-p}{e^{up}-1}-\frac{1}{e^{up/(1-p)}-1}\geq 0
	\end{equation}
	and
	\begin{equation}\label{eq:split2}
	\frac{p}{e^{up}-1}-\frac{1}{e^{u}-1}\geq 0.
	\end{equation}
	We show only \eqref{eq:split2}, and observe that \eqref{eq:split1} follows in an analogous manner. Rearranging, we want to show that
	\begin{equation}\label{eq:rearrange}
	p(e^u-1)-(e^{up}-1)\geq 0.
	\end{equation}
	Note that the left-hand side of \eqref{eq:rearrange} is 0 at $u=0$, and that its derivative with respect to $u$ is
	\[
	p(e^u-e^{up}),	
	\]
	which is positive for all $u>0$. This yields the desired inequality.

It remains to see that $f_x$ is convex in $[0,\infty)$. Note that $f_x(0)=0$, since $\binom{-1}{0}=1$. Furthermore,
\[
	\lim_{y\to 0^+} \log\binom{y/p-1}{y}=\lim_{y\to 0^+} \log\left(d\binom{y/p}{y}\right)=\log d<0.
\]
This implies that $\lim_{y\to 0^+}f_x(y)=\log d<0$ for all $x\geq 1$. We then have $f_x(0)=0>\lim_{y\to 0}f_x(y)$, which shows that $f_x$ is convex in $[0,\infty)$ (recall we had already shown it was convex in $(0,\infty)$).
\end{proof}

While Lemma~\ref{lem:epsneg} implies that we can replace $\eps(p)$ by 0 in \eqref{ineq:KLline}, it turns out that $\eps(p)$ is actually significantly larger than zero for most values of $p$, and so keeping it in \eqref{ineq:KLline} leads to improved capacity upper bounds for the negative binomial channel.

We are now in a position to apply Theorem~\ref{thm:duality} using \eqref{ineq:KLline}.
\begin{thm}\label{thm:firstUB}
	We have
	\begin{align}
	\Ca''_{\mu}(D_0)&\leq -\eps(p)+\inf_{q\in(0,1)}(-\log y_0-\mu\log q)\label{eq:firstboundeps}\\
	&\leq \inf_{q\in(0,1)}(-\log y_0-\mu\log q).
	\end{align}
\end{thm}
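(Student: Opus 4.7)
The plan is to combine the KL-gap bound \eqref{ineq:KLline} from the previous development with the mean-limited duality result of Theorem~\ref{thm:duality}, and then invoke Lemma~\ref{lem:epsneg} to derive the second, weaker inequality.

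First I would verify that the hypothesis of Theorem~\ref{thm:duality} is met for each fixed $q \in (0,1)$. Inequality \eqref{ineq:KLline} reads
\begin{equation*}
D_{\mathsf{KL}}(Y_x \| Y^{(q)}) \leq (-\log q)\, \mathds{E}[Y_x] + \bigl(-\eps(p) - \log y_0\bigr)
\end{equation*}
for every integer $x \geq 1$, which matches the affine form $a\, \mathds{E}[Y_x] + b$ in Theorem~\ref{thm:duality} with the assignment $a = -\log q$ and $b = -\eps(p) - \log y_0$. Since $Y^{(q)}$ was already shown to be a genuine probability distribution (its normalization is finite for $q \in (0,1)$ by the Stirling-based asymptotics of $\binom{y/p-1}{y}$), applying Theorem~\ref{thm:duality} immediately yields
\begin{equation*}
\Ca''_{\mu}(D_0) \leq -\eps(p) - \log y_0 - \mu \log q.
\end{equation*}

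Next, since the right-hand side is valid for every $q \in (0,1)$, I would take the infimum over this range, which produces the first inequality of the theorem. The second inequality then follows immediately from Lemma~\ref{lem:epsneg}: because $\eps(p) \geq 0$, we have $-\eps(p) \leq 0$, and thus
\begin{equation*}
-\eps(p) + \inf_{q \in (0,1)}(-\log y_0 - \mu \log q) \leq \inf_{q \in (0,1)}(-\log y_0 - \mu \log q).
\end{equation*}

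There is essentially no obstacle here: the theorem is really a direct corollary of the previously established bound \eqref{ineq:KLline}, the convexity-based non-negativity result of Lemma~\ref{lem:epsneg}, and the duality framework of Theorem~\ref{thm:duality}. The only subtlety worth flagging explicitly is that the infimum must be taken \emph{after} applying Theorem~\ref{thm:duality} for each fixed $q$, rather than attempting to optimize $q$ inside the divergence inequality, since Theorem~\ref{thm:duality} requires a single distribution $Y$ against which all $Y_x$ are compared.
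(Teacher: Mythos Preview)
Your proposal is correct and matches the paper's approach exactly: the paper states the theorem immediately after the sentence ``We are now in a position to apply Theorem~\ref{thm:duality} using \eqref{ineq:KLline},'' treating it as a direct consequence of combining \eqref{ineq:KLline} with Theorem~\ref{thm:duality} and then invoking Lemma~\ref{lem:epsneg} for the second inequality. Your write-up is in fact more detailed than what the paper provides.
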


Interestingly, $Y^{(q)}$ is very closely related to the \emph{inverse binomial distribution} defined in \cite{Che17} to obtain capacity upper bounds for the deletion channel. For given $p,q\in(0,1)$, we denote the inverse binomial distribution by $\mathsf{InvBin}_{p,q}$. It satisfies
\begin{equation}
\mathsf{InvBin}_{p,q}(y)=y_{\mathsf{IB}}\binom{y/p}{y}q^y\exp(-y h(p)/p),
\end{equation}
where $y_{\mathsf{IB}}$ is the normalizing factor. Using the equality
\[
\binom{y/p-1}{y}=d\binom{y/p}{y}
\]
valid for all $y\geq1$ and recalling \eqref{eq:Yq}, we conclude that
\begin{equation}\label{eq:relib}
\frac{Y^{(q)}}{y_0}=d\cdot\frac{\mathsf{InvBin}_{p,q}(y)}{y_{\mathsf{IB}}}
\end{equation}
for all $y\geq 1$. This property of $Y^{(q)}$ will prove to be very useful in the following sections, as the normalizing factor and expected value of $\mathsf{InvBin}_{p,q}$ are well understood in terms of both special and elementary functions.

\subsection{A bound via truncation}\label{sec:trunc}

In this section, we design a distribution whose KL-gap converges to 0 exponentially fast as $x$ increases. The process will be similar to that of Sections~\ref{sec:zerogapsticky}~and~\ref{sec:zerogapdupl}, and we will reutilize some arguments. As was the case for the deletion and Poisson-repeat channels in~\cite[Sections 5 and 6]{Che17}, in this case we cannot ensure that the KL-gap is zero. 


We consider a family of distributions $\YYq$, for $q\in(0,1)$, of the form
\begin{equation}\label{eq:yq0geom}
\YYq(y)=\overline{y_0} q^y\exp(g(y)-yh(p)/p), \quad y=0,1,2,\dots
\end{equation}
for some function $g$ to be determined, where $y_0$ is the normalizing factor. Recalling that $Y_x=\NB_{x,p}$ and \eqref{eq:entNB}, we want $g$ to satisfy
\begin{align}
\mathds{E}[g(Y_x)]&=\mathds{E}[\log(Y_x+x-1)!]-\log(x-1)!-\mathds{E}[\log Y_x!]+ R_p(x),
\end{align}
where $R_p(x)\geq 0$ is an error term which vanishes exponentially fast with $x$. Furthermore, we want $g$ to have moderate growth so that $\YYq$ is a valid probability distribution. We note that $g(y)$ can grow at most like $y\log y +O(y)$.

Recalling Lemma~\ref{lem:intlog}, we have
\[
\log(x-1)!=\int_0^1\frac{1+t-tx-(1-t)^{x-1}}{t\log(1-t)}dt
\]
and\footnote{Once again, switching the integral and expected value in \eqref{eq:explog0geom} is allowed via Lemma~\ref{lem:fubini}, since the function inside the integral is continuous in $[0,1]$ and positive for all $y\geq 0$ and $x\geq 1$.}
\begin{align}\label{eq:explog0geom}
&\mathds{E}[\log(Y_x+x-1)!]=\mathds{E}\left[\int_0^1\frac{1+t-t(y+x)-(1-t)^{y+x-1}}{t\log(1-t)}dt\right]\nonumber\\
&=\int_0^1\frac{1+t-\frac{tx}{1-p}-(1-t)^{x-1}\left(\frac{1-p}{1-p(1-t)}\right)^x}{t\log(1-t)}dt.
\end{align}

Consider the functions
\begin{align*}
f_1(y,t)&=\frac{1+t-ty(1-p)/p-\left(\frac{p-t}{p(1-t)}\right)^y/(1-t)}{t \log(1-t)},\\
f_2(y,t)&=\frac{1+t-ty/p-\left(\frac{p-t(1+p)}{p(1-t)}\right)^y/(1-t)}{t \log(1-t)}.
\end{align*}
It holds that
\begin{align*}
\mathds{E}[f_1(Y_x,t)]&=\frac{1+t-tx-(1-t)^{x-1}}{t\log(1-t)},\\
\mathds{E}[f_2(Y_x,t)]&=\frac{1+t-\frac{tx}{1-p}-(1-t)^{x-1}\left(\frac{1-p}{1-p(1-t)}\right)^x}{t\log(1-t)}.
\end{align*}
As a result, we would hope that
\begin{align*}
\mathds{E}\left[\int_0^1 f_1(Y_x,t)dt\right]&=\log(x-1)!,\quad \forall x\geq 1\\
\mathds{E}\left[\int_0^1 f_2(Y_x,t)dt\right]&=\mathds{E}[\log(Y_x+x-1)!],\quad \forall x\geq 1.
\end{align*}
However, this does not hold as the above integrals on the left-hand side diverge. This means that the unique formal solutions to the functional equations above are not well-defined functions, as was the case for the analogous equation associated to the Poisson-repeat channel in~\cite{Che17}. The formal solutions for the analogous functional equations in the case of the binary deletion channel in \cite{Che17} are well-defined, but do not lead to a valid distribution. We can contrast this with the geometric sticky and elementary duplication channels in Sections~\ref{sec:geomsticky}~and~\ref{sec:dupl}, where we derive such analogous formal solutions and prove that they are well-defined and lead to a valid distribution.


In order to overcome this, we truncate the integration bounds. To determine the point at which to truncate, note that $\frac{p-t(1+p)}{p(1-t)}\geq -1$ whenever $t\leq \frac{2p}{1+2p}$. Truncating at this point ensures that the exponential terms in the two integrals are controlled. Consider the truncated integrals
\begin{align}
\Lambda_1(y)&=\int_0^{\frac{2p}{1+2p}} f_1(y,t)dt\label{eq:lambda10geom}\\
\Lambda_2(y)&=\int_0^{\frac{2p}{1+2p}} f_2(y,t)dt.\label{eq:lambda20geom}
\end{align}
An argument similar to that used in the proof of Lemma~\ref{lem:switch} shows that both $f_1(y,\cdot)$ and $f_2(y,\cdot)$ are non-negative in $\left(0,\frac{2p}{1+2p}\right]$ for large enough $y$. It is also easy to see that $f_1(y,\cdot)$ and $f_2(y,\cdot)$ are continuous in $\left(0,\frac{2p}{1+2p}\right]$, and that they can be extended by continuity to $\left[0,\frac{2p}{1+2p}\right]$. This means that the conditions of Lemma~\ref{lem:fubini} are satisfied, and so
\begin{align*}
\mathds{E}[\Lambda_1(Y_x)]&=\int_0^{\frac{2p}{1+2p}} \mathds{E}[f_1(y,t)]\\
&=\log(x-1)!-\int_{\frac{2p}{1+2p}}^1 \frac{1+t-tx-(1-t)^{x-1}}{t\log(1-t)}\\
&=\log(x-1)!-\eta\left(\frac{1}{1+2p}\right)+(x-1)\text{Li}\left(\frac{1}{1+2p}\right)+\int_{\frac{2p}{1+2p}}^1 \frac{(1-t)^{x-1}}{t\log(1-t)},
\end{align*}
where $\text{Li}(z)=\int_0^z \frac{dt}{\log t}$ is the logarithmic integral and $\eta(z)=\int_0^z \frac{dt}{(1-t)\log t}$. Analogously,
\begin{align*}
\mathds{E}[\Lambda_2(Y_x)]&=\int_0^{\frac{2p}{1+2p}} \mathds{E}[f_2(y,t)]\\
&=\mathds{E}[\log(Y_x+x-1)!]-\int_{\frac{2p}{1+2p}}^1 \frac{1+t-\frac{tx}{1-p}-(1-t)^{x-1}\left(\frac{1-p}{1-p(1-t)}\right)^x}{t\log(1-t)}\\
&=\mathds{E}[\log(Y_x+x-1)!]-\eta\left(\frac{1}{1+2p}\right)+\left(\frac{x}{1-p}-1\right)\text{Li}\left(\frac{1}{1+2p}\right)+\int_{\frac{2p}{1+2p}}^1 \frac{(1-t)^{x-1}\left(\frac{1-p}{1-p(1-t)}\right)^x}{t\log(1-t)}.
\end{align*}

We set
\[
g(y)=\Lambda_2(y)-\Lambda_1(y)-\log y!-y\text{Li}\left(\frac{1}{1+2p}\right).
\]
Note that $g$ satisfies
\begin{align}\label{eq:g0geom}
\mathds{E}[g(Y_x)]&=\mathds{E}[\log(Y_x+x-1)!]-\log(x-1)!-\mathds{E}[\log Y_x!]-\mathds{E}[Y_x]\text{Li}\left(\frac{1}{1+2p}\right)+R_p(x),
\end{align}
where
\begin{equation}\label{eq:rp}
R_p(x)=-\int_{\frac{2p}{1+2p}}^1 \frac{(1-t)^{x-1}\left(1-\left(\frac{1-p}{1-p(1-t)}\right)^x\right)}{t\log(1-t)}\geq 0.
\end{equation}
Observe that $R_p(x)$ vanishes exponentially fast in $x$.

It now remains to show that $g$ has the correct asymptotic growth. The proof of the following result is analogous to the proof of Lemma~\ref{lem:asymp}.
\begin{lem}\label{lem:asymp0geom}
	We have
	\begin{align*}
	\Lambda_1(y)&=\log\Gamma\left(\frac{y(1-p)}{p}\right)+\frac{y(1-p)}{p}\cdot\textnormal{Li}\left(\frac{1}{1+2p}\right) -\eta\left(\frac{1}{1+2p}\right)+O(1),\\
	\Lambda_2(y)&=\log\Gamma\left(\frac{y}{p}\right)+\frac{y}{p}\cdot\textnormal{Li}\left(\frac{1}{1+2p}\right)-\eta\left(\frac{1}{1+2p}\right)+O(1).
	\end{align*}
	In particular,
	\[
	y\frac{h(p)}{p}-g(y)=\frac{1}{2}\log y +O(1).
	\]
\end{lem}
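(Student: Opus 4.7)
The plan is to mirror the proof of Lemma~\ref{lem:asymp}, with the twist that here the integrals defining $\Lambda_1$ and $\Lambda_2$ are truncated at $t = \frac{2p}{1+2p}$; this truncation is precisely what generates the explicit $\eta$ and $\text{Li}$ contributions in the stated asymptotic. I would prove the formula for $\Lambda_1(y)$ in detail and remark that the case of $\Lambda_2(y)$ follows by an entirely analogous argument (with $h_1(t) = \log\frac{p-t(1+p)}{p(1-t)}$ and $h_2(t) = \frac{1}{p}\log(1-t)$ replacing the $\Lambda_1$ version); the truncation point is chosen so that $\frac{p-t(1+p)}{p(1-t)} \geq -1$ on the integration range, which is exactly what keeps the exponential base in $f_2(y,\cdot)$ bounded in modulus by $1$.

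For $\Lambda_1$, I would start from the integral representation of Lemma~\ref{lem:intlog} applied to $\log\Gamma(y(1-p)/p)$ on the full interval $[0,1]$, namely
\begin{equation*}
\log\Gamma(y(1-p)/p) = \int_0^1 \frac{1+t - t\,y(1-p)/p - (1-t)^{y(1-p)/p-1}}{t\log(1-t)}\,dt,
\end{equation*}
and split the integral at $\frac{2p}{1+2p}$. Comparing the $[0,\frac{2p}{1+2p}]$ piece against $\Lambda_1(y)$, the difference equals
\begin{equation*}
\int_0^{\frac{2p}{1+2p}} \frac{(1-t)^{y(1-p)/p} - \left(\frac{p-t}{p(1-t)}\right)^y}{t(1-t)\log(1-t)}\,dt,
\end{equation*}
which I would bound by $O(1)$ by repeating verbatim the argument in Lemma~\ref{lem:asymp}: set $h_2(t)=\frac{1-p}{p}\log(1-t)$ and $h_1(t)=\log\frac{p-t}{p(1-t)}$, check $\delta(t)=h_2(t)-h_1(t) > 0$ with $\delta(t)=O(t^2)$ near $0$, then use $e^{h_2(t)y}-e^{h_1(t)y} \leq \delta(t)\,y\,e^{h_2(t)y}$ and dominate by $\int_0^\infty c\,y\,e^{-c'ty}\,dt = O(1)$.

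For the tail piece $\int_{2p/(1+2p)}^1$, I would split the numerator into its three parts: the constant-in-$y$ piece $1+t$ contributes the $-\eta\!\left(\tfrac{1}{1+2p}\right)$ term after the substitution $u=1-t$; the linear-in-$y$ piece $-t\,y(1-p)/p$ produces the $\tfrac{y(1-p)}{p}\,\text{Li}\!\left(\tfrac{1}{1+2p}\right)$ contribution; and the exponential piece $-(1-t)^{y(1-p)/p-1}$ is bounded by $\int_{2p/(1+2p)}^1 (1-t)^{y(1-p)/p - 1}/|t\log(1-t)|\,dt$, which decays exponentially in $y$ and is therefore $O(1)$. Putting the three pieces together yields the asymptotic expansion for $\Lambda_1(y)$; the one for $\Lambda_2(y)$ is obtained by the same template.

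Finally, I would substitute both asymptotics into the definition $g(y) = \Lambda_2(y) - \Lambda_1(y) - \log y! - y\,\text{Li}\!\left(\tfrac{1}{1+2p}\right)$. The $\eta\!\left(\tfrac{1}{1+2p}\right)$ terms cancel outright, and the $\text{Li}$ coefficients collapse because $\tfrac{y}{p}-\tfrac{y(1-p)}{p}-y=0$, leaving $g(y) = \log\Gamma(y/p) - \log\Gamma(y(1-p)/p) - \log\Gamma(y+1) + O(1)$. Applying Stirling $\log\Gamma(z) = z\log z - z - \tfrac{1}{2}\log z + O(1)$ to each gamma, the $y\log y$ coefficients cancel, the $y$ coefficients combine to exactly $-h(p)/p$ (since $h(p)=-p\log p - (1-p)\log(1-p)$), and the $\log y$ coefficients sum to $-\tfrac{1}{2}$. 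This gives the claimed identity $y\,h(p)/p - g(y) = \tfrac{1}{2}\log y + O(1)$. The principal obstacle is the $O(1)$ bound on the difference integral in the second paragraph: one must take care that the Taylor comparison of the rational base $(p-t)/(p(1-t))$ with $(1-t)^{(1-p)/p}$ is valid uniformly in $p \in (0,1)$ on the interval $[0,\tfrac{2p}{1+2p}]$, in particular that the implicit constants in the $\delta(t)=O(t^2)$ estimate do not blow up as $p \to 1$.
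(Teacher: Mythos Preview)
Your proposal is correct and follows precisely the route the paper intends: the paper states only that the proof is ``analogous to the proof of Lemma~\ref{lem:asymp},'' and your plan is exactly that analogy carried out, with the additional observation that the truncated upper limit $\tfrac{2p}{1+2p}$ (rather than $1$) is what produces the explicit $\eta$ and $\textnormal{Li}$ terms from the tail piece of the $\log\Gamma$ integral. Two minor remarks: (i) the $1+t$ piece of the tail actually yields $-\eta\!\left(\tfrac{1}{1+2p}\right)-\textnormal{Li}\!\left(\tfrac{1}{1+2p}\right)$, not just $-\eta$, but the extra $-\textnormal{Li}$ is a $p$-dependent constant and is absorbed into the $O(1)$; and (ii) the uniformity-in-$p$ concern you flag at the end is unnecessary since the lemma is stated for fixed $p$ --- the genuine care point (already built into the ``verbatim'' Lemma~\ref{lem:asymp} strategy of first splitting at a small $\epsilon$) is that the bases $\tfrac{p-t}{p(1-t)}$ and $\tfrac{p-t(1+p)}{p(1-t)}$ can become negative on part of $\bigl[0,\tfrac{2p}{1+2p}\bigr]$, so the $h_1,h_2$ comparison should be applied only on $[0,\epsilon]$ with $\epsilon$ small enough that the base is positive, while on $[\epsilon,\tfrac{2p}{1+2p}]$ both exponential terms are bounded in modulus by a constant $<1$ raised to the power $y$ and hence contribute $o(1)$.
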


Lemma~\ref{lem:asymp0geom} implies that $\YYq$ is a valid distribution if $q\in(0,1)$, since it shows that $\YYq/\overline{y_0}=\Theta(q^y/\sqrt{y})$. It remains to upper bound $\KL(Y_x||\YYq)$. We have
\begin{align}\label{eq:line0geom}
\KL(Y_x||\YYq)&=-H(Y_x)-\sum_{y=0}^\infty Y_x(y)\log \YYq(y)\nonumber\\
&=-H(Y_x)-\log \overline{y_0}- \mathds{E}[Y_x]\log q - \mathds{E}[g(Y_x)]+\mathds{E}[Y_x]h(p)/p\nonumber\\
&=\mathds{E}\left[\log\binom{Y_x+x-1}{Y_x}\right]-\log \overline{y_0}-\mathds{E}[Y_x]\log q-\mathds{E}[g(Y_x)]\nonumber\\
&=-R_p(x)-\log\overline{y_0}-\mathds{E}[Y_x]\log q\nonumber\\
&\leq -\log \overline{y_0}-\mathds{E}[Y_x]\log q.
\end{align}
In the above, the second equality follows from \eqref{eq:yq0geom}, the third equality follows from \eqref{eq:entNB}, the fourth equality holds because of \eqref{eq:g0geom}, and the inequality follows from the fact that $R_p(x)\geq 0$ for all $x\geq 1$.

Combining \eqref{eq:line0geom} with Theorem~\ref{thm:duality}, we immediately obtain the capacity upper bound
\begin{equation}\label{eq:firstboundtrunc}
\Ca''_\mu(D_0)\leq \inf_{q\in(0,1)}(-\log \overline{y_0}-\mu\log q)
\end{equation}
for all $\mu\geq 0$. There are two important comments regarding this bound: First, as shown in \eqref{eq:line0geom}, the gap between $\KL(Y_x||\YYq)$ and the line $-\log \overline{y_0}-\mathds{E}[Y_x]\log q$ is exactly $R_p(x)$, which converges to 0 exponentially fast as $x$ increases. Second, we still have $R_p(1)\gg 0$.

\subsection{Improving the bound by fixing the mass at $y=0$}\label{sec:fixmass0}

In this section, we showcase a simple technique which can be used to significantly improve the bounds we obtain from the distributions designed in Sections~\ref{sec:firstbound}~and~\ref{sec:trunc}. We will also use this technique to give a simple proof of an elementary capacity upper bound for the geometric deletion channel with large replication parameter in Section \ref{sec:boundsmalld}. Namely, the capacity in this regime is at most $0.73$ bits/channel use for large replication parameter. As discussed in Section \ref{sec:intro}, this is the first nontrivial elementary capacity upper bound that holds over an interval of the channel parameter for channels with geometric replications and deletions.

The technique we are about to present consists simply in optimizing the mass at $y=0$ of any given family of distributions suitable for Theorem~\ref{thm:duality}. This leads to an upper bound which is at least as good as the original, and, when applied to the distributions from Section \ref{sec:firstbound}, we see significant improvements for a large range of the replication parameter $p$. 


Consider a distribution $Y$ with support on $\{0,1,2,\dots\}$ and probability mass function $Y(y)=y_0 a(y)$ for some function $a(y)$ with $a(0)=1$ and normalizing factor $y_0$. For $\delta\in (0,1]$, consider the modified distribution $Y_\delta$ given by
\begin{equation}
Y_\delta(y)=
\begin{cases}
\alpha\delta,&\text{ if } y=0\\
\alpha a(y),&\text{ if } y>0,
\end{cases}
\end{equation}
where $\alpha$ is the normalizing factor, satisfying $1/\alpha=\delta+1/y_0-1$. Intuitively, $Y_\delta$ is obtained from $Y$ by modifying the mass of $Y$ at $y=0$. Note that setting $\delta=1$ yields the original distribution $Y$.

A key point is that $\KL(Y_x||Y_\delta)$ has a simple expression in terms of $\KL(Y_x||Y)$ for all $x\geq 1$. In fact, letting $d=1-p$ and recalling that $Y_x=\NB_{x,p}$,
 \begin{align}
	\KL(Y_x||Y_\delta)&=-H(Y_x)-\log \alpha-\sum_{y=1}^\infty Y_x(y)\log a(y)-d^x\log\delta\nonumber\\
	&=-H(Y_x)-\log \alpha-\sum_{y=0}^\infty Y_x(y)\log a(y)-d^x\log\delta\nonumber\\
	&=-H(Y_x)-\log y_0-\sum_{y=0}^\infty Y_x(y)\log a(y)+\log y_0-\log \alpha-d^x\log\delta\nonumber\\
	&=\KL(Y_x||Y)+\log y_0-\log \alpha-d^x\log\delta\label{eq:KLdeltaintermediate}\\
	&\leq \KL(Y_x||Y)+\log y_0-\log \alpha-d\log\delta.\label{eq:KLdelta}
\end{align}
In the first equality we used the fact that $Y_x(0)=d^x$ for all $x\geq 1$. The second equality follows because $\log a(0)=0$ since $a(0)=1$. In the last equality we used that $\delta\leq 1$, and so $-d^x\log\delta \leq -d\log\delta$ for $x\geq 1$.

Suppose $\Delta(x)=a\mathds{E}[Y_x]+b-\KL(Y_x||Y)$ is the KL-gap for some fixed line $a\mathds{E}[Y_x]+b$. Then, the new KL-gap between $\KL(Y_x||Y_\delta)$ and the line $a\mathds{E}[Y_x]+b+\log y_0-\log\alpha-d\log\delta$ is
\begin{equation}\label{eq:deltadelta}
	\Delta_\delta(x)=a\mathds{E}[Y_x]+b+\log y_0-\log\alpha-d\log\delta-\KL(Y_x||Y_\delta)=\Delta(x)-d\log\delta+d^x\log\delta\geq 0,
\end{equation}
where the second equality follows from \eqref{eq:KLdeltaintermediate} and the definition of $\Delta(x)$. In particular $\Delta_\delta(1)=\Delta(1)$ and $\Delta_\delta(x)\geq \Delta(x)$. As a result, we have the bound
\begin{align}
	\KL(Y_x||Y_\delta)&\leq \KL(Y_x||Y)+\log y_0-\log \alpha-d\log\delta\nonumber\\
	&\leq a\mathds{E}[Y_x]+b+\log y_0-\log \alpha-d\log\delta-\eps_\delta(p),\label{eq:KLepsdelta}
\end{align}
where
\begin{equation}\label{eq:epsdelta}
	\eps_\delta(p)=\inf_{x\geq 1}\Delta_\delta(x)\geq \eps(p),
\end{equation}
with associated KL-gap
\begin{equation}\label{eq:deltadeltaprime}
	\Delta'_\delta(x)=a\mathds{E}[Y_x]+b+\log y_0-\log \alpha-d\log\delta-\eps_\delta(p)-\KL(Y_x||Y_\delta)=\Delta_\delta(x)-\eps_\delta(p).
\end{equation}

Combined with Theorem~\ref{thm:duality}, this immediately leads to the capacity upper bound
\begin{equation}\label{eq:capUBdelta}
	\Ca''_\mu(D_0)\leq \inf_{q\in(0,1),\delta\in(0,1]}(a\mu+b+\log y_0-\log \alpha-d\log\delta-\eps_\delta(p)).
\end{equation}

Optimizing the right hand side of \eqref{eq:capUBdelta} over two parameters $q$ and $\delta$ is cumbersome. We now argue that a specific choice of $\delta$ works well over a large range of $p$ for the distributions we designed, thus obtaining a much simpler bound than \eqref{eq:capUBdelta} which still gives very good results. As discussed before, as a rule of thumb, a smaller KL-gap leads to improved upper bounds. The distributions we designed in Sections~\ref{sec:firstbound}~and~\ref{sec:trunc} have associated KL-gaps which converge to $0$ when $x\to \infty$ for a large range of $p$. In the case of the truncation-based distribution from Section~\ref{sec:trunc}, this holds for all $p$, and the speed of convergence is exponential in $x$. However, the KL-gap at small $x$ does not behave as well. In general, it is significantly bounded away from $0$ when $x=1$. From experience, the KL-gap at small $x$ appears to have significant influence on the sharpness of the upper bounds obtained.  As a result, it is natural to wonder how one can obtain a small gap for small $x$ without affecting the behavior of the gap for large $x$.

Suppose $\Delta(x)\to L$ when $x\to\infty$, and $\Delta(1)\gg L$. We now describe how we can exploit the method introduced in this section to derive a new upper bound on $\KL(Y_x||Y_\delta)$ with a KL-gap that is $0$ at $x=1$ and converges to $0$ when $x\to\infty$ with a similar speed of convergence to the original KL-gap $\Delta$. Consider $\delta=\exp(-(\Delta(1)-L)/d)$. Then, $\Delta_\delta(1)=\Delta(1)$ and $\Delta_\delta(x)\to \Delta(1)$ when $x\to\infty$. If $\Delta_\delta(x)\geq \Delta(1)$ (which, as we shall see, happens often), we have $\eps_{\delta}(p)=\Delta(1)$, and so, recalling \eqref{eq:KLepsdelta},
\[
	\KL(Y_x||Y_\delta)\leq \KL(Y_x||Y)+\log y_0-\log \alpha-L\leq a\mathds{E}[Y_x]+b+\log y_0-\log \alpha-L
\]
with corresponding KL-gap (recall~\eqref{eq:deltadeltaprime})
\[
	\Delta'_\delta(x)=a\mathds{E}[Y_x]+b+\log y_0-\log \alpha-L-\KL(Y_x||Y_\delta)=\Delta_\delta(x)-\Delta(1)
\]
satisfying $\Delta'_\delta(1)=0$ and $\Delta'_\delta(x)\to 0$ when $x\to\infty$ with an exponentially small penalty in the speed of convergence, as desired.

We instantiate the reasoning just described with the distributions designed in Sections~\ref{sec:firstbound}~and~\ref{sec:trunc}. Consider $\YYq$ from Section~\ref{sec:trunc}. We will use overlines over the relevant quantities associated to $\YYq$ to distinguish from the same quantities associated to $\Yq$ from Section~\ref{sec:firstbound}. Recalling~\eqref{eq:line0geom}, let
\[
	\overline\Delta(x)=-\mathds{E}[Y_x]\log q-\log \overline{y_0}-\KL(Y_x||\YYq)=R_p(x)
\]
be the associated KL-gap with $R_p(x)$ defined as in \eqref{eq:rp}. According to~\eqref{eq:deltadelta},
\begin{align}\label{eq:deltadeltatrunc}
	\overline\Delta_{\overline\delta}(x)=-\mathds{E}[Y_x]\log q-\log \overline{\alpha}-d\log\delta-\KL(Y_x||\YYq_\delta)=R_p(x)-d\log\overline\delta+d^x\log\overline\delta,
\end{align}
is the KL-gap associated to $\YYq_{\overline\delta}$, where $\overline{\alpha}$ is the normalizing factor of $\YYq_{\overline\delta}$.

In general, we have $\overline\Delta(1)=R_p(1)\gg 0$ and $\overline\Delta(x)\to 0$ exponentially fast when $x\to\infty$. Let ${\overline\delta}=\exp(-R_p(1)/d)$. Recalling~\eqref{eq:KLepsdelta}, this choice of $\delta$ leads to the upper bound
\begin{equation}\label{eq:KLepsdeltatrunc}
\KL(Y_x||\YYq_{\overline\delta})\leq -\log\overline{\alpha}-\mathds{E}[Y_x]\log q+R_p(1)-\overline{\eps}_{\overline\delta}(p),
\end{equation}
where $\overline{\eps}_{\overline\delta}(p)=\inf_{x\geq 1}\overline\Delta_{\overline\delta}(x)$ and $1/\overline{\alpha}={\overline\delta}+1/\overline{y_0}-1$.

Observe that $\overline\Delta_{\overline\delta}(1)=R_p(1)$ and $\overline\Delta_{\overline\delta}(x)\to R_p(1)$ still exponentially fast when $x\to\infty$. Experimentally, for $p\geq 0.6$ we have $\overline\Delta_\delta(x)\geq R_p(1)$ for all $x\geq 1$ (see Figure~\ref{fig:klgaptrunc}). Therefore, in such a case we have $\overline{\eps}_{\overline\delta}(p)=R_p(1)$ and so
\[
	\KL(Y_x||\YYq_{\overline\delta})\leq -\log\overline{\alpha}-\mathds{E}[Y_x]\log q
\]
with respective KL-gap
\begin{equation}\label{eq:deltaprimetrunc}
	\overline\Delta'_{\overline\delta}(x)= -\log\overline{\alpha}-\mathds{E}[Y_x]\log q-\KL(Y_x||\YYq_{\overline\delta})=\Delta_{\overline\delta}(x)-R_p(1)\geq 0
\end{equation}
In particular, we now have $\overline\Delta'_{\overline\delta}(1)=0$ and $\overline\Delta'_{\overline\delta}(x)\to 0$ when $x\to\infty$ exponentially fast, as desired.

Consequently, from~\eqref{eq:KLepsdeltatrunc} and Theorem~\ref{thm:duality} we obtain the following upper bound with the desired KL-gap for a large range of the replication parameter $p$.
\begin{thm}\label{thm:capdeltatrunc}
	We have
	\begin{equation}\label{eq:UBdeltatrunc}
		\Ca''_\mu(D_0)\leq \inf_{q\in(0,1)}(-\log\overline{\alpha}-\mu\log q)+R_p(1)-\overline\eps_{\overline\delta}(p),
	\end{equation}
	where ${\overline\delta}=\exp(-R_p(1)/d)$, $1/\overline{\alpha}={\overline\delta}+1/\overline{y_0}-1$, and $\overline\eps_{\overline\delta}(p)=\inf_{x\geq 1}\overline\Delta_{\overline\delta}(x)$.
\end{thm}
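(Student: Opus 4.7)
The plan is to combine the truncation-based distribution $\YYq$ from Section~\ref{sec:trunc} with the mass-modification construction from the start of Section~\ref{sec:fixmass0}, and then invoke Theorem~\ref{thm:duality}. Equation~\eqref{eq:line0geom} gives the exact identity
$$\KL(Y_x \| \YYq) = -\log\overline{y_0} - \mathds{E}[Y_x]\log q - R_p(x),$$
so that the KL-gap of $\YYq$ with respect to the line $-\mathds{E}[Y_x]\log q - \log\overline{y_0}$ is precisely $R_p(x)\geq 0$.

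Next, I would write $\YYq(y) = \overline{y_0} a(y)$ with $a(0) = 1$ (which holds because evaluating~\eqref{eq:yq0geom} at $y=0$ yields $\overline{y_0}$) and form $\YYq_{\overline\delta}$ by the recipe of Section~\ref{sec:fixmass0}. Substituting $Y = \YYq$ into identity~\eqref{eq:KLdeltaintermediate} gives
$$\KL(Y_x \| \YYq_{\overline\delta}) = -R_p(x) - \log\overline\alpha - \mathds{E}[Y_x]\log q - d^x \log\overline\delta,$$
so the KL-gap of $\YYq_{\overline\delta}$ with respect to the shifted line $-\mathds{E}[Y_x]\log q - \log\overline\alpha - d\log\overline\delta$ is
$$\overline\Delta_{\overline\delta}(x) = R_p(x) - d\log\overline\delta + d^x\log\overline\delta,$$
in agreement with the general formula~\eqref{eq:deltadelta}. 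Each summand is non-negative since $R_p(x)\geq 0$ and $\overline\delta\in(0,1]$ gives $-d^x\log\overline\delta \leq -d\log\overline\delta$ for $x\geq 1$, confirming $\overline\Delta_{\overline\delta}(x)\geq 0$.

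Now I would specialize $\overline\delta = \exp(-R_p(1)/d)$, which is precisely calibrated so that $-d\log\overline\delta = R_p(1)$. By the definition $\overline\eps_{\overline\delta}(p) = \inf_{x\geq 1}\overline\Delta_{\overline\delta}(x)$ and the rearrangement above, we obtain
$$\KL(Y_x \| \YYq_{\overline\delta}) \leq -\log\overline\alpha - \mathds{E}[Y_x]\log q + R_p(1) - \overline\eps_{\overline\delta}(p)$$
uniformly in $x\geq 1$. This is of the form $a\mathds{E}[Y_x] + b$ with $a = -\log q$ and $b = -\log\overline\alpha + R_p(1) - \overline\eps_{\overline\delta}(p)$, so Theorem~\ref{thm:duality} applied to the mean-limited channel $\Ch''_\mu(D_0)$ yields $\Ca''_\mu(D_0) \leq -\mu\log q - \log\overline\alpha + R_p(1) - \overline\eps_{\overline\delta}(p)$ for each fixed $q\in(0,1)$. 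Taking the infimum over $q\in(0,1)$ delivers~\eqref{eq:UBdeltatrunc}.

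There is no genuine obstacle here since all the hard analytical work has already been carried out: Lemma~\ref{lem:asymp0geom} guarantees $\YYq$ (and hence $\YYq_{\overline\delta}$) is a valid probability distribution for $q\in(0,1)$, while $R_p(x)\geq 0$ is immediate from~\eqref{eq:rp}. The only bookkeeping to watch is the sign of $\log\overline\delta$ when substituting into~\eqref{eq:KLdeltaintermediate} and confirming that the constant shift from $\log\overline{y_0} - \log\overline\alpha$ cancels correctly, so that the final bound isolates the overhead term $R_p(1) - \overline\eps_{\overline\delta}(p)\geq 0$ cleanly.
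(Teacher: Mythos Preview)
Your proposal is correct and follows essentially the same approach as the paper: you combine the exact KL-gap identity~\eqref{eq:line0geom} for the truncated distribution $\YYq$ with the mass-at-zero modification of Section~\ref{sec:fixmass0}, specialize $\overline\delta=\exp(-R_p(1)/d)$, and then invoke Theorem~\ref{thm:duality}, exactly as the paper does in deriving~\eqref{eq:KLepsdeltatrunc}. Your observation that $R_p(1)-\overline\eps_{\overline\delta}(p)$ is independent of $q$ (so it may be pulled outside the infimum) is the only point the paper leaves implicit, and you handle it correctly.
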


We now consider $\Yq$ from Section~\ref{sec:firstbound}. The reasoning is analogous to the previous case, so we skip most of it. In this case, we have
\begin{equation}\label{eq:deltax}
	\Delta(x)=-\log y_0-\mathds{E}[Y_x]\log q - \KL(Y_x||\Yq)=\mathds{E}\left[\log\frac{\binom{Y_x/p-1}{Y_x}}{\binom{Y_x+x-1}{Y_x}}\right].
\end{equation}
It can be observed that $\Delta(x)\to 1/2$ when $x\to \infty$.
In the cases where $\Delta(1)\geq 1/2$, we can follow the general reasoning previously described and set $\delta=\exp(-(\Delta(1)-1/2)/d)$. However, when $\Delta(1)<1/2$, we simply set $\delta=1$, i.e., we use the original distribution $\Yq$ (note that $\delta>1$ is not allowed). Therefore, in general we set $\delta=\min(\exp(-(\Delta(1)-1/2)/d),1)$.

We then have
\begin{equation}\label{eq:deltadeltaconv}
	\Delta_\delta(x)=-\log \alpha -\mathds{E}[Y_x]\log q-d\log\delta-\KL(Y_x||\Yq_\delta)=\Delta(x)-d\log\delta+d^x\log\delta\geq 0,
\end{equation}
where $1/\alpha=\delta+1/y_0-1$. If $\Delta(1)\geq 1/2$, this leads to the bound
\begin{equation}\label{eq:lineconvdelta}
	\KL(Y_x||\Yqd)\leq -\log\alpha-\mathds{E}[Y_x]\log q + \Delta(1)-1/2-\eps_{\delta}(p),
\end{equation}
where $\eps_{\delta}(p)=\inf_{x\geq 1}\Delta_\delta(x)$. Furthermore, in this case we have $\Delta_\delta(1)=\Delta(1)$ and $\Delta_\delta(x)\to \Delta(1)$ when $x\to\infty$, as before.

From experiments, for $0.35\leq p\leq 0.5$ we have $\Delta(1)>1/2$ and $\Delta_\delta(x)\geq \Delta(1)$ for all $x\geq 1$ (see Figure~\ref{fig:klgapconv}). This means that $\eps_{\delta}(p)=\Delta(1)$ in this case, and so
\begin{equation*}
	\KL(Y_x||\Yqd)\leq -\log\alpha-\mathds{E}[Y_x]\log q -1/2,
\end{equation*}
with associated KL-gap
\begin{equation*}
	\Delta'_\delta(x)=-\log\alpha-\mathds{E}[Y_x]\log q -1/2-\KL(Y_x||\Yq)=\Delta_\delta(x)-\Delta(1)\geq 0.
\end{equation*}
Observe that, similarly to previous cases, $\Delta'_\delta(1)=0$ and $\Delta'_\delta(x)\to 0$ when $x\to \infty$, as desired. Figure~\ref{fig:gapchange} showcases how the KL-gap changes for $p=1/2$ when we modify $\Yq$ at $y=0$ with our choice of $\delta$.

\begin{figure}
	\centering
	\includegraphics[width=0.7\textwidth]{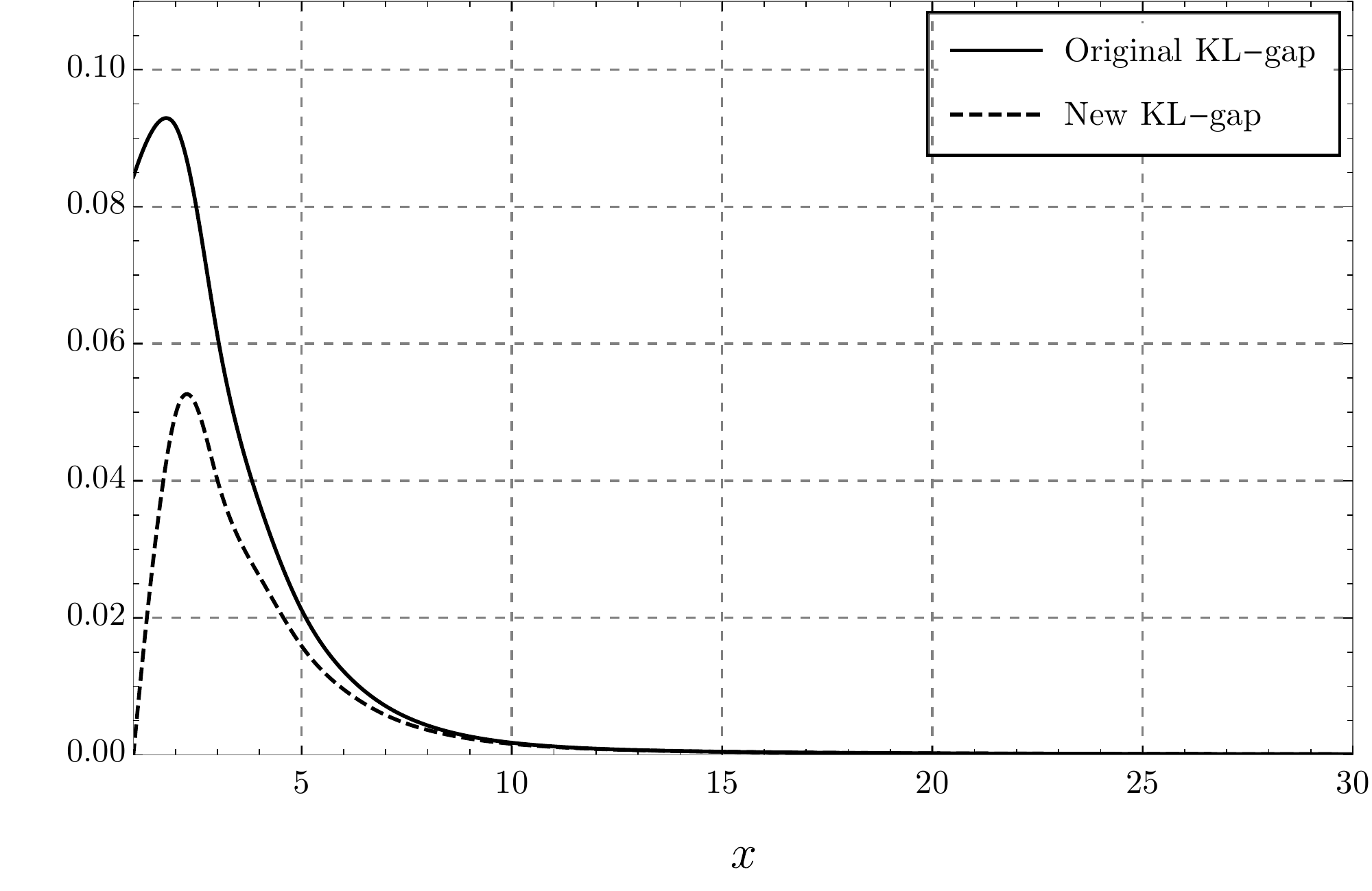}
	\caption{How the gap changes when the mass at $y=0$ is modified, as a function of $x$ for $p=1/2$ and $\Yq$ defined in Section \ref{sec:firstbound}. Black curve: The original KL-gap $\Delta(x)-1/2$. Dashed curve: The new  KL-gap $\Delta'_\delta(x)$ after fixing the mass at $y=0$ appropriately.}
	\label{fig:gapchange}
\end{figure}

From~\eqref{eq:lineconvdelta} and Theorem~\ref{thm:duality} we obtain the following upper bound with the desired KL-gap for a large range of the replication parameter $p$.
\begin{thm}\label{thm:capdeltaconv}
	We have
	\begin{equation}\label{eq:UBdeltaconv}
	\Ca''_\mu(D_0)\leq \inf_{q\in(0,1)}(-\log\alpha-\mu\log q)+\max(\Delta(1)-1/2,0)-\eps_{\delta}(p),
	\end{equation}
	where $\delta=\min(\exp(-(\Delta(1)-1/2)/d),1)$, $1/\alpha=\delta+1/y_0-1$, and $\eps_{\delta}(p)=\inf_{x\geq 1}\Delta_{\delta}(x)$.
\end{thm}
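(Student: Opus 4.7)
The plan is to instantiate the general mass-fixing technique from Section~\ref{sec:fixmass0} with the distribution $\Yq$ from Section~\ref{sec:firstbound}, then feed the resulting KL-divergence upper bound into Theorem~\ref{thm:duality}. Concretely, I would start from equation~\eqref{eq:KLdeltaintermediate} applied to $\Yq$, which gives
$$\KL(Y_x||\Yqd) = \KL(Y_x||\Yq) + \log y_0 - \log \alpha - d^x \log \delta.$$
Combined with the definition of $\Delta(x)$ in~\eqref{eq:deltax}, namely $\KL(Y_x||\Yq) = -\log y_0 - \mathds{E}[Y_x]\log q - \Delta(x)$, this yields
$$\KL(Y_x||\Yqd) = -\log \alpha - \mathds{E}[Y_x]\log q - d\log\delta + d^x \log\delta - \Delta(x),$$
so the KL-gap of $\Yqd$ with respect to the line $-\log\alpha - \mu\log q - d\log\delta$ is exactly $\Delta_\delta(x) = \Delta(x) - d\log\delta + d^x\log\delta$, recovering~\eqref{eq:deltadeltaconv}.

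Next I would split into two cases depending on whether $\Delta(1)\geq 1/2$ or not, matching the two branches of $\delta = \min(\exp(-(\Delta(1)-1/2)/d),1)$. In the first case, $\delta = \exp(-(\Delta(1)-1/2)/d) \in (0,1]$ so $-d\log\delta = \Delta(1)-1/2$, and the upper bound on $\KL(Y_x||\Yqd)$ becomes
$$\KL(Y_x||\Yqd) \leq -\log\alpha - \mathds{E}[Y_x]\log q + (\Delta(1)-1/2) - \eps_\delta(p),$$
where $\eps_\delta(p) = \inf_{x\geq 1} \Delta_\delta(x)$. In the second case, $\delta = 1$, so $\alpha = y_0$, the $-d\log\delta$ term vanishes, the $\max(\cdot,0)$ correctly evaluates to $0$, and the bound reduces to the one from Theorem~\ref{thm:firstUB} (with the extra $-\eps_\delta(p) = -\eps(p)$ correction which is still valid by Lemma~\ref{lem:epsneg}). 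In both cases, the resulting bound is precisely
$$\KL(Y_x||\Yqd) \leq -\log\alpha - \mathds{E}[Y_x]\log q + \max(\Delta(1)-1/2,0) - \eps_\delta(p),$$
with the right-hand side affine in $\mathds{E}[Y_x]$.

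Finally, I would apply Theorem~\ref{thm:duality} with $a = -\log q$ and $b = -\log\alpha + \max(\Delta(1)-1/2,0) - \eps_\delta(p)$, yielding
$$\Ca''_\mu(D_0) \leq -\log\alpha - \mu\log q + \max(\Delta(1)-1/2,0) - \eps_\delta(p)$$
for every $q\in(0,1)$; taking the infimum over $q$ gives~\eqref{eq:UBdeltaconv}. The only non-routine step is verifying that $\Yqd$ is indeed a valid probability distribution for every $q\in(0,1)$ and every admissible $\delta$, and that $\eps_\delta(p)$ is finite (so that the expression is meaningful): the former follows because $\Yq$ was shown to normalize in Section~\ref{sec:firstbound} and modifying its value at a single point preserves summability (with $1/\alpha = \delta + 1/y_0 - 1 > 0$ for $\delta \in (0,1]$), while the latter follows from~\eqref{eq:deltadeltaconv} together with Lemma~\ref{lem:epsneg}, since $\Delta(x)\geq 0$ and $d^x\log\delta \to 0$. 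The main conceptual point—rather than an obstacle—is simply that the two branches of the $\max$ and of $\delta$ agree with one another at the threshold $\Delta(1) = 1/2$, which is immediate from the formulas.
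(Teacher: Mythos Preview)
Your proposal is correct and follows essentially the same route as the paper: you instantiate the mass-at-zero modification of Section~\ref{sec:fixmass0} with $\Yq$, recover the KL-gap identity~\eqref{eq:deltadeltaconv}, split on the sign of $\Delta(1)-1/2$ to obtain the affine upper bound~\eqref{eq:lineconvdelta} (with the $\max$ covering both branches), and then invoke Theorem~\ref{thm:duality}. Your additional explicit checks that $\Yqd$ normalizes and that $\eps_\delta(p)\geq 0$ are consistent with the paper's treatment (the latter via Lemma~\ref{lem:epsneg} and the observation that $(d^x-d)\log\delta\geq 0$ for $\delta\leq 1$, $x\geq 1$).
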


To conclude this section, we remark that the alternative choice $\delta=d$ for $\Yq$ leads to a better capacity upper bound than both Theorems~\ref{thm:capdeltatrunc}~and~\ref{thm:capdeltaconv} when $p$ is close to $1$. Interestingly, $\Yqd$ with $\delta=d$ corresponds \emph{exactly} to the inverse binomial distribution, which was designed independently for the deletion channel~\cite{Che17}. This choice of $\delta$ also leads to a simple, fully analytical proof that the capacity of the geometric deletion channel is bounded well away from $1$ when $p\to 1$ in Section~\ref{sec:boundsmalld}.

We argue that there is a natural justification behind the choice $\delta=d$. First, observe that we can extend the function $\Yq(\cdot)/y_0$ to $[0,\infty)$ in a natural way. Then, we have
\[
	\Yq(0)/y_0=\binom{-1}{0}=1.
\]
However, it is also the case that
\[
	\lim_{y\to 0^+}\Yq(y)/y_0=d<1.
\]
As a result, it follows that, in general, $\Yq(\cdot)/y_0$ is not right-continuous at $y=0$. We may choose $\delta$ so that $\Yqd(\cdot)/\alpha$ is right-continuous at $y=0$. It is immediate that the unique choice of $\delta$ that satisfies this is $\delta=d$.

\subsection{Capacity upper bounds for the geometric deletion channel}\label{sec:compbounds}

In this section, we analyze the capacity upper bounds we obtain for the geometric deletion channel by combining Corollary~\ref{cor:geombound} with the distributions designed in Sections \ref{sec:firstbound} and \ref{sec:trunc} and their modifications described in Section \ref{sec:fixmass0}.

It is easy to see that the capacity of the geometric deletion channel with duplication probability $p$ is upper bounded by the capacity of the deletion channel with deletion probability $d=1-p$. In fact, we can simulate the output of a geometric deletion channel via the output of the deletion channel by having the receiver replace every output bit by $D_1=1+D_0$ copies of it.

We will compare the bounds we obtain with the state-of-the-art capacity upper bounds for the deletion channel from \cite{RD15}. Furthermore, when $p=1/2$, the geometric deletion channel corresponds exactly to the binary replication channel studied by Mercier, Tarokh, and Labeau \cite{MTL12} with $p_d=p_t=1/2$. We will compare our bound with theirs for $p=1/2$.

For each $p\in [0,1)$, our bound is obtained by combining Corollary~\ref{cor:geombound} with Theorems~\ref{thm:capdeltatrunc}~and~\ref{thm:capdeltaconv}, and choosing, for each $\mu\geq 1$, the value of $q$ satisfying $\mathds{E}[\Yqd]=\mu$ (this is possible because both families of distributions grow like $\Theta(q^y/\sqrt{y})$).
\begin{coro}\label{cor:bound0geom}
	We have
	\begin{equation}\label{eq:boundconv}
	\Ca(D_0)\leq \sup_{q\in(0,1):\atop\mu_q\geq p/(1-p)}\frac{p(\eps_{\delta}(p)-d\log\delta-\log\alpha-\mu_q\log q)}{d(1+\mu_q)}
	\end{equation}
	with $\delta=\min(\exp(-(\Delta(1)-1/2)/d),1)$, and
	\begin{equation}\label{eq:boundtrunc}
	\Ca(D_0)\leq \sup_{q\in(0,1):\atop\overline\mu_q\geq p/(1-p)}\frac{p(\overline{\eps}_{\overline{\delta}}(p)-d\log\overline{\delta}-\log\overline{\alpha}-\overline{\mu}_q\log q)}{d(1+\overline{\mu}_q)}
	\end{equation}
	with $\overline{\delta}=\exp(-R_p(1)/d)$, where
	\begin{align*}
	1/\alpha&=\delta+\sum_{y=1}^\infty \binom{y/p-1}{y}q^y e^{-y h(p)/p},\\
	\mu_q&=\sum_{y=1}^\infty \alpha y\binom{y/p-1}{y}q^y e^{-y h(p)/p},\\
	1/\overline{\alpha}&=\overline{\delta}+\sum_{y=1}^\infty \frac{q^y e^{\Lambda_2(y)-\Lambda_1(y)-y\textnormal{Li}(1/(1+2p))-y h(p)/p}}{y!},\\
	\overline{\mu}_q&=\sum_{y=1}^\infty \frac{\overline{\alpha}yq^y e^{\Lambda_2(y)-\Lambda_1(y)-y\textnormal{Li}(1/(1+2p))-y h(p)/p}}{y!},
	\end{align*}
	with $\Lambda_1$ and $\Lambda_2$ as defined in \eqref{eq:lambda10geom} and \eqref{eq:lambda20geom}, respectively, $\eps_{\delta}(p)=\inf_{x\geq 1}\Delta_\delta(x)$ for $\Delta_\delta(x)$ defined in~\eqref{eq:deltadeltaconv}, and $\overline{\eps}_{\overline\delta}(p)=\inf_{x\geq 1}\overline\Delta_{\overline\delta}(x)$ for $\overline\Delta_{\overline\delta}(x)$ defined in~\eqref{eq:deltadeltatrunc}.
\end{coro}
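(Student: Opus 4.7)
The plan is to chain Corollary~\ref{cor:geombound} with Theorems~\ref{thm:capdeltaconv} and~\ref{thm:capdeltatrunc}, then reindex the supremum over the mean $\mu$ as a supremum over the scale parameter $q$ by choosing $q$ to match the target mean exactly. The arithmetic is straightforward once the reindexing is justified; the only substantive step is checking that such a $q$ always exists.

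Starting from the reduction
\[
\Ca(D_0)\le\frac{p}{1-p}\sup_{\mu\ge 1/(1-p)}\frac{\Ca''_{\mu-1}(D_0)}{\mu},
\]
I would set $\mu_q:=\mu-1$; the constraint becomes $\mu_q\ge p/(1-p)$ and the denominator becomes $1+\mu_q$, together accounting for the $p/(d(1+\mu_q))$ prefactor in both~\eqref{eq:boundconv} and~\eqref{eq:boundtrunc}. Next I would insert the upper bounds from Theorems~\ref{thm:capdeltaconv} and~\ref{thm:capdeltatrunc} for $\Ca''_{\mu_q}(D_0)$. Each has the form $\inf_q(-\log\alpha-\mu_q\log q)$ plus a $p$-dependent constant, and the identities $-d\log\delta=\max(\Delta(1)-1/2,0)$ and $-d\log\overline\delta=R_p(1)$, both immediate from the choices of $\delta$ and $\overline\delta$ in those theorems, convert those constants into the exact $-d\log\delta$ and $-d\log\overline\delta$ terms appearing in the displayed numerators. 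To fuse the outer supremum with the inner infimum, for each admissible $\mu_q$ I would specialize to the $q=q(\mu_q)\in(0,1)$ for which $\mathds{E}[\Yqd]=\mu_q$ (respectively $\mathds{E}[\YYq_{\overline\delta}]=\mu_q$). The composition collapses to a single supremum over $q\in(0,1)$ subject to the mean-constraint $\mathds{E}[\Yqd]\ge p/(1-p)$, which is exactly the form of the stated bounds.

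The main technical point — and essentially the only step that is not bookkeeping — is to verify that $q\mapsto\mathds{E}[\Yqd]$ is a continuous surjection from $(0,1)$ onto an interval covering $[p/(1-p),\infty)$, and likewise for the truncation counterpart. I would follow the template used after~\eqref{eq:boundmugeom} and~\eqref{eq:firstboundtrunc}: the asymptotic $\Yqd(y)/\alpha=\Theta(q^y/\sqrt{y})$ (which for the unreweighted families is the content of Lemmas~\ref{lem:asymp} and~\ref{lem:asymp0geom}, and which a fixed reweighting at $y=0$ leaves intact) yields uniform summability on compact sub-intervals of $(0,1)$ and hence continuity of $q\mapsto\mathds{E}[\Yqd]$; monotone convergence combined with the geometric tail gives divergence to $+\infty$ as $q\to 1^-$ and finiteness for $q$ bounded away from $1$. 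Surjectivity then follows by the intermediate value theorem, after which the substitution producing~\eqref{eq:boundconv} and~\eqref{eq:boundtrunc} is routine algebra.
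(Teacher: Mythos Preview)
Your proposal is correct and follows the same route as the paper: combine Corollary~\ref{cor:geombound} with Theorems~\ref{thm:capdeltatrunc} and~\ref{thm:capdeltaconv}, then for each $\mu$ pick the $q$ with $\mathds{E}[\Yqd]=\mu$ (resp.\ $\mathds{E}[\YYq_{\overline\delta}]=\mu$), justified by the $\Theta(q^y/\sqrt{y})$ tail asymptotics. Your treatment of the surjectivity of $q\mapsto\mathds{E}[\Yqd]$ is in fact more explicit than the paper's, which simply asserts that the asymptotic suffices; otherwise the arguments coincide.
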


Figure~\ref{fig:0geom} compares \eqref{eq:boundconv}, \eqref{eq:boundtrunc}, and the state-of-the-art capacity upper bound for the deletion channel from \cite{RD15}. Table \ref{table:comp0geom} contains, for selected values of $p$, a comparison between our best analytical upper bound at that point and the deletion channel capacity upper bound from \cite{RD15}. As mentioned at the end of Section~\ref{sec:fixmass0}, the choice $\delta=d$ works well for $p$ close to $1$. We include the bound induced by this choice of $\delta$ for large values of $p$ in Table~\ref{table:comp0geom}, appropriately identified. However, when $p$ is not very large, this bound worsens quickly, and so we opt not to include it in the plot.

Plots of the functions inside the suprema in \eqref{eq:boundconv} and \eqref{eq:boundtrunc} can be found in Figures~\ref{fig:curvesconv0geom} and \ref{fig:curvestrunc0geom}, respectively. Similarly to the geometric sticky and elementary duplication channels, these functions are concave.



Figures~\ref{fig:klgapconv}~and~\ref{fig:klgaptrunc} showcase the KL-gap attained by the distributions $\Yqd$ and $\YYq_{\overline{\delta}}$ from Sections~\ref{sec:firstbound}~and~\ref{sec:trunc}, respectively, with the choices of $\delta$ and $\overline{\delta}$ specified in Corollary~\ref{cor:bound0geom}. For the sake of comparison, Figures~\ref{fig:klgapconvdelta1}~and~\ref{fig:klgaptruncdelta1} show the original KL-gaps of the distributions $\Yq$ and $\YYq$. Observe that, in this case, both gaps at $x=1$ are noticeably larger than $0$. On the other hand, the gaps in Figures~\ref{fig:klgapconv}~and~\ref{fig:klgaptrunc} can be shifted down so that they are exactly (or at least close to) $0$ at $x=1$, and close to $0$ for large $x$. As can be seen, one can easily approximate $\eps_{\delta}(p)$ and $\overline{\eps}_{\overline{\delta}}(p)$ with high accuracy by numerically computing the KL-gap for a small number of values of $x$, especially for $\overline{\eps}_{\overline{\delta}}(p)$. This is due to the fact that 
$R_p(x)\to 0$ exponentially fast in $x$. 

If $p\in [0,0.5]$, the infimum in $\eps_{\delta}(p)$ is achieved at $x=1$ (see Figure~\ref{fig:klgapconv}), and the same holds for $\overline{\eps}_{\overline{\delta}}(p)$ if $p\in[0.6,1)$ (see Figure~\ref{fig:klgaptrunc}). Moreover, if $p\in [0.35,0.5]$, then $\Delta(1)\geq 1/2$. This means that the choices of $\delta$ and $\overline{\delta}$ in Corollary~\ref{cor:bound0geom} (which are derived in Section \ref{sec:fixmass0}) for $p\in [0.35,0.5]$ and $p\in[0.6,1)$, respectively, yield distributions $\Yqd$ and $\YYq_{\overline{\delta}}$ whose KL-gaps are exactly 0 at $x=1$ and converge to 0 quickly for large $x$.

In the case where $p=1/2$, the best known capacity upper bound was given in~\cite{MTL12}. They report a bound of $0.209092$ bits/channel use, obtained by employing a reduction from the original channel to a memoryless channel via the addition of commas between input runs which are never deleted (this same reduction was used in~\cite{DMP07}), coupled with clever numerical methods. Our analytical upper bound, which in particular employs a tighter reduction via Theorem~\ref{thm:red}, yields a bound of $0.168074$ bits/channel use.


\begin{table}[h]
	\caption{Comparison between the numerical capacity bounds for the deletion channel from~\cite{RD15} and the analytical upper bound from Corollary~\ref{cor:bound0geom} in bits/channel use. When $p$ is large, we include the better upper bound induced by the choice $\delta=d$.}\label{table:comp0geom}
	\centering
	\begin{tabular}{ |c|c|c| }
		\hline
		$p$ & Upper bound deletion~\cite{RD15} & Upper bound from Corollary~\ref{cor:bound0geom}\\ 
		\hline
		0.05 &  0.021 &0.021244\\ 
		\hline
		0.10 & 0.041 &0.041352\\ 
		\hline
		0.15 &  0.062 &0.061242\\ 
		\hline
		0.20 & 0.082&0.076981\\ 
		\hline
		0.25 &  0.103 &0.091134\\ 
		\hline
		0.30 &  0.123 & 0.104846\\ 
		\hline
		0.35 &  0.144 &0.119552\\ 
		\hline
		0.40 &  0.165 &0.135271\\ 
		\hline
		0.45 &  0.187 &0.151342\\ 
		\hline
		0.50 & 0.212 &0.168074\\ 
		\hline
		0.55 &  0.241 &0.186588\\ 
		\hline
		0.60 & 0.275 &0.204186\\ 
		\hline
		0.65 & 0.315 &0.234480\\ 
		\hline
		0.70 &  0.362 &0.262103\\ 
		\hline
		0.75 & 0.420 &0.269490\\ 
		\hline
		0.80 &  0.491 &0.271810\\ 
		\hline
		0.85 &  0.579 &0.270561\\ 
		\hline
		0.90 & 0.689 &0.275250 (0.310823 with $\delta=d$)\\ 
		\hline
		0.95 &  0.816 &0.337581 (0.326424 with $\delta=d$)\\ 
		\hline
		0.99 &  0.963 &0.769416 (0.338927 with $\delta=d$)\\ 
		\hline
	\end{tabular}
\end{table}

\begin{figure}
	\centering
	\includegraphics[width=0.7\textwidth]{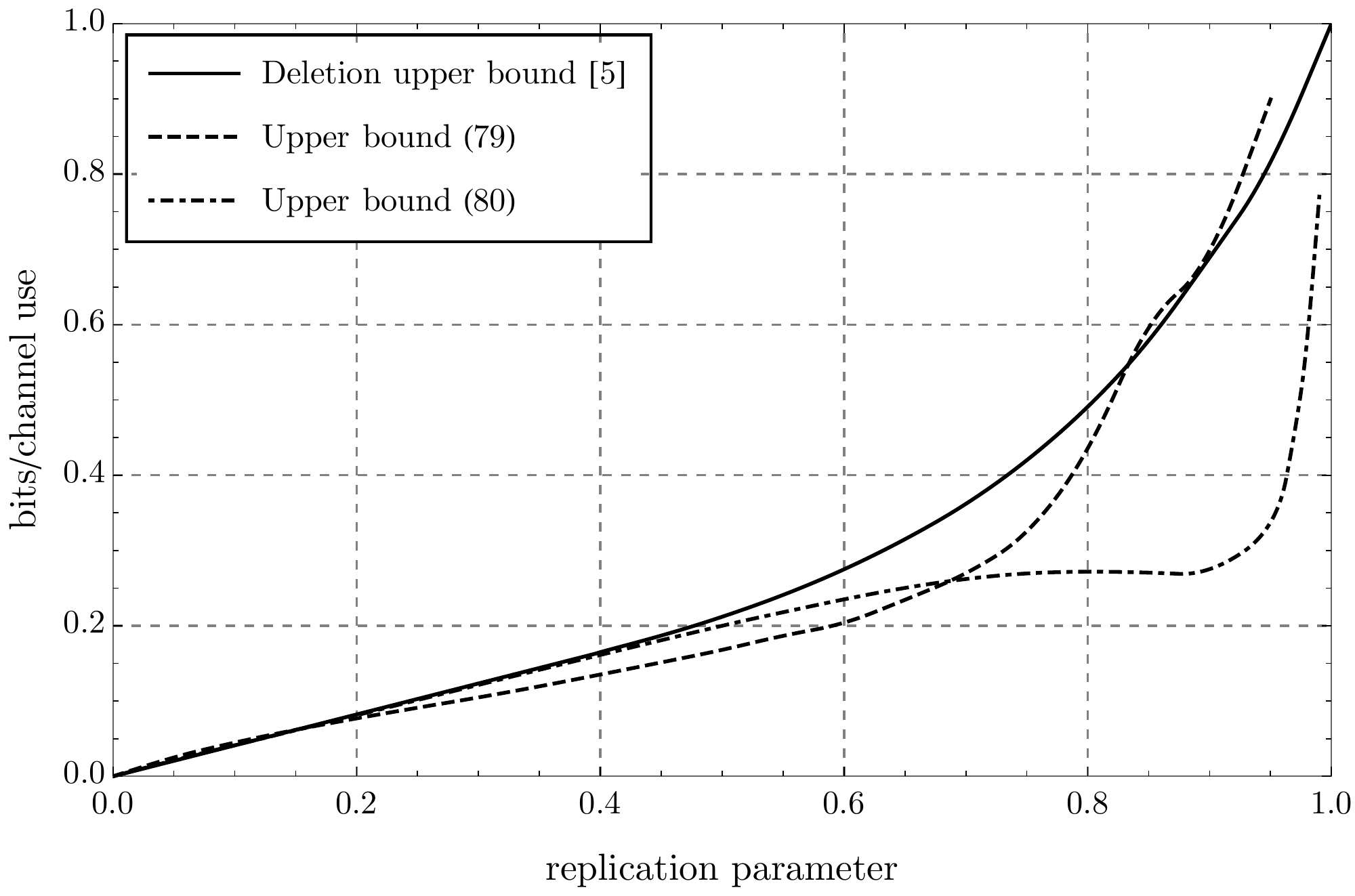}
	\caption{Plot of analytical upper bounds \eqref{eq:boundconv} and \eqref{eq:boundtrunc}, and the state-of-the-art deletion channel capacity upper bound from \cite{RD15}.}
	\label{fig:0geom}
\end{figure}

\begin{figure}
	\centering
	\includegraphics[width=0.7\textwidth]{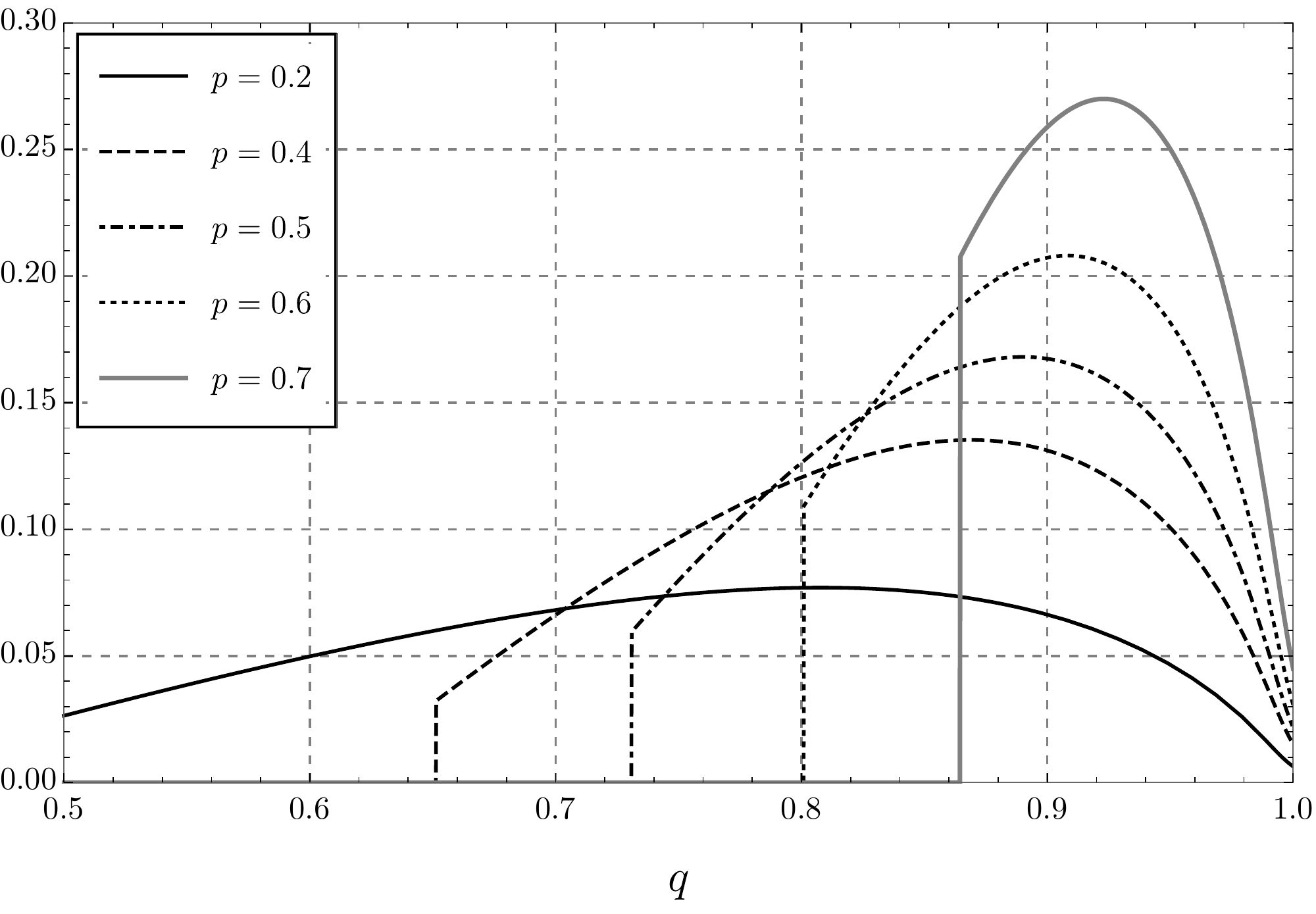}
	\caption{Function inside the supremum in \eqref{eq:boundconv} for some values of $p$. The zone where the function is zero corresponds to the cases where $\mathds{E}[\Yq_{\delta}]<\frac{p}{1-p}$.}
	\label{fig:curvesconv0geom}
\end{figure}

\begin{figure}
	\centering
	\includegraphics[width=0.7\textwidth]{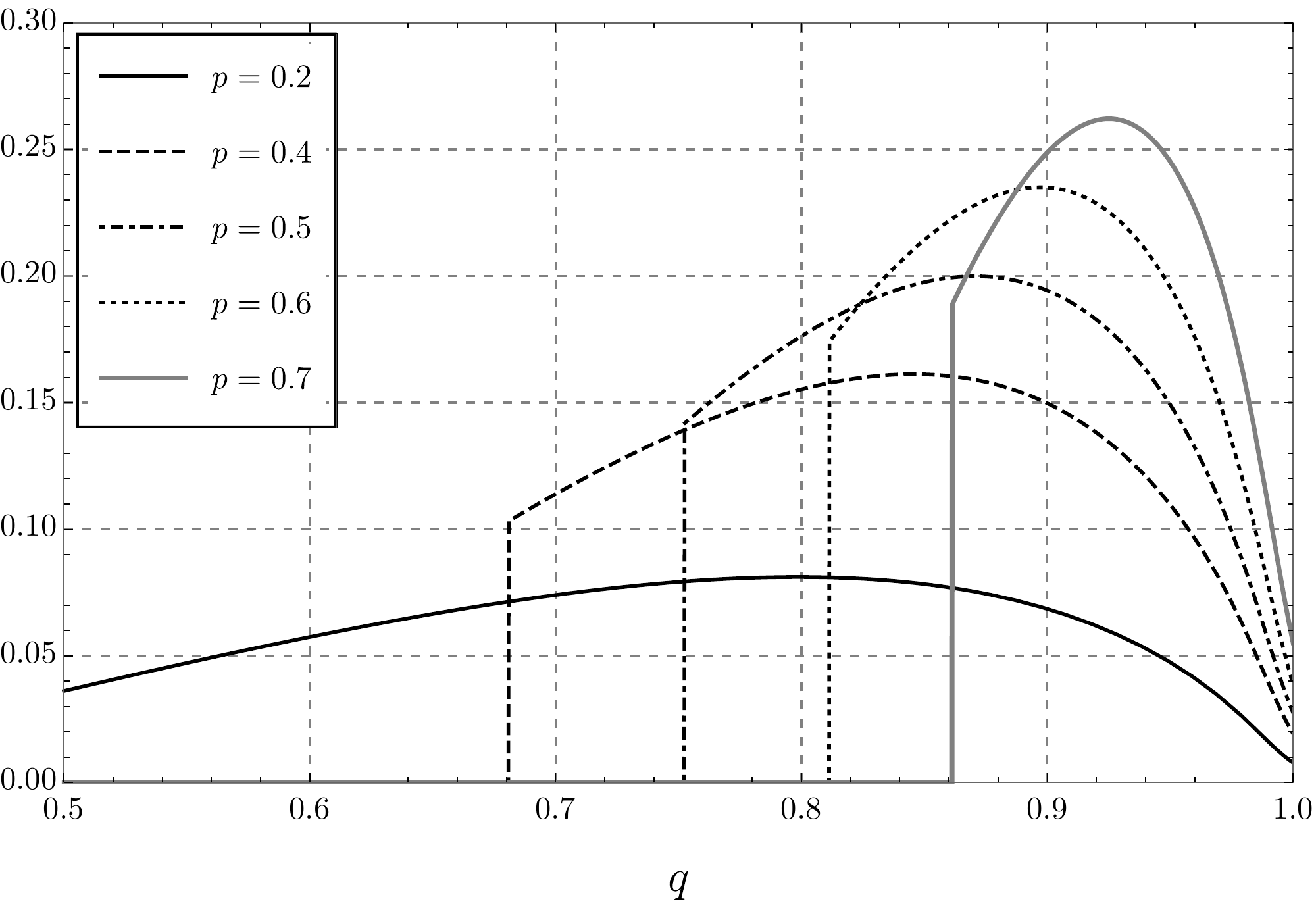}
	\caption{Function inside the supremum in \eqref{eq:boundtrunc} for some values of $p$. The zone where the function is zero corresponds to the cases where $\mathds{E}[\YYq_{\overline{\delta}}]<\frac{p}{1-p}$.}
	\label{fig:curvestrunc0geom}
\end{figure}

\begin{figure}
	\centering
	\includegraphics[width=0.7\textwidth]{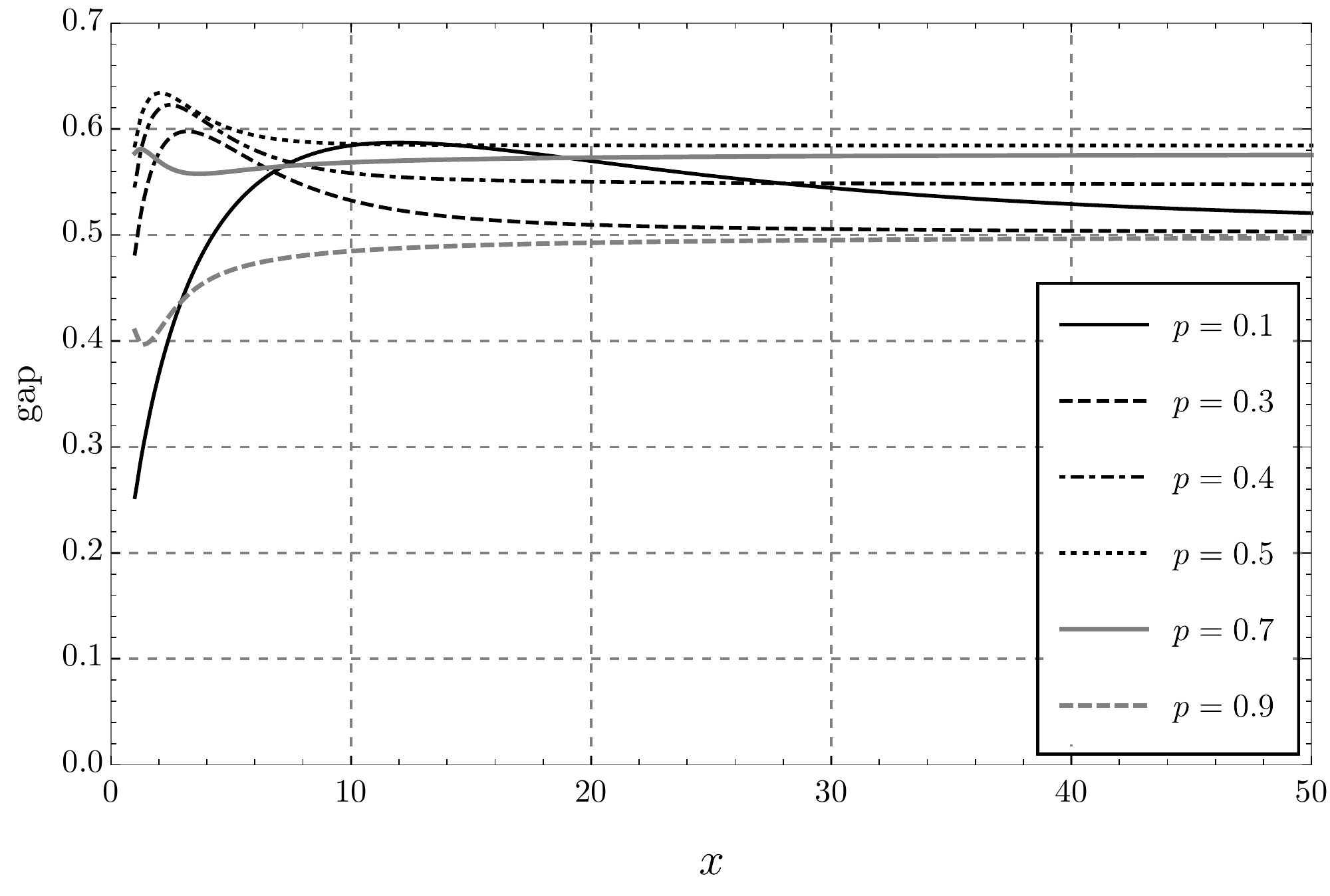}
	\caption{KL-gap $\Delta_\delta(x)$ (defined in~\eqref{eq:deltadeltaconv}) of the distribution $\Yqd$ from Section~\ref{sec:firstbound} with the choice of $\delta$ in Corollary~\ref{cor:bound0geom} plotted for $x\geq 1$ for some values of the replication parameter $p$.}
	\label{fig:klgapconv}
\end{figure}

\begin{figure}
	\centering
	\includegraphics[width=0.7\textwidth]{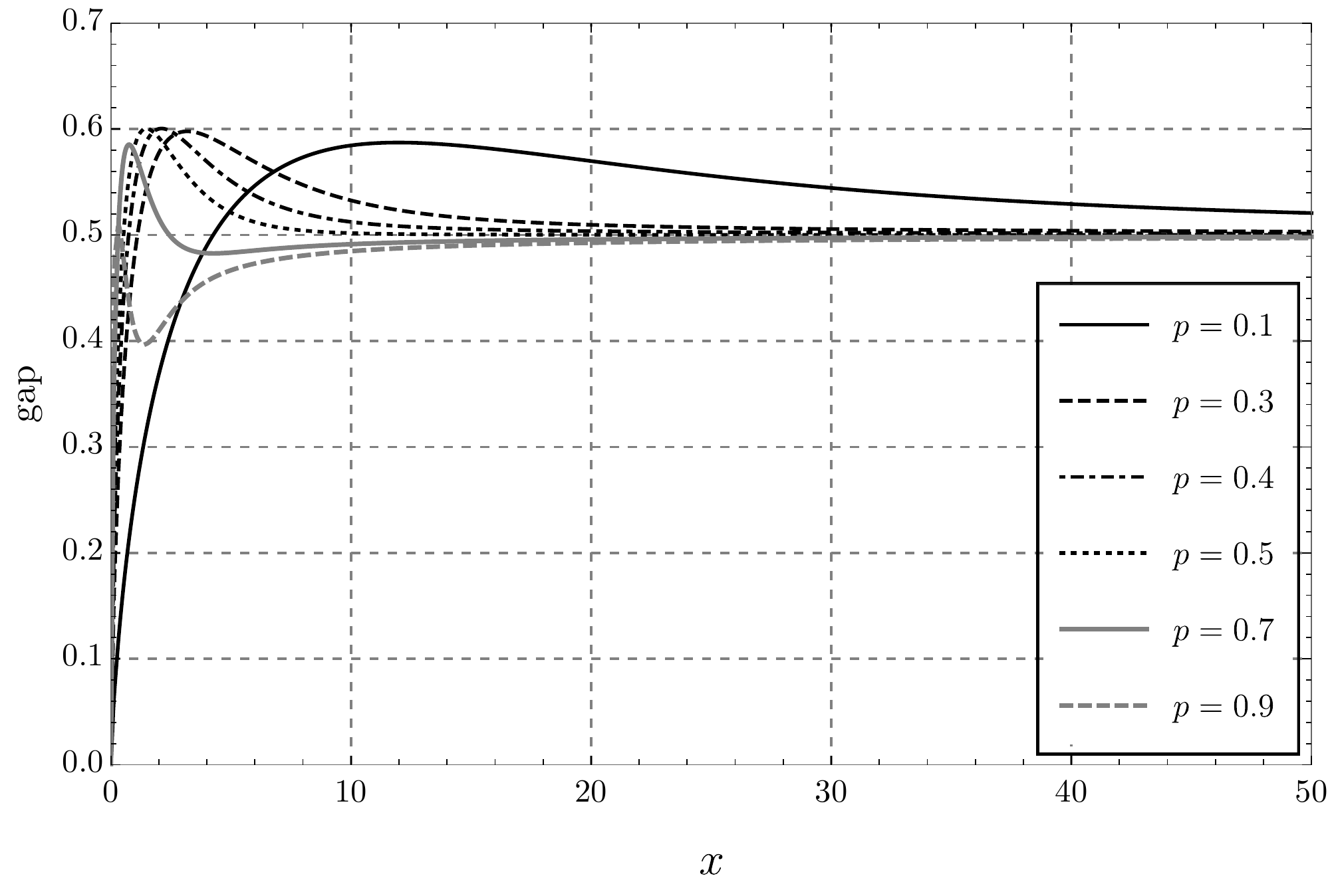}
	\caption{KL-gap $\Delta(x)$ (defined in~\eqref{eq:deltax}) of the distribution $\Yq$ from Section~\ref{sec:firstbound} plotted for $x\geq 0$ for some values of the replication parameter $p$.}
	\label{fig:klgapconvdelta1}
\end{figure}

\begin{figure}
	\centering
	\includegraphics[width=0.7\textwidth]{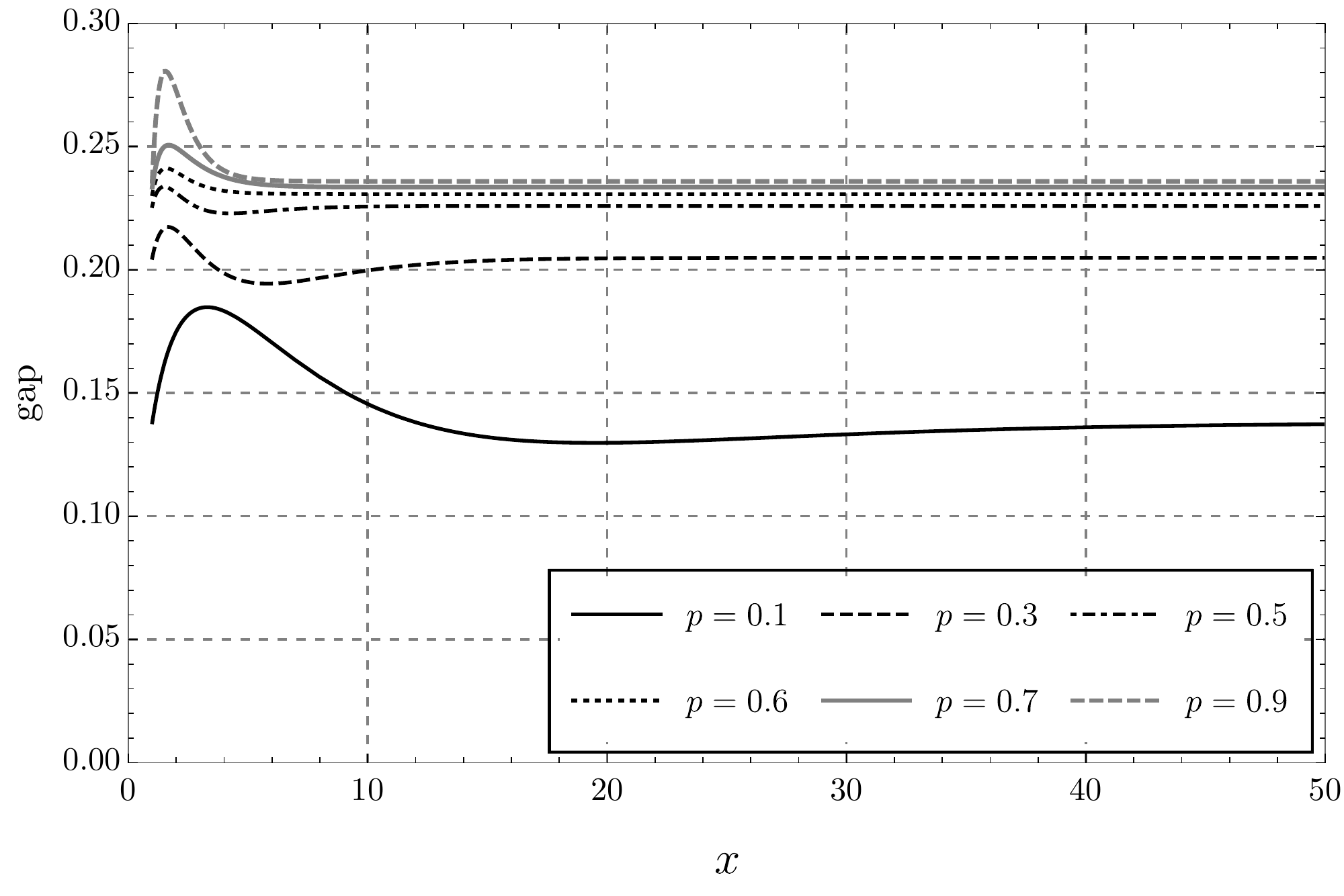}
	\caption{KL-gap $\overline{\Delta}_{\overline{\delta}}(x)$ (defined in~\eqref{eq:deltadeltatrunc}) of the distribution $\YYq_{\overline{\delta}}$ from Section~\ref{sec:trunc} with the choice of $\overline{\delta}$ in Corollary~\ref{cor:bound0geom} plotted for $x\geq 1$ for some values of the replication parameter $p$.}
	\label{fig:klgaptrunc}
\end{figure}

\begin{figure}
	\centering
	\includegraphics[width=0.7\textwidth]{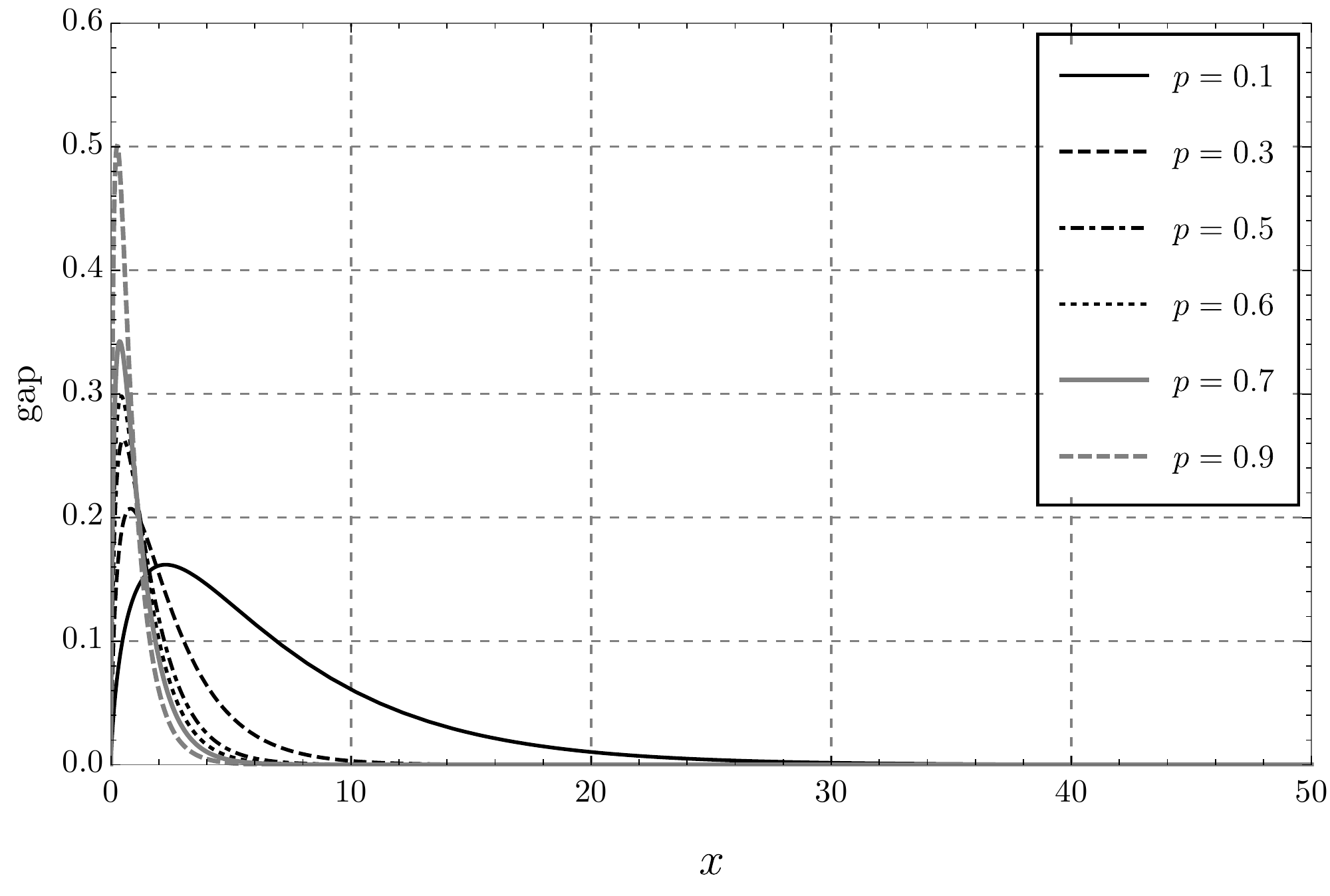}
	\caption{KL-gap $R_p(x)$ (defined in~\eqref{eq:rp}) of the distribution $\YYq$ from Section~\ref{sec:trunc} plotted for $x\geq 0$ for some values of the replication parameter $p$.}
	\label{fig:klgaptruncdelta1}
\end{figure}

\subsection{An elementary upper bound for large replication probability}\label{sec:boundsmalld}

Building up on results obtained in Sections \ref{sec:firstbound} and \ref{sec:fixmass0}, we give a simple and fully analytical proof that the capacity of the geometric deletion channel is at most $0.73$ bits/channel use for large replication parameter $p$.
\begin{thm}\label{thm:smalld}
	We have
	\[
	\Ca(D_0)\leq \frac{1}{2\log 2}+o(1) \quad\textrm{bits/channel use}
	\]
	when $p\to 1$, where $o(1)\to 0$ when $p\to 1$, and $\frac{1}{2\log 2}\approx 0.7214$.
\end{thm}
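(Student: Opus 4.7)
The plan is to instantiate the $\delta$-modification framework of Section~\ref{sec:fixmass0} with the choice $\delta = d := 1-p$, under which $\Yqd$ coincides exactly with the inverse binomial distribution $\mathsf{InvBin}_{p,q}$ (so $\alpha d = y_{\mathsf{IB}}$, as noted at the end of that section). Combining the identity~\eqref{eq:KLdeltaintermediate} with Theorem~\ref{thm:firstUB}, the nonnegativity $\eps(p) \geq 0$ from Lemma~\ref{lem:epsneg}, and the inequality $-d^x \log d \leq -d \log d$ for $x \geq 1$ yields
\[
\KL(Y_x \| \mathsf{InvBin}_{p,q}) \leq B(p,q) - \mathds{E}[Y_x] \log q, \qquad B(p,q) := -\log y_{\mathsf{IB}} + \log d - d \log d.
\]
Theorem~\ref{thm:duality} then gives $\Ca''_\mu(D_0) \leq B(p,q) - \mu \log q$, and plugging this into Corollary~\ref{cor:geombound} produces
\[
\Ca(D_0) \leq \frac{p}{d} \sup_{\mu \geq 1/d} \left(\frac{B(p,q) + \log q}{\mu} - \log q\right).
\]

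Next I would choose $q = e^{-d/2}$, so that $\log q = -d/2$, and argue that $B(p,q) \to 0$ as $p \to 1$. Granted this, the function of $\mu$ inside the supremum is monotone, with supremum equal to $-\log q = d/2$ when $B(p,q) + \log q \leq 0$ (taken as $\mu \to \infty$) and to $d\, B(p,q) - (1-d) \log q = d\, B(p,q) + pd/2$ when $B(p,q) + \log q > 0$ (attained at $\mu = 1/d$). In either case a short calculation gives $\frac{p}{d} \cdot \text{sup} = \frac{1}{2} + o(1)$ as $p \to 1$, so $\Ca(D_0) \leq \frac{1}{2} + o(1)$ nats per channel use, which equals $\frac{1}{2\log 2} + o(1)$ bits, as claimed.

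The technical heart of the argument is the asymptotic $y_{\mathsf{IB}} = (1 + o(1))\, d$ for the choice $q = e^{-d/2}$ as $p \to 1$, since this is precisely what makes $B(p,q) = -d \log d + o(1) \to 0$. Stirling's formula provides $\binom{y/p}{y} = (1+o(1))\, e^{y h(p)/p}/\sqrt{2\pi y d}$ uniformly for $y$ in the dominant summation range $y = \Theta(1/d)$, so that $\mathsf{InvBin}_{p,q}(y)/y_{\mathsf{IB}} \sim q^y/\sqrt{2\pi y d}$ there; summing and invoking the classical asymptotic $\sum_{y \geq 1} e^{-\mu y}/\sqrt{y} \sim \sqrt{\pi/\mu}$ as $\mu \to 0^+$ (applied with $\mu = d/2$) yields $1/y_{\mathsf{IB}} \sim 1/d$. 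Making this Stirling approximation sufficiently uniform, and verifying that both the $O(1)$ contribution from small $y$ and the exponentially small tail beyond $y = \Theta(1/d)$ are negligible relative to the $\Theta(1/d)$ main mass, is the most delicate part of the argument; everything else is elementary.
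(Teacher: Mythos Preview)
Your strategy coincides with the paper's: take $\delta=d$ so that $\Yqd=\mathsf{InvBin}_{p,q}$ and $1/\alpha=d/y_{\mathsf{IB}}$, pick $q$ with $-\log q\sim d/2$, and feed the resulting affine bound into Corollary~\ref{cor:geombound}. The divergence is only in how you control the normalising constant.

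The paper never attempts a two-sided asymptotic for $y_{\mathsf{IB}}$. It takes $q=1-d/2$ and quotes the ready-made estimate $1/y_{\mathsf{IB}}\le 1+\tfrac{1}{\sqrt{2d}}\bigl(\tfrac{1}{\sqrt{1-q}}-1\bigr)$ from \cite[Corollary~22]{Che17}, which for this $q$ gives $1/\alpha=d/y_{\mathsf{IB}}\le 1+d-\sqrt{d/2}<1$, hence $-\log\alpha<0$. The negative term is then simply discarded in the supremum, and the remaining pieces $-d\log d$ and $-(\mu-1)\log q$ are bounded using only $\mu d\ge1$. No Stirling analysis enters at all.

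Your route instead aims for $y_{\mathsf{IB}}=(1+o(1))d$ so that $B(p,q)\to0$. The Stirling sketch does not establish this, and in fact the claim is false. For $y=c/d$ with $c>0$ fixed and $d\to0$, the argument $yd/p\to c$ of one of the three Gamma factors stays \emph{bounded}, so Stirling contributes the nontrivial correction $e^{-r(c)}=\sqrt{2\pi c}\,(c/e)^c/\Gamma(c+1)$ rather than $1+o(1)$. Carrying this through the Riemann sum gives
\[
\frac{1}{y_{\mathsf{IB}}}\;\sim\;\frac{C}{d},\qquad C=\int_0^\infty \frac{(c/e)^c e^{-c/2}}{\Gamma(c+1)}\,dc,
\]
and since $r(c)>0$ one has $C<\int_0^\infty e^{-c/2}/\sqrt{2\pi c}\,dc=1$, so $y_{\mathsf{IB}}\sim d/C\ne d$ and $B(p,q)\to\log C<0$ rather than $0$. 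This does not break your final bound---it lands you squarely in your Case~1, where the supremum equals $-\log q$ and the conclusion follows---but the argument as written claims the wrong asymptotic and misidentifies the ``technical heart''. The cleanest fix is to abandon the Stirling computation entirely and quote \cite[Corollary~22]{Che17} with $q=1-d/2$, exactly as the paper does, to obtain $-\log\alpha<0$ in one line.
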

\begin{proof}
	For convenience, we define $d=1-p$. Combining Corollary~\ref{cor:geombound} and \eqref{eq:capUBdelta} instantiated with $\Yq$ defined in Section~\ref{sec:firstbound}, we conclude that
	\begin{equation}\label{eq:explicitbound}
	\Ca(D_0)\leq\frac{p}{d}\sup_{\mu\geq 1/d}\frac{\inf_{q,\delta}(-\eps_\delta(p)-d\log\delta-\log\alpha-(\mu-1)\log q)}{\mu},
	\end{equation}
	where the infimum is taken over all $q\in(0,1)$ and $\delta\in (0,1]$. Moreover, recalling \eqref{eq:epsdelta} and Lemma~\ref{lem:epsneg}, we have
	\[
		\eps_\delta(p)\geq \eps(p)\geq 0
	\]
for all $\delta\in(0,1]$ and $p\in(0,1)$. Therefore,
	\begin{equation}\label{eq:explicitbound2}
		\Ca(D_0)\leq\frac{p}{d}\sup_{\mu\geq 1/d}\frac{\inf_{q,\delta}(-d\log\delta-\log\alpha-(\mu-1)\log q)}{\mu},
	\end{equation}

 We set $\delta=d$, and begin by estimating $-\log \alpha$. Recall that $1/\alpha=\delta+1/y_0-1$. Then,
	\[
	1/\alpha=\delta+d(1/y_\mathsf{IB}-1)=d+d(1/y_\mathsf{IB}-1)=d/y_\mathsf{IB}.
	\]
	It is possible to bound $1/y_\mathsf{IB}$ according to \cite[Corollary~22]{Che17} for large $p$ as
	\[
	1/y_\mathsf{IB}\leq 1+\frac{1}{\sqrt{2d}}\left(\frac{1}{\sqrt{1-q}}-1\right),
	\]
	and so
	\[
	1/\alpha\leq d+\sqrt{d/2}\left(\frac{1}{\sqrt{1-q}}-1\right).
	\]
	Setting $q=1-d/2$ yields
	\begin{equation}\label{ineq:alpha}
	1/\alpha\leq d+\sqrt{d/2}\left(\frac{1}{\sqrt{d/2}}-1\right)=1+d-\sqrt{d/2}<1
	\end{equation}
	for $d<1/2$, which implies that $-\log \alpha< 0$. Taking into account \eqref{eq:explicitbound2} and setting $\delta=d$, $q=1-d/2$, we obtain the bound
	\begin{align*}
	\Ca(D_0)&\leq \frac{p}{d}\sup_{\mu\geq 1/d}\frac{-d\log d -\log\alpha-(\mu-1)\log q}{\mu}\\
	&\leq \sup_{\mu\geq 1/d}\frac{-d\log d -\log\alpha-(\mu-1)\log q}{\mu d}\\
	&\leq -d\log d-\frac{\log q}{d},
	\end{align*}
	where in the second inequality we used the fact that $p<1$, and in the third inequality we used the fact that $\mu d\geq 1$, $-\log \alpha<0$, and $-\frac{(\mu-1)\log q}{\mu d}\leq -\frac{\log q}{d}$.
	
	Recalling that $q=1-d/2$, we have $-\frac{\log q}{d}=\frac{1}{2}+o(1)$, where $o(1)\to 0$ when $d\to 0$ (equivalently, $p\to 1$). Finally, observe that $-d\log d=o(1)$ as well. This gives the desired bound in nats/channel use, and dividing it by $\log 2$ concludes the proof.
\end{proof}

\begin{remark}
	Note that choosing $\delta=d$ as we did in the proof is equivalent to choosing the inverse binomial distribution from \cite{Che17} as the candidate distribution $Y$.
\end{remark}



\section{Conclusions and future directions}

We derived analytical capacity upper bounds for sticky channels and a channel combining geometric replications and deletions, which we called the geometric deletion channel. 

Our bounds for sticky channels are extremely sharp if the duplication probability is not too large, and in fact improve upon the previously known numerical upper bounds for some values of the duplication probability. Moreover, our bounds are induced by distributions which achieve zero KL-gap in the framework of~\cite{Che17}. This is the first time such distributions have been designed for channels with synchronization errors. 

If the distributions with zero KL-gap were also valid channel output distributions, then we would have derived an exact expression for the capacity of the associated channels. However, this turns out not to be the case. A natural next step is to attempt to derive distributions which satisfy both these conditions. This most likely will require employing new techniques. It would also be interesting to find an example of a non-trivial repeat channel whose capacity can be determined exactly via our techniques. 

Another important path would be to determine the capacity of a finite version of the memoryless channels studied in Sections~\ref{sec:zerogapsticky}~and~\ref{sec:zerogapdupl}, where one only allows input $x\leq A$ for some fixed constant $A$.


We significantly improved upon the previous best capacity upper bounds for the geometric deletion channel. This was done by exploiting the fact that we can modify the mass of the underlying distribution at $y=0$ with ease in Section~\ref{sec:fixmass0}. Moreover, this observation also led to a simple, fully analytical proof of a non-trivial capacity upper bound for the geometric deletion channel with large duplication probability. In particular, we give a fully analytical proof that the capacity of this channel is bounded away from 1 when the replication parameter approaches 1. Such a bound was inaccessible via previous methods. A possible direction for future research is to obtain improved bounds for a continuous interval of the channel parameter with fully analytical proofs (both for the geometric deletion channel and other channels) by exploiting the technique from Section \ref{sec:fixmass0} in a more refined way.

\section{Acknowledgments}

We thank Iosif Pinelis for the simple proof of \eqref{eq:asympint} in the proof of Lemma~\ref{lem:asymp}. We also thank Hugues Mercier for providing us with the data points of the numerical bounds in~\cite{MTL12}.

\bibliographystyle{IEEEtran}
\bibliography{sticky-refs}

\begin{thebibliography}{10}
\providecommand{\url}[1]{#1}
\csname url@samestyle\endcsname
\providecommand{\newblock}{\relax}
\providecommand{\bibinfo}[2]{#2}
\providecommand{\BIBentrySTDinterwordspacing}{\spaceskip=0pt\relax}
\providecommand{\BIBentryALTinterwordstretchfactor}{4}
\providecommand{\BIBentryALTinterwordspacing}{\spaceskip=\fontdimen2\font plus
\BIBentryALTinterwordstretchfactor\fontdimen3\font minus
  \fontdimen4\font\relax}
\providecommand{\BIBforeignlanguage}[2]{{%
\expandafter\ifx\csname l@#1\endcsname\relax
\typeout{** WARNING: IEEEtran.bst: No hyphenation pattern has been}%
\typeout{** loaded for the language `#1'. Using the pattern for}%
\typeout{** the default language instead.}%
\else
\language=\csname l@#1\endcsname
\fi
#2}}
\providecommand{\BIBdecl}{\relax}
\BIBdecl

\bibitem{YKGMM15}
S.~M. H.~T. Yazdi, H.~M. Kiah, E.~Garcia-Ruiz, J.~Ma, H.~Zhao, and
  O.~Milenkovic, ``Dna-based storage: Trends and methods,'' \emph{IEEE
  Transactions on Molecular, Biological and Multi-Scale Communications},
  vol.~1, no.~3, pp. 230--248, Sept 2015.

\bibitem{MD06}
M.~Mitzenmacher and E.~Drinea, ``A simple lower bound for the capacity of the
  deletion channel,'' \emph{IEEE Transactions on Information Theory}, vol.~52,
  no.~10, pp. 4657--4660, 2006.

\bibitem{Dal10}
M.~Dalai, ``A new bound on the capacity of the binary deletion channel with
  high deletion probabilities,'' in \emph{2011 IEEE International Symposium on
  Information Theory Proceedings}, July 2011, pp. 499--502.

\bibitem{FD10}
D.~Fertonani and T.~M. Duman, ``Novel bounds on the capacity of the binary
  deletion channel,'' \emph{IEEE Transactions on Information Theory}, vol.~56,
  no.~6, pp. 2753--2765, 2010.

\bibitem{RD15}
M.~Rahmati and T.~M. Duman, ``Upper bounds on the capacity of deletion channels
  using channel fragmentation,'' \emph{IEEE Transactions on Information
  Theory}, vol.~61, no.~1, pp. 146--156, 2015.

\bibitem{Che17}
M.~Cheraghchi, ``Capacity upper bounds for deletion-type channels,'' in
  \emph{Proceedings of 50th Annual ACM Symposium on Theory of Computing (STOC
  2018)}, 2018.

\bibitem{KMS10}
A.~Kalai, M.~Mitzenmacher, and M.~Sudan, ``Tight asymptotic bounds for the
  deletion channel with small deletion probabilities,'' in \emph{2010 IEEE
  International Symposium on Information Theory}, June 2010, pp. 997--1001.

\bibitem{KM13}
Y.~Kanoria and A.~Montanari, ``Optimal coding for the binary deletion channel
  with small deletion probability,'' \emph{IEEE Transactions on Information
  Theory}, vol.~59, no.~10, pp. 6192--6219, Oct 2013.

\bibitem{FMS09}
F.~Farnoud, O.~Milenkovic, and N.~P. Santhanam, ``Small-sample distribution
  estimation over sticky channels,'' in \emph{2009 IEEE International Symposium
  on Information Theory}, June 2009, pp. 1125--1129.

\bibitem{DMP07}
S.~Diggavi, M.~Mitzenmacher, and H.~D. Pfister, ``Capacity upper bounds for the
  deletion channel,'' in \emph{Information Theory, 2007. ISIT 2007. IEEE
  International Symposium on}.\hskip 1em plus 0.5em minus 0.4em\relax IEEE,
  2007, pp. 1716--1720.

\bibitem{DM07}
E.~Drinea and M.~Mitzenmacher, ``Improved lower bounds for the capacity of iid
  deletion and duplication channels,'' \emph{IEEE Transactions on Information
  Theory}, vol.~53, no.~8, pp. 2693--2714, 2007.

\bibitem{DK07}
E.~Drinea and A.~Kirsch, ``Directly lower bounding the information capacity for
  channels with i.i.d. deletions and duplications,'' in \emph{2007 IEEE
  International Symposium on Information Theory}, June 2007, pp. 1731--1735.

\bibitem{Mit08}
M.~Mitzenmacher, ``Capacity bounds for sticky channels,'' \emph{IEEE
  Transactions on Information Theory}, vol.~54, no.~1, pp. 72--77, 2008.

\bibitem{MTL12}
H.~Mercier, V.~Tarokh, and F.~Labeau, ``Bounds on the capacity of discrete
  memoryless channels corrupted by synchronization and substitution errors,''
  \emph{IEEE Transactions on Information Theory}, vol.~58, no.~7, pp.
  4306--4330, July 2012.

\bibitem{ISW16}
A.~R. Iyengar, P.~H. Siegel, and J.~K. Wolf, ``On the capacity of channels with
  timing synchronization errors,'' \emph{IEEE Transactions on Information
  Theory}, vol.~62, no.~2, pp. 793--810, 2016.

\bibitem{FDE11}
D.~Fertonani, T.~M. Duman, and M.~F. Erden, ``Bounds on the capacity of
  channels with insertions, deletions and substitutions,'' \emph{IEEE
  Transactions on Communications}, vol.~59, no.~1, pp. 2--6, January 2011.

\bibitem{RD13}
M.~Rahmati and T.~M. Duman, ``Bounds on the capacity of random insertion and
  deletion-additive noise channels,'' \emph{IEEE Transactions on Information
  Theory}, vol.~59, no.~9, pp. 5534--5546, 2013.

\bibitem{VTR13}
R.~Venkataramanan, S.~Tatikonda, and K.~Ramchandran, ``Achievable rates for
  channels with deletions and insertions,'' \emph{IEEE Transactions on
  Information Theory}, vol.~59, no.~11, pp. 6990--7013, 2013.

\bibitem{RD14}
M.~Rahmati and T.~M. Duman, ``Achievable rates for noisy channels with
  synchronization errors,'' \emph{IEEE Transactions on Communications},
  vol.~62, no.~11, pp. 3854--3863, 2014.

\bibitem{RA13}
M.~Ramezani and M.~Ardakani, ``On the capacity of duplication channels,''
  \emph{IEEE Transactions on Communications}, vol.~61, no.~3, pp. 1020--1027,
  March 2013.

\bibitem{AG93}
K.~A.~S. Abdel-Ghaffar, ``Capacity per unit cost of a discrete memoryless
  channel,'' \emph{Electronics Letters}, vol.~29, no.~2, pp. 142--144, Jan
  1993.

\bibitem{Che17a}
M.~{Cheraghchi}, ``{Expressions for the Entropy of Binomial-Type
  Distributions},'' in \emph{Proceedings of IEEE International Symposium on
  Information Theory (ISIT 2018)}, 2018.

\bibitem{AS65}
M.~Abramowitz and I.~A. Stegun, \emph{Handbook of mathematical functions with
  formulas, graphs, and mathematical tables}.\hskip 1em plus 0.5em minus
  0.4em\relax Dover New York, 1965, vol. 2172.

\bibitem{KP16}
\BIBentryALTinterwordspacing
D.~B. Karp and E.~G. Prilepkina, ``Completely monotonic gamma ratio and
  infinitely divisible h-function of fox,'' \emph{Computational Methods and
  Function Theory}, vol.~16, no.~1, pp. 135--153, Mar 2016. [Online].
  Available: \url{https://doi.org/10.1007/s40315-015-0128-9}
\BIBentrySTDinterwordspacing

\bibitem{Can17}
C.~Canonne, ``{A short note on Poisson tail bounds},''
  \url{http://www.cs.columbia.edu/~ccanonne/files/misc/2017-poissonconcentration.pdf},
  2017.

\bibitem{Hae18}
B.~{Haeupler}, ``{Optimal Document Exchange and New Codes for Small Number of
  Insertions and Deletions},'' \emph{ArXiv e-prints}, 2018.

\bibitem{CJLW18}
K.~{Cheng}, Z.~{Jin}, X.~{Li}, and K.~{Wu}, ``{Deterministic Document Exchange
  Protocols, and Almost Optimal Binary Codes for Edit Errors},'' \emph{ArXiv
  e-prints}, 2018.

\end{thebibliography}

\begin{appendices}
	\section{The capacity of the Poisson-repeat channel for small deletion probability}\label{app:poisson}
	
	The Poisson-repeat channel is a repeat channel with replication distribution $D=\mathsf{Poi}_\lambda$, where $\mathsf{Poi}_\lambda$ denotes a Poisson distribution with expected value $\lambda$, i.e.,  
	\[
	D(y)=\frac{e^{-\lambda}\lambda^y}{y!},\quad y=0,1,2,\dots
	\]
	In this appendix, we show that the capacity of the Poisson-repeat channel with parameter $\lambda$ converges to $1$ when $\lambda\to\infty$. This regime corresponds to the setting where the expected number of bit replications grows to infinity, or, equivalently, the deletion probability $D(0)=e^{-\lambda}$ converges to $0$.
	
	Before we prove the desired result, we need the following concentration bound for the Poisson distribution. This bound is a corollary of Bennett's inequality~\cite{Can17}.\footnote{Alternatively, one can obtain a concentration bound for $\mathsf{Poi}_\lambda$ by considering a Chernoff bound for $\mathsf{Bin}_{n,\lambda/n}$ and noting that $\mathsf{Bin}_{n,\lambda/n}$ converges to $\mathsf{Poi}_\lambda$ in distribution when $n\to\infty$.}
	\begin{lem}\label{lem:concpoisson}
		We have
		\[
		\Pr[(1-\delta)\lambda\leq \mathsf{Poi}_\lambda\leq(1+\delta)\lambda]\geq 1- 2\exp\left(-\frac{\delta^2\lambda}{4}\right)
		\]
		for $0\leq\delta\leq 1$.
	\end{lem}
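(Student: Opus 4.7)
The plan is to prove Lemma~\ref{lem:concpoisson} by a direct Chernoff argument tailored to the Poisson distribution, bypassing the need to reduce it to the bounded setting of Bennett's inequality (which is awkward since Poisson is unbounded, although one can recover the bound by applying Bennett to $n$ i.i.d.\ $\mathsf{Poi}_{\lambda/n}$ summands and passing to the limit). The key fact to exploit is that the moment generating function of $X \sim \mathsf{Poi}_\lambda$ has the clean closed form $\mathds{E}[e^{tX}] = \exp(\lambda(e^t - 1))$.

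First I would treat the upper tail. Applying Markov's inequality to $e^{tX}$ for $t>0$ and then optimizing over $t$ gives, for $\delta > 0$, the standard bound
\[
\Pr[X \geq (1+\delta)\lambda] \leq \exp(-\lambda\, \psi_+(\delta)), \qquad \psi_+(\delta) = (1+\delta)\log(1+\delta) - \delta,
\]
where the optimizer is $t = \log(1+\delta)$. Symmetrically, for $0 < \delta < 1$, using $t = \log(1-\delta) < 0$ yields
\[
\Pr[X \leq (1-\delta)\lambda] \leq \exp(-\lambda\, \psi_-(\delta)), \qquad \psi_-(\delta) = (1-\delta)\log(1-\delta) + \delta.
\]
Both derivations are the standard Cram\'er--Chernoff computation and should occupy only a few lines.

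The next step is the elementary calculus check that both rate functions dominate $\delta^2/4$ on $[0,1]$. For the upper-tail function, define $F(\delta) = \psi_+(\delta) - \delta^2/4$; then $F(0) = F'(0) = 0$ and $F''(\delta) = \frac{1}{1+\delta} - \frac{1}{2} \geq 0$ precisely on $[0,1]$, so $F' \geq 0$ and hence $F \geq 0$ on that interval. For the lower-tail function, define $G(\delta) = \psi_-(\delta) - \delta^2/4$; then $G(0) = G'(0) = 0$ and $G''(\delta) = \frac{1}{1-\delta} - \frac{1}{2} \geq 0$ on $[0,1)$, giving the same conclusion. Combining the two tail bounds with the union bound then produces
\[
\Pr[(1-\delta)\lambda \leq \mathsf{Poi}_\lambda \leq (1+\delta)\lambda] \geq 1 - 2\exp(-\delta^2 \lambda/4),
\]
as required.

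There is no substantial obstacle here: the only mildly nontrivial step is verifying the quadratic lower bound on $\psi_\pm$, which reduces to a one-line second-derivative comparison on $[0,1]$. The constant $1/4$ (rather than the sharper $1/3$ for the upper tail or $1/2$ for the lower tail that one could extract) is chosen precisely so that a single uniform bound handles both tails symmetrically, which is all the lemma needs.
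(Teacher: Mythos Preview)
Your proof is correct. The Chernoff computation and the second-derivative checks for $\psi_\pm(\delta)\ge\delta^2/4$ on $[0,1]$ are accurate (at $\delta=1$ the lower-tail case follows by continuity, since $(1-\delta)\log(1-\delta)\to 0$ gives $\psi_-(1)=1\ge 1/4$, or by the direct check $\Pr[X\le 0]=e^{-\lambda}\le e^{-\lambda/4}$).

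As for comparison: the paper does not actually prove this lemma. It states the bound as a corollary of Bennett's inequality (with a citation) and, in a footnote, mentions the alternative of applying a Chernoff bound to $\mathsf{Bin}_{n,\lambda/n}$ and passing to the limit in $n$. Your route---working directly with the Poisson moment generating function $\exp(\lambda(e^t-1))$---is a third, and arguably the cleanest, option: it is fully self-contained, avoids both the unboundedness issue in applying Bennett and the limiting argument from the binomial, and yields exactly the stated constant with no slack to manage. So you are supplying a complete proof where the paper only gives a pointer.
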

	
	The following lemma states that we can approximate the true channel input from its output in edit distance with high probability. We denote the edit (Levenshtein) distance between two strings $x$ and $y$ by $\mathsf{ED}(x,y)$.
	\begin{lem}\label{lem:poissondecode}
		Given $0<\eps<1$ and $\lambda$ large enough, the following holds. Let $Y$ be the output of the Poisson-repeat channel with parameter $\lambda$ given some fixed arbitrary $n$-bit string $x$ as input. Then, we can obtain $\hat{x}$ from $Y$ such that $\mathsf{ED}(x,\hat{x})\leq \eps n$ with probability $1-o_n(1)$ as $n\to\infty$.
	\end{lem}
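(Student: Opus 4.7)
The plan is to exhibit an explicit run-length decoder and bound its edit distance using Poisson concentration on the per-run total repetition counts. Given $Y$, compute its run-length decomposition $Y=(b'_1)^{L'_1}(b'_2)^{L'_2}\cdots(b'_{R'})^{L'_{R'}}$ and output $\hat x=(b'_1)^{\hat r'_1}\cdots(b'_{R'})^{\hat r'_{R'}}$, where $\hat r'_j$ is the integer nearest to $L'_j/\lambda$. Write $x=b_1^{r_1}\cdots b_R^{r_R}$ with $\sum_k r_k=n$, and let $L_k\sim\mathsf{Poi}_{r_k\lambda}$ denote the total number of output bits contributed by the $k$-th input run; by construction $L_1,\ldots,L_R$ are independent. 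Set $V_k:=L_k/\lambda-r_k$.

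The first step is the combinatorial inequality
\[
\mathsf{ED}(x,\hat x)\le\sum_{k:L_k=0}r_k\;+\;2\sum_{k:L_k>0}|V_k|,
\]
which I would establish by an explicit edit sequence. Each output run $j$ is produced by a group $G_j$ of consecutive non-deleted input runs sharing a common bit, with every opposite-bit input run lying strictly between them deleted. One turns $\hat x$ into $x$ by (i) inserting each deleted input run in its correct position (total cost $\sum_{k:L_k=0}r_k$), which simultaneously acts as the missing separator that un-merges the merged output run, and (ii) adjusting each decoded run length $\hat r'_j$ to match $\sum_{k\in G_j}r_k$. Since nearest-integer rounding is exact whenever $|L'_j/\lambda-\sum_{k\in G_j}r_k|<1/2$ and at most doubles the real error otherwise, step (ii) costs at most $2|\sum_{k\in G_j}V_k|\le 2\sum_{k\in G_j}|V_k|$; summing over $j$ gives the claim.

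The second step is to control both terms by Chebyshev's inequality, using the independence of $L_1,\ldots,L_R$. For the deletion term, $\mathds{E}\bigl[\sum_k r_k\mathds{1}[L_k=0]\bigr]=\sum_k r_k e^{-r_k\lambda}\le ne^{-\lambda}$; because $r^2 e^{-r\lambda}\le e^{-\lambda}$ for every integer $r\ge 1$ once $\lambda\ge 2$, the variance is bounded by $\sum_k r_k^2 e^{-r_k\lambda}\le ne^{-\lambda}$ as well. For the fluctuation term, $\mathds{E}[|V_k|]\le\sqrt{r_k/\lambda}$, so Cauchy--Schwarz gives $\mathds{E}[\sum_k|V_k|]\le\sqrt{Rn/\lambda}\le n/\sqrt{\lambda}$, and $\operatorname{Var}(\sum_k|V_k|)\le\sum_k\mathds{E}[V_k^2]=n/\lambda$. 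Choosing $\lambda$ large enough as a function of $\eps$ (say $\lambda\ge\max(\log(6/\eps),36/\eps^2)$) makes each mean at most $\eps n/6$, and Chebyshev yields deviations of at most $\eps n/6$ with probability $1-O(1/n)$. Combining, $T_1+2T_2\le\eps n$ with probability $1-o_n(1)$, as desired.

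The main obstacle I anticipate is the combinatorial step: when several neighbouring input runs are simultaneously deleted, multiple non-adjacent same-bit input runs collapse into one output run, and the decoder has no way to recover the original run boundaries. The key observation that makes the argument work is that the inserts in step (i) serve a dual role—they both supply the missing content and automatically split the merged output run back into the correct number of pieces—so the cost of every merging event is already charged to $\sum_{k:L_k=0}r_k$ and no additional ``merge penalty'' appears in the bound.
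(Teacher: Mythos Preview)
Your proposal is correct, and the decoder you use is the same as the paper's, but the two proofs diverge from that point on.

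The paper works at the \emph{bit} level: it lets $L_i\sim\mathsf{Poi}_\lambda$ be the number of copies of the $i$-th input bit and asserts the combinatorial bound $\mathsf{ED}(x,\hat x)\le 2\sum_{i=1}^n|L_i/\lambda-1|$ (with only a one-line justification). It then invokes a Bennett-type Poisson tail bound to classify each bit as $\delta$-good or $\delta$-bad, uses a Chernoff bound to control the number of $\delta$-bad bits, and separately bounds the contributions of the good and bad bits.

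You instead work at the \emph{run} level and prove the tighter inequality $\mathsf{ED}(x,\hat x)\le\sum_{k:L_k=0}r_k+2\sum_{k:L_k>0}|V_k|$; your per-bit counterpart would actually imply the paper's bound via $|V_k|\le\sum_{i\in\text{run }k}|L_i/\lambda-1|$. (Your insert-then-adjust argument is fine; it can be phrased even more cleanly via the triangle inequality with the intermediate string $\tilde x$ obtained from $x$ by deleting exactly the runs with $L_k=0$: then $\tilde x$ and $\hat x$ have the same run pattern, so $\mathsf{ED}(\tilde x,\hat x)\le\sum_j|\hat r'_j-\sum_{k\in G_j}r_k|$.) On the probabilistic side you replace Bennett and Chernoff by a direct second-moment (Chebyshev) argument, which is more elementary and entirely sufficient here. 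The trade-off is that the paper's per-bit i.i.d.\ formulation is slightly simpler to state, while your argument gives a sharper combinatorial bound with a cleaner justification and avoids any exponential concentration inequalities.
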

	\begin{proof}
		We begin by describing how we obtain $\hat{x}$ from $Y$. Given some string $s$, we call a maximal consecutive sequence of bits with the same value in $s$ a \emph{run}. Let $Y_i$ denote the $i$-th run in $Y$. Then, the $i$-th run of $\hat{x}$ is obtained by writing down the bit value that appears in $Y_i$ exactly $[|Y_i|/\lambda]$ times, where $[w]$ denotes the closest integer to $w$.
		
		We now upper bound $\mathsf{ED}(x,\hat{x})$. Let $L_i$ denote the number of times $x_i$ is replicated in $Y$ for $i=1,2,\dots, n$. Each bit $x_i$ contributes at most $2|L_i/\lambda-1|$ to $\mathsf{ED}(x,\hat{x})$. Therefore, we have
		\[
		\mathsf{ED}(x,\hat{x})\leq 2\sum_{i=1}^n \left|L_i/\lambda-1\right|.
		\]
		It now remains to show that $2\sum_{i=1}^n |L_i/\lambda-1|\leq \eps n$ with probability $1-o_n(1)$ as $n\to\infty$ if $\lambda$ is large enough.
		
		With some hindsight, let $\delta=\eps/8$. Note that, by Lemma~\ref{lem:concpoisson}, we have
		\begin{equation}\label{eq:outlength}
		(1-\delta)\lambda n\leq |Y|\leq (1+\delta)\lambda n
		\end{equation}
		with probability at least $1-2\exp\left(-\frac{\delta^2 \lambda n}{4}\right)=1-o_n(1)$. This is because $|Y|$ is distributed according to $\mathsf{Poi}_{\lambda n}$.
		
		Recall that $L_i$ denotes the number of times $x_i$ is replicated in $Y$. We say $i$ is \emph{$\delta$-good} if
		\[
		(1-\delta)\lambda \leq L_i\leq (1+\delta)\lambda,
		\]
		and we say that $i$ is \emph{$\delta$-bad} otherwise. By Lemma~\ref{lem:concpoisson}, the probability that $i$ is $\delta$-good is at least $1-2\exp\left(-\frac{\delta^2 \lambda}{4}\right)$. A standard application of the Chernoff bound implies that, with probability $1-o_n(1)$, at most an $\eps_{\textrm{bad}}=4\exp\left(-\frac{\delta^2 \lambda}{4}\right)$ fraction of $i$'s are $\delta$-bad.
		
		From the definition of $\delta$-good and $\eps_{\textrm{bad}}$ it follows that with probability $1-o_n(1)$ we have
		\begin{equation}\label{eq:li}
		(1-\eps_{\textrm{bad}})(1-\delta)\lambda n\leq\sum_{i:\textrm{ $i$ is $\delta$-good}}L_i\leq (1+\delta)\lambda n.
		\end{equation}
		
		Combining \eqref{eq:li}, the fact that $|Y|=\sum_{i=1}^n L_i$, and \eqref{eq:outlength}, with probability $1-o_n(1)$ it holds that
		\[
		\sum_{i:\textrm{ $i$ is $\delta$-bad}}L_i\leq (1+\delta)\lambda n - (1-\eps_{\textrm{bad}})(1-\delta)\lambda n\leq\alpha\lambda n
		\]
		for $\alpha=2\delta+\eps_{\textrm{bad}}$. As a result,
		\begin{align}
		\sum_{i:\textrm{ $i$ is $\delta$-bad}}|L_i-\lambda| &\leq \sum_{i:\textrm{ $i$ is $\delta$-bad}}(L_i+\lambda)\nonumber\\
		&\leq (\alpha+\eps_{\textrm{bad}})\lambda n\label{eq:deltabad}
		\end{align}
		holds with probability $1-o_n(1)$.
		
		From the previous observations, with probability $1-o_n(1)$ we have
		\begin{align*}
		2\sum_{i=1}^n |L_i/\lambda-1|&=\frac{2}{\lambda}\sum_{i=1}^n |L_i-\lambda|\\
		&=\frac{2}{\lambda}\left(\sum_{\substack{i:\\\textrm{ $i$ is $\delta$-good}}}|L_i-\lambda|+\sum_{\substack{i:\\\textrm{ $i$ is $\delta$-bad}}}|L_i-\lambda|\right)\\
		&\leq \frac{2}{\lambda}(\delta\lambda n+(\alpha+\eps_{\textrm{bad}})\lambda n)\\
		&=2(\delta+\alpha+\eps_{\textrm{bad}})n\\
		&\leq \eps n
		\end{align*}
		if $\lambda$ is large enough, as desired. The first inequality follows from the definition of $\delta$-good and \eqref{eq:deltabad}. The second inequality holds because $\eps_{\textrm{bad}}\leq \eps/16$ if $\lambda$ is large enough (recall that $\eps$ is a constant), and thus, in this case,
		\[
		2(\delta+\alpha+\eps_{\textrm{bad}})\leq 2(\eps/8+2\eps/8+\eps/16+\eps/16)=\eps.
		\]
		
	\end{proof}
	
	Let $\Ca(\lambda)$ denote the capacity of the Poisson-repeat channel with parameter $\lambda$. We are now ready to prove the following result.
	\begin{thm}
		We have
		\[
		\lim_{\lambda\to\infty}\Ca(\lambda)=1.
		\]
	\end{thm}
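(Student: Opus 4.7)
The plan is to combine the approximate-recovery guarantee of Lemma~\ref{lem:poissondecode} with the existence of binary codes correcting a small fraction of adversarial edit errors. Fix an arbitrary $\varepsilon \in (0, 1/2)$. The overall strategy is: pick a binary code $\mathcal{C} \subseteq \{0,1\}^n$ with rate $R(\varepsilon)$ that can correct $\varepsilon n$ insertions and deletions; transmit codewords of $\mathcal{C}$ through the Poisson-repeat channel; use the procedure from Lemma~\ref{lem:poissondecode} to obtain a string $\hat{x}$ with $\mathsf{ED}(x, \hat{x}) \le \varepsilon n$ with probability $1-o_n(1)$; then apply the edit-error decoder of $\mathcal{C}$ to recover $x$ exactly. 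Since $\lambda$ is a constant (not depending on $n$), Lemma~\ref{lem:poissondecode} applies, and the whole decoder succeeds with probability $1-o_n(1)$, so rate $R(\varepsilon)$ is achievable on the Poisson-repeat channel whenever $\lambda$ is large enough as a function of $\varepsilon$.

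First, I would invoke a standard existence result for codes correcting a fraction $\varepsilon$ of edit errors: there exists an infinite family of binary codes of length $n$ with rate $R(\varepsilon) \ge 1 - f(\varepsilon)$ for some function $f$ with $f(\varepsilon) \to 0$ as $\varepsilon \to 0$ (for instance, Guruswami--Smith style constructions, or even a simple random-code existence argument for list/unique decoding in edit distance suffice, since we only need $f(\varepsilon) \to 0$ and not its precise rate). Next, I would note that since the Poisson-repeat channel's output length is $\mathsf{Poi}_{\lambda n}$, which is a constant times $n$ in expectation and concentrates, the per-channel-use rate of this scheme equals $R(\varepsilon)$ up to a vanishing additive term.

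Combining the two ingredients: for every $\varepsilon>0$, choose $\lambda_0 = \lambda_0(\varepsilon)$ from Lemma~\ref{lem:poissondecode}. Then for all $\lambda \ge \lambda_0$, the capacity satisfies $\mathsf{Cap}(\lambda) \ge R(\varepsilon) - o_n(1) \ge 1 - f(\varepsilon)$. Taking the limit $\lambda \to \infty$ followed by $\varepsilon \to 0$ yields $\liminf_{\lambda \to \infty} \mathsf{Cap}(\lambda) \ge 1$. Since any binary-input channel has capacity at most $1$ bit per channel use, we obtain $\lim_{\lambda \to \infty} \mathsf{Cap}(\lambda) = 1$, as desired.

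The main (mild) obstacle is citing or justifying the existence of a code family with rate tending to $1$ that corrects an $\varepsilon$ fraction of edit errors; this is classical but must be stated carefully because the channel output length is random rather than exactly $n$. I would handle this by either conditioning on the high-probability event that the output length lies in $[(1-\delta)\lambda n, (1+\delta)\lambda n]$ (using Lemma~\ref{lem:concpoisson}), which reduces the problem to a bounded output length so that the edit-correcting code's guarantees apply, or by letting the decoder first pad/truncate $\hat{x}$ to length $n$ (adding at most $o(n)$ extra edit errors on the concentration event). Everything else in the argument is routine.
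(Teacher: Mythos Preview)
Your proposal is correct and follows essentially the same approach as the paper: combine Lemma~\ref{lem:poissondecode} with the existence of edit-error-correcting codes of rate approaching $1$, then let $\varepsilon\to 0$. Your extra discussion about the random output length is unnecessary, since Lemma~\ref{lem:poissondecode} already hands you a string $\hat{x}$ with $\mathsf{ED}(x,\hat{x})\le \varepsilon n$, which is exactly what the length-$n$ edit-correcting code needs; no padding or truncation is required.
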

	\begin{proof}
		We prove this result by showing that for any $\delta>0$ we have $\Ca(\lambda)\geq 1-\delta$ provided that $\lambda$ is large enough.
		
		It is easy to show that there exist families of codes which correct an $\eps$ fraction of deletions and insertions with rate approaching $1$ as $\eps\to 0$. In fact, almost optimal explicit constructions of efficiently decodable codes of this type are known~\cite{Hae18,CJLW18}.
		
		Fix $\delta>0$ and let $\eps>0$ be small enough so that there exists a code $C$ of rate $1-\delta$ which corrects an $\eps$ fraction of deletions and insertions. Furthermore, let $\lambda$ be large enough so that Lemma~\ref{lem:poissondecode} holds with this specific choice of $\eps$.
		
		Consider the following coding scheme: To transmit a message through the Poisson-repeat channel with parameter $\lambda$, the sender transmits a codeword $c\in C$. By Lemma~\ref{lem:poissondecode}, the receiver can recover $\hat{c}$ such that $\mathsf{ED}(c,\hat{c})\leq \eps n$ with probability $1-o_n(1)$. Since $C$ corrects an $\eps$ fraction of deletions and insertions, it follows that the receiver can recover $c$ from $\hat{c}$ via unique decoding. This implies that $\Ca(\lambda)\geq 1-\delta$ whenever $\lambda$ is large enough. Since $\delta$ was arbitrary, we have the desired result.
	\end{proof}

\end{appendices}

\end{document}